\documentclass[11pt]{imsart}

\usepackage[margin=1in]{geometry}

\RequirePackage{amsthm,amsmath,amsfonts,amssymb}
\RequirePackage[authoryear]{natbib}

\usepackage[utf8]{inputenc} 
\usepackage[T1]{fontenc}    
\usepackage[ruled,vlined]{algorithm2e}
\usepackage{array}
\usepackage{caption}
\usepackage{subcaption}
\usepackage{hyperref}       
\usepackage{url}            
\usepackage{booktabs}       
\usepackage{amsfonts}       
\usepackage{nicefrac}       
\usepackage{microtype}      
\usepackage{xcolor}         
\usepackage{centernot}      
\usepackage{booktabs}
\usepackage{lipsum}
\usepackage{graphicx}
\usepackage{amsmath,amsfonts,amssymb,amsthm}
\usepackage{bm}             
\usepackage{accents}        
\usepackage{cleveref}       
\usepackage{longtable}      
\usepackage{enumitem}
\usepackage{cuted}
\usepackage{afterpage}

\startlocaldefs
\graphicspath{ {./images/} }

\newcommand{\independent}{\perp\mkern-9.5mu\perp}
\newcommand{\notindependent}{\centernot{\independent}}

\usepackage{soul}

\newtheorem{theorem}{Theorem}
\newtheorem{proposition}{Proposition}
\newtheorem{corollary}{Corollary}

\theoremstyle{definition}
\newtheorem{assumption}{Assumption}
\newtheorem{definition}{Definition}
\newtheorem{remark}{Remark}

\allowdisplaybreaks

\endlocaldefs






\begin{document}

\begin{frontmatter}
\title{Almost exact Mendelian randomization}
\runtitle{Almost exact MR}

\begin{aug}
\author[A]{\fnms{Matthew J}~\snm{Tudball}\ead[label=e1]{matt.tudball@bristol.ac.uk}},
\author[A]{\fnms{George}~\snm{Davey Smith}\ead[label=e2]{KZ.Davey-Smith@bristol.ac.uk}}
\and
\author[B]{\fnms{Qingyuan}~\snm{Zhao}\ead[label=e3]{qyzhao@statslab.cam.ac.uk}}
\address[A]{MRC Integrative Epidemiology Unit, University of Bristol\printead[presep={,\ }]{e1,e2}}

\address[B]{Statistical Laboratory, University of Cambridge\printead[presep={,\ }]{e3}}
\end{aug}

\begin{abstract}
  Mendelian randomization (MR) is a natural experimental design based on the
random transmission of genes from parents to
offspring. However, this inferential basis is typically only implicit
or used as an informal justification. As
parent-offspring data becomes more widely available, we advocate a different
approach to MR that is exactly based on this natural randomization,
thereby formalizing the analogy between MR and randomized
controlled trials. We begin by developing a causal graphical model
for MR which represents several biological processes
  and phenomena, including population structure, gamete
formation, fertilization, genetic linkage, and pleiotropy. This causal
graph is then used to detect biases in population-based MR studies and
identify sufficient
confounder adjustment sets to correct these biases. We then propose a
randomization test in the within-family MR design using
the exogenous randomness in meiosis and fertilization, which is extensively
studied in genetics.
Besides its transparency and conceptual appeals, our approach
also offers some practical advantages, including robustness to
misspecified phenotype models, robustness to weak instruments, and
elimination of bias arising from population structure,
assortative mating, dynastic effects, and horizontal pleiotropy. We
conclude with an analysis of a pair of negative and positive
controls in the Avon Longitudinal Study of Parents and
Children. The accompanying \texttt{R} package can be found at
\url{https://github.com/matt-tudball/almostexactmr}.


\end{abstract}

\begin{keyword}[class=MSC]
  \kwd[Primary ]{62D20}
  \kwd{62G10}
  \kwd{62P10}
\end{keyword}

\begin{keyword}
\kwd{Causal inference}
\kwd{Instrumental variables}
\kwd{Randomization test}
\kwd{Graphical models}
\kwd{Genetics}
\end{keyword}

\end{frontmatter}

\section{Introduction} \label{sec:intro}

\subsection{A brief history of Mendelian
  randomization} \label{sec:intro:historyofmr}

Mendelian randomization (MR) is a natural experimental design that uses the
random allocation of genes from parents to offspring as a foundation
for causal inference \citep{Sanderson2022}. The ideas behind MR can be traced
back to the intertwined beginnings of modern statistics and genetics
about a century ago. In one of the earliest examples, \citet{Wright1920} used
selective inbreeding of guinea pigs to investigate the causes of
colour variation and, in particular, the relative contribution of
heredity and environment. In a later defence of this work,
\citet[p.\ 251]{Wright1923} argued that his analysis of path
coefficients, a precursor to modern causal graphical models, ``rests
on the validity of the premises, i.e., on the evidence for Mendelian
heridity'', and the ``universality'' of Mendelian laws justifies
ascribing a causal interpretation to his findings.

At around the same time, \citet{fisher26_arran_field_exper} was contemplating the randomization principle in experimental design and
used it to justify his analysis of variance (ANOVA) procedure, which was
partly motivated by genetic problems. In fact, the term ``variance'' first
appeared in Fisher's groundbreaking paper that bridged Darwin's
theory of evolution and Mendel's theory of genetic inheritance
\citep{fisher19_correlation}. \citet{Fisher1935} described
randomization as the ``reasoned basis'' (p.\ 12) for inference and
``the physical basis of the validity of the
test'' (p.\ 17). In a later Bateson Lecture to a genetics audience,
Fisher revealed that his
factorial design of experiments derives ``its structure and its
name from the simultaneous inheritance of Mendelian factors''
\citep[p. 330]{Fisher1951}. Indeed,
Fisher viewed randomness in meiosis as uniquely shielding geneticists from the
difficulties of establishing reliably controlled comparisons,
remarking that ``the different genotypes possible from the same mating
have been beautifully randomized by the meiotic process''
\citep[p. 332]{Fisher1951}.

While this source of randomization was originally used for eliciting
genetic causes of phenotypic variation, it was later identified as a
possible avenue for understanding causation among modifiable
phenotypes
themselves. \citet{lower79_n_acety_phenot_risk_urinar_bladd_cancer}
used N-acetylation, a phenotype of known genetic regulation and a
component of detoxification pathways for arylamine, to strengthen the
inference that arylamine exposure causes bladder
cancer. \citet{Katan2004a} proposed to address
reverse causation in the hypothesized effect of low serum cholesterol
on cancer risk via polymorphisms in the apolipoprotein E
(\textit{APOE}) gene. He argued that, if low cholesterol was indeed a
risk factor for cancer, we would expect to see higher rates of cancer
in individuals with the low cholesterol allele. Another pioneering
application of this reasoning can be found in a proposed study of the
effectiveness of bone marrow transplantation relative to chemotherapy \citep{Gray1991}, for example, in the treatment of acute myeloid leukaemia \citep{Wheatley2004}. Patients with a compatible
donor sibling were more likely to receive transplantation than
patients without. Since compatibility is a consequence of random
genetic assortment, comparing survival outcomes between the two groups
can be viewed as akin to an intention-to-treat analysis in a
randomized controlled trial. This paper appears to be the first to
use the term ``Mendelian randomization''. An earlier review of this
prehistory of MR can be found in \citet{smith07_capit_mendel_random_to_asses_effec_treat}.

It would be a dozen more years before an argument for the broader applicability of MR was put forward by \citet{DaveySmith2003}. At the time, a number of criticisms had been levelled against the state of observational epidemiology and its methods of inquiry \citep{Feinstein1988, Taubes1995, DaveySmith2001}. Several high profile results failed to be corroborated by subsequent randomized controlled trials, such as the role of beta-carotene consumption in lowering risk of cardiovascular
disease, with unobserved confounding identified as the likely culprit
\citep[p.\ 329-330]{DaveySmith2001}. This string of failures motivated
the development of a more rigorous observational design with an
explicit source of unconfounded randomization in the exposures of
interest; see \citet{DaveySmith2020} for more discussion on the
motivation for MR.

Originally, \citet{DaveySmith2003} recognized that MR is best
justified in a within-family design with parent-offspring trios. MR is
commonly described as being analogous to a randomized controlled trial
with non-compliance. This analogy is based on exact randomization in
the transmission of alleles from parents to offspring which can be
viewed as a form of treatment assignment. From its inception, it was
recognized that data limitations would largely restrict MR to be
performed in samples of unrelated individuals, which
\citet{DaveySmith2003} termed ``approximate MR''. Such approximate MR
has indeed become the norm, seen in the majority of applied studies
and methodological development to date. However, MR in unrelated individuals
lacks the explicit source of randomization offered by the
within-family design, thereby suffering potential biases from dynastic
effects, population structure and assortative mating
\citep{Davies2019, Brumpton2020, Howe2022}.

In addition to random assignment of exposure-modifying genetic
variants, another crucial assumption for MR, known as the exclusion
restriction, is that the causal effects
of these variants on the outcome must be fully mediated by the
exposure. When this assumption holds, MR can be
framed as a special case of instrumental variable analysis
\citep{Thomas2004, Didelez2007a}. Within this framework, there has
been considerable recent methodological work to replace the exclusion
restriction with more plausible assumptions, typically by placing
structure on the distribution of pleiotropic effects across
individual genetic variants
\citep{kang16_instr_variab_estim_with_some, Bowden2015,
  Zhao2020}; see also
\citet{kolesar15_ident_infer_with_many_inval_instr} for a similar idea
in the econometrics literature.

\subsection{Towards an almost exact inference for MR} \label{sec:intro:ourtest}

As parent-offspring trio data become more widely available, it is
increasingly feasible to perform MR within families, as originally
proposed by \citet{DaveySmith2003}. There has been
some recent methodological development for within-family
designs \citep{Davies2019, Brumpton2020}. Thus far this has consisted of
extensions of traditional MR techniques in which structural models for
the gene-exposure and gene-outcome relationships are proposed and
samples are assumed to be drawn according to these models from some
large population. In particular, \citet{Brumpton2020} propose a
linear structural model with parental genotype fixed effects. Their
inference is based on this model and so the role of meiotic
randomization is only implicit.

However, one of the unique advantages of MR as a natural experimental design
is that it has an explicit inferential basis, namely the exogenous
randomness in meiosis and fertilization, which has been thoroughly
studied and modelled in genetics since at least
\citet{Haldane1919}. Haldane developed a simple model for
recombination during meiosis that has demonstrated good performance on multiple
pedigrees across many species. The connection between this meiosis
model and causal inference in parent-offspring trio studies was
recently described in the context of locating causal genetic
variants \citep{Bates2020a} and was implicit in earlier pedigree-based
methods, such as the genetic linkage analysis in
\citet{morton1955sequential} and the transmission disequilibrium test
in \citet{Spielman1993}. \citet{Lauritzen2003}
attempted to represent meiosis using graphical models; however, they
focused on computatational considerations and did not explore the
potential of these models for causal inference.

The idea of the significance test (or hypothesis test, although some
authors distinguish the use of these two terms) dates back to Fisher's
original proposal for randomized experiments and is well illustrated
in his famous `lady tasting tea' example
\citep{Fisher1935}. \citet{pitman37_signif_tests_which_may_be}
appears to be the first to fully embrace the idea of randomization
tests. This mode of reasoning is
usually referred to as randomization inference or design-based
inference to contrast with model-based inference. With the aid of the
potential outcome framework \citep{Neyman1923,Rubin1974}, we can
construct an exact randomization test for the sharp null hypothesis by
conditioning on all the potential outcomes
\citep{Rubin1980,Rosenbaum1983}. Randomization tests are widely
used in a variety of settings, including genetics
\citep{Spielman1993,Bates2020a}, clinical trials
\citep{Rosenberger2019}, program evaluation
\citep{Heckman2019} and instrumental variable analysis
\citep{Rosenbaum2004, Kang2018}.

\subsection{Our contributions}
\label{sec:our-contribution}

In this article, we propose a statistical framework that enables
researchers to use meiosis models as the ``reasoned basis'' for
inference in MR. We propose a randomization test that is \emph{almost
  exact} in the sense that the test would have exactly the nominal
size if the model for meiosis and fertilization were perfect. As
detailed below, our methodological developement in
\Cref{sec:exacttest} combines several important ideas in the
literature.

Our first contribution is a theoretical
description of MR and its assumptions in
\Cref{sec:setup} via the
language of causal directed acyclic graphs (DAGs)
\citep{spirtes2000causation,Pearl2009}. These graphical tools allow us to visualize and
dissect the assumptions
imposed on an MR study. In particular, we
show how various biological and social processes, including population
stratification, gamete formation, fertilization, genetic linkage,
assortative mating, dynastic effects, and pleiotropy, can be
represented using a DAG and how they can introduce bias in MR
analyses. Furthermore, by using single world intervention graphs
(SWIGs) \citep{Richardson2013b}, we identify sufficient confounder
adjustment sets to eliminate these sources of bias in
\Cref{sec:identification}. Our results
further provide theoretical insights into a fundamental trade-off
between statistical power and eliminating pleiotropy-induced bias.

For statistical inference, we propose in \Cref{subsec:hypothesis-test}
a randomization test by
connecting two existing literatures. The first literature concerns
randomization inference for instrumental variable analyses, which
usually assumes that the instrumental variables are randomized
according to a simple design (such as random sampling of a binary
instrument without replacement) \citep{Rosenbaum2004,
  Kang2018}. However, in MR, offspring genotypes are very
high-dimensional and are randomized based on the parental
haplotypes. The second literature attempts to identify the approximate
location of (``map'') causal genetic variants by modelling the meiotic
process \citep{morton1955sequential,Spielman1993,Bates2020a}. 
In
\Cref{subsec:test-statistic,subsec:hmm-simplification,subsec:multiple-ivs},
we consider some practical issues with the randomization tests. In
particular, we show
how the hidden Markov
model for meiosis and fertilization implied by \citet{Haldane1919} can
greatly simplify the sufficient adjustment
sets and the computation of our randomization test.

In addition to the considerable conceptual advantages, our almost
exact MR approach has several practical advantages too. First, unlike
model-based approaches for within-family MR \citep{Brumpton2020},
our approach does not rely on a correctly specified phenotype
model. Nonetheless, the randomization test can take advantage of a
more accurate phenotype model to increase its
power. Second, Haldane's hidden Markov model implies a propensity
score for each genetic instrument given a sufficient adjustment set
\citep{Rosenbaum1983}. This can be used as a ``clever covariate''
\citep{Rose2008} to build powerful test statistics with attractive
robustness properties. Third,
since the randomization test is exact, it is robust to arbitrarily
weak instruments. For an ``irrelevant'' instrument which induces no
variation in the exposure, the test will simply have no
power. Finally, by taking advantage of the DAG representation and
using a sufficient confounder adjustment set, our method is also
provably robust to biases arising from population structure (including
multi-ancestry samples), assortative mating, dynastic effects and
pleiotropy by linkage.

In \Cref{sec:simulation,sec:applied-example}, we demonstrate the
practicality of the almost exact approach to MR with
a simulation study and a real data example from the Avon Longitudinal
Study of Parents and Children (ALSPAC). The simulation study confirms
that the randomization test is exact under the null and
explores the power of the test in a number of scenarios. The applied
examples consists of a negative control and a positive control. The
negative control is the effect of child's body mass index (BMI) at age
7 on mother's BMI pre-pregnancy. Although a causal effect is
temporally impossible, the existence of confounders (a.k.a.\ backdoor
paths) may lead to false rejections of the null. The positive control is
the effect of child's BMI on itself plus some noise. We compare
our results with the results from a ``standard'' MR analysis
that does not condition on parental or offspring haplotypes. We
conclude with some further discussion in \Cref{sec:discussion}.

Throughout the paper, we use $i$ to index the parent-offspring trio (or
just the offspring) and $j$ to indicate a genomic locus. Bold font is
used to represent vectors and script font is used for sets.


\section{Background} \label{sec:background}

\subsection{Causal inference preliminaries} \label{sec:causal-inference}

We will express our model and assumptions about almost exact MR using
causal diagrams, then demonstrate that a randomization test for
instrumental variables is a natural vehicle for inference in
within-family MR. As such, a good grasp of these concepts is required
to understand the remainder of the article. This section lays out some
standard notation in causal inference. A lengthier introduction
to the causal inference concepts used in this article---including
causal graphical models, single world intervention graphs,
randomization inference, and instrumental variables---can be found in
\Cref{sec:intro-to-causal-inference}.

Suppose we have a collection of $N$ individuals indexed by $i = 1, 2,
\ldots, N$ and, among these individuals, we are interested in the
effect of an exposure $D_i$ on an outcome $Y_i$. For example, the
exposure could be the level of alcohol consumption over some period of
time and the outcome could be the incidence of
cardiovascular disease. Individual $i$'s \emph{potential} (or \emph{counterfactual}) \emph{outcomes} corresponding to exposure level
$D_i = d$ are given by $Y_i(d)$. We make the \emph{consistency}
assumption \citep{Hernan2020} which states that the observed outcome
corresponds to the potential outcome at the realized exposure level
$Y_i = Y_i(D_i)$.

Note that, in denoting the potential outcomes as $Y_i(d)$, we have
implicitly made the so-called \emph{stable unit treatment value assignment} (SUTVA) assumption \citep{Rubin1980, Imbens2015}. That is,
we have assumed that there is \emph{no interference} in the sense that
the potential outcomes of each individual are unaffected by the
exposures of other individuals. We have also assumed that there are no
hidden versions of the same exposure; this could be violated, for
example, if the effect of alcohol consumption on cardiovascular
disease has a dose-response relationship but the exposure $D_i$ is
only a binary indicator of alcohol consumption.

Potential outcomes may also be defined from a nonparametric structural
equation model associated with a causal diagram using recursive
substitution \citep{Pearl2009}. In such diagrams, vertices are used to
represent random variables and directed edges are used to represent
direct causal influences. The graphical and potential outcomes
approaches to causal inference can be nicely unified via the
single world intervention graphs \citep{Richardson2013b}. A brief
review of this can be found in the Appendix.

\subsection{Genetic preliminaries} \label{sec:genetic-preliminaries}

Before proceeding to present our model of within-family MR, it is also
instructive to provide a basic overview of some relevant concepts in
human genetics, with a focus on processes in genetic inheritance
such as \emph{meiosis} and \emph{fertilization}. For a
thorough exposition on statistical models for pedigree data, see
\citet{Thompson2000}.

Human somatic cells consist of 23 pairs of chromosomes, with one in
each pair inherited from the mother and the other from the
father. To simplify the discussion we will only consider autosomal
(non-sexual) chromosomes. Each chromosome is a doubled strand of
helical DNA composed of complementary nucleotide base pairs. A base
pair which exhibits
population-level variation in its nucleotides is called a \emph{single
  nucleotide polymorphism} (SNP). DNA sequences are typically
characterized by detectable variant forms induced by different
combinations of SNPs. These variant forms are called
\emph{alleles}. In this article, we will only consider variants with
two alleles. A set of alleles on one chromosome inherited together
from the same parent is called a \emph{haplotype} and the unordered
pair of haplotypes at the same locus is called a \emph{genotype}.

Meiosis is a type of cell division that results in reproductive cells
(a.k.a.\ gametes) containing one copy of each chromosome. During this
process, homologous chromosomes line up and exchange segments of DNA
between themselves in a biochemical process called
\emph{crossover}. The recombined chromosomes are then further divided
and separated into gametes. Since recombinations are infrequent
(roughly one to four per chromosome),
SNPs located nearby on the same parental chromosome are more likely to be
transmitted together, resulting in \emph{genetic
  linkage}. Fertilization is the process by which gametes in the
father (sperm cells) and mother (egg cells) join together to form a
zygote, which will then normally develop into an embryo.

We will mainly be concerned with genetic trio studies, in which we
observe the haplotypes of the mother, father and their child at $p$
loci. Let $\mathcal{J} = \{ 1, 2, \ldots , p \}$ be the set of SNP
indices. In our discussion below we will assume that the SNPs are on a
single chromosome, as different chromosomes are usually modelled
independently. We will denote the haplotypes as follows:
\begin{description}
\item offspring's haplotypes: $\bm{Z}^m = (Z_1^m, \ldots, Z_p^m)$ and $\bm{Z}^f = (Z_1^f, \ldots, Z_p^f)$,
\item mother's haplotypes: $\bm{M}^m = (M_1^m, \ldots, M_p^m)$ and
  $\bm{M}^f = (M_1^f, \ldots, M_p^f)$, and
\item father's haplotypes: $\bm{F}^m = (F_1^m, \ldots, F_p^m)$ and
  $\bm{F}^f = (F_1^f, \ldots, F_p^f)$,
\end{description}
 where the superscript $m$ (or $f$) indicates a maternally (or
 paternally) inherited haplotype. We only consider SNPs with two alleles,
 so each of the six haplotype vectors above is in $\{0,1\}^p$.
 Furthermore, let $\bm{M}_j^{mf} = (M_j^m, M_j^f)$ denote the mother's
 haplotypes at locus $j$ and similarly
 for $\bm{F}_j^{mf}$ and $\bm{Z}_j^{mf}$. The offspring's genotype
 at locus $j \in \mathcal{J}$ is given by $Z_j = Z_j^m + Z_j^f$ and let
 $\bm{Z} = \bm{Z}^m + \bm{Z}^f \in \{ 0, 1, 2 \}^p$ denote the vector
 of offspring genotypes.

 \begin{figure}[t]
   \begin{center}
     \includegraphics[width=0.55\columnwidth]{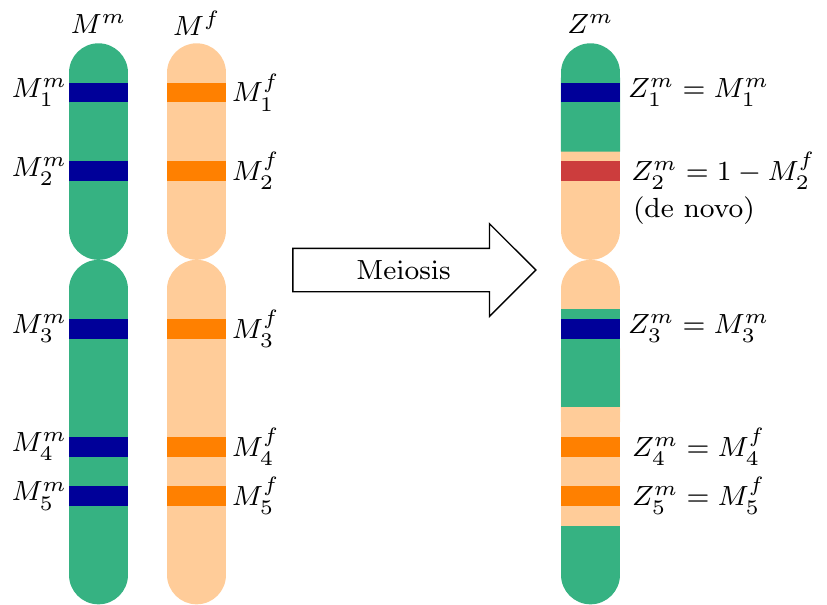}
     \caption{Illustration of the meiotic process with five loci on a chromosome.}
     \label{fig:recombination}
   \end{center}
 \end{figure}

 \Cref{fig:recombination} illustrates how an offspring's
 maternally-inherited haplotype $\bm{Z}^m$ at five loci on a
 chromosome are related to the mother's haplotypes
 $\bm{M}^m$ and $\bm{M}^f$. In a gamete produced by meiosis, the
 allele at locus $j \in \mathcal{J}$ is inherited from either the
 mother's maternal haplotype or father haplotype (excluding
 mutations). This can be formalized as an ancestry  indicator, $U_j^m
 \in \{m,f\}$. The classical meiosis model of \citet{Haldane1919}
 assumes that  $\bm U^m
 = (U_1^m,\dotsc,U_p^m)$ follows a (discretized) homogeneous Poisson
 process. Haldane's model is described in
 \Cref{sec:randomization-distribution} in detail and can simplify the
 randomization test considerably
 (\Cref{subsec:hmm-simplification}). Nonetheless, our ``almost exact''
 approach to MR is modular and does not rely on a
 specific meiosis model. In fact, it is theoretically straightforward
 to incorporate more sophisticated meiosis models that allow for
 ``interference'' between the crossovers \citep{Otto2019}. As the
 meiosis model become more accurate, our test will become closer to
 exact randomization inference.

 The description in the last paragraph does not take genetic mutation
 into account. Many meiosis models, such as the one in
 \citet{Haldane1919}, assume that there is a small probability of
 independent mutations:
 \begin{assumption} \label{assum:de-novo-distribution}
   Given mother's haplotypes $\bm M_j^{mf}$, the ancestry indicator
   $U_j^m$, and that fertilization occurs (this is represented as
   $S=1$ in \Cref{sec:exacttest}),  
   \[
     Z_j^m   = \begin{cases}
       M_j^{(U_j^m)}, & \text{with probability $1 - \epsilon$,} \\
       1-M_j^{(U_j^m)}, & \text{with probability $\epsilon$.}
     \end{cases}
   \]
   The same model holds for the paternally-inherited haplotypes.
 \end{assumption}
 The rate of \emph{de novo} mutation $\epsilon$ is about $10^{-8}$ in
 humans \citep{Acuna-Hidalgo2016}. When only one generation is
 considered (as in a genetic trip study), it often
 suffices to treat $\epsilon = 0$ (i.e.\ no mutation) for practical
 purposes, unless it is
 desirable to obtain the exact randomization distribution under a
 recombination model.

The above meiosis model assumes no \emph{transmission
  ratio distortion}. Transmission ratio distortion occurs when one of
the two parental alleles is passed on to the (surviving) offspring at
more or less than the expected Mendelian rate of 50\%
\citep{Davies2019}. Transmission ratio distortion falls into two
categories: segregation distortion, when processes during meiosis or
fertilization select certain alleles more frequently than others, and
viability selection, when the viability of zygotes themselves depend
on the offspring genotype. We sidestep this discussion for now and
return to it in \Cref{sec:discussion}.

 \section{Almost exact Mendelian randomization} \label{sec:exacttest}

Returning to the alcohol and cardiovascular disease example in
\Cref{sec:causal-inference}, observational studies suggest that
moderate alcohol consumption confers reduced risk relative to
abstinence or heavy consumption \citep{Millwood2019}. However, this
could be a result of systematic differences among people with
different drinking patterns (confounding) rather than a causally
protective effect of moderate drinking. For this reason, Mendelian
randomization has become a popular study design to investigate the
long-term health effects of alcohol drinking
\citep{chen08_alcoh_intak_blood_press}.

The \emph{ALDH2} gene regulates acetaldehyde metabolism and is
often used as an instrumental variable for alcohol consumption. In
East Asian populations, an allele of
\emph{ALDH2} produces a protein that is inactive in metabolising
acetaldehyde, causing flushing and discomfort while drinking and
thereby reducing consumption. Thus, we might like to use the random
allocation of variant copies of \emph{ALDH2} as the basis of causal
inference. To this end, we need to carefully clarify the conditions under
which this inference would be valid.

 \subsection{A causal model for Mendelian inheritance}
 \label{sec:setup}

Next, we construct a general grahical model to describe the
process of Mendelian inheritance and genotype-phenotype
relationships. This causal model allows us to identify sources of bias
in within-family MR and construct sufficient adjustment sets to
control for them. Under this causal model, the central idea behind
almost exact MR is to base statistical inference precisely on
randomness in genetic inheritance via a model for meiosis and
fertilization. 

 \begin{figure*}[htp]
   \begin{center}
     \includegraphics[width=0.9\textwidth]{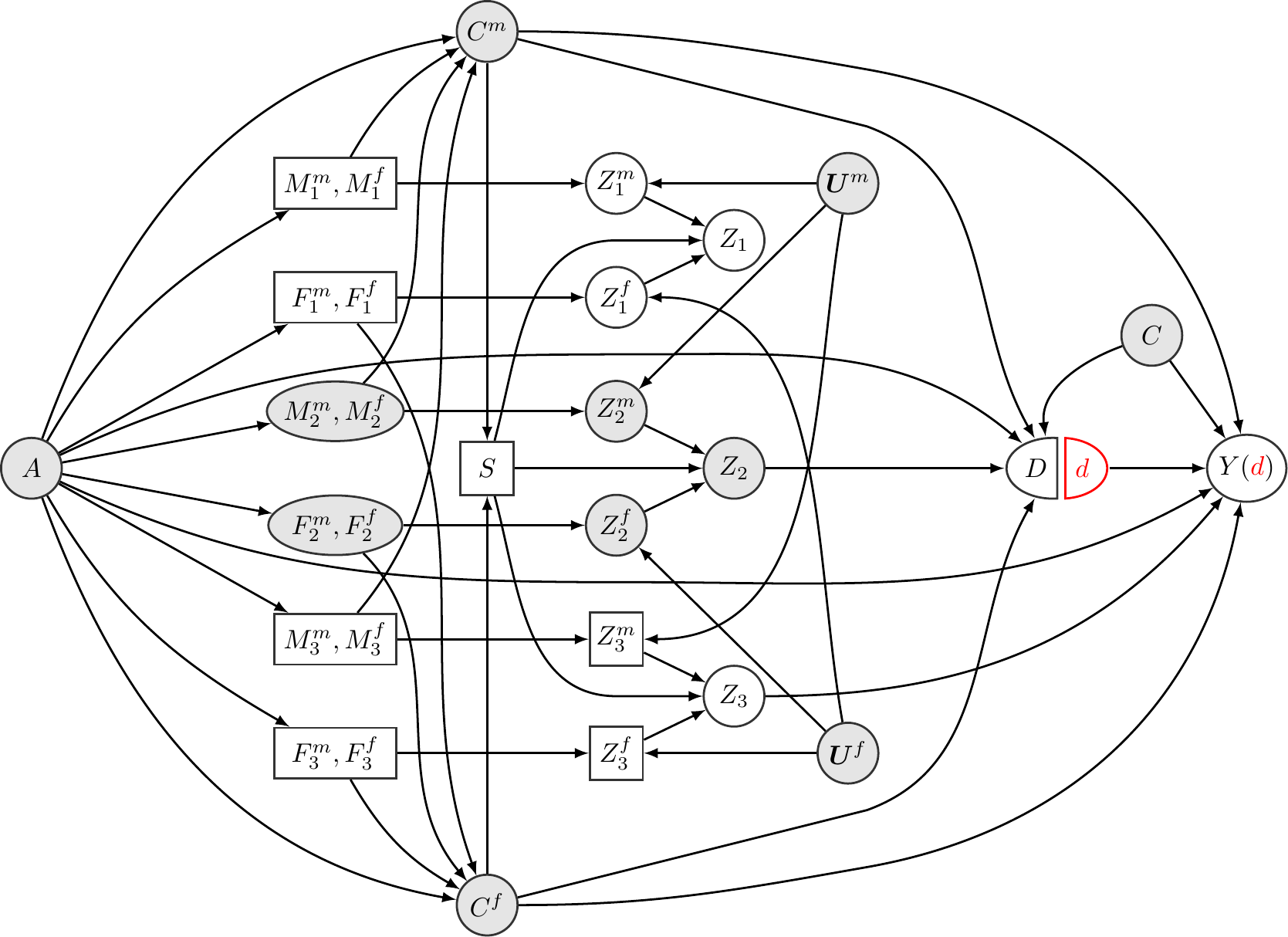}
     \caption{The single world intervention graph for a working
       example of a chromosome with $p = 3$ variants. Transparent
       nodes are observed and grey nodes are unobserved. Square nodes
       are the confounders being conditioned on in
       \Cref{prop:example-adjustment-set}. $A$ is ancestry; $\bm M^f =
       (M_1^f,M_2^f,M_3^f)$ is the mother's haplotype inherited from
       her father; $\bm M^m,\bm F^m$, and $\bm F^f$ are defined
       similarly; $C^m$ and $C^f$ are generic phenotypes of the mother
       and father; $S$ is an indicator of mating; $\bm Z^m =
       (Z_1^m,Z_2^m,Z_3^m)$ is the offspring's maternal haplotype and
       $\bm U^m$ is a meiosis indicator; $\bm Z^f$ and $\bm U^f$ are
       defined simlarly; $\bm Z = (Z_1,Z_2,Z_3)$ is the offspring's
       genotype; $D$ is the exposure of interest; $Y(d)$ is the
       potential outcome of $Y$ under the intervention that sets $D$
       to $d$; $C$ is an environmental confounder between $D$ and
       $Y$.}
     \label{fig:fammr_dag}
   \end{center}
 \end{figure*}

 \Cref{fig:fammr_dag} shows a working example of our causal model
 on a hypothetical chromosome with $p=3$ variants. The directed
 acyclic graph
 is structured in roughly chronological order from left to right,
 where $A$ describes the population structure, $S$ is an indicator for
 mating, and $C$ is any environmental confounder between the exposure
 $D$ and outcome $Y$. 


 At first glance, \Cref{fig:fammr_dag} appears to be quite complicated
 but, by the modularity of graphical models, it can be
 decomposed into a collection of much simpler subgraphs that describe different
 biological processes (\Cref{fig:subgraphs}). By
 definition, a joint distribution \emph{factorizes} according to the
 DAG in \Cref{fig:fammr_dag} if its density function can be decomposed
 as described in \Cref{tab:factorization}.


\begin{table*}[h]
  \caption{Factorization of the joint density of all variables in
    \Cref{fig:fammr_dag} ($p$ is
    used as a generic symbol for density functions).}
  \label{tab:factorization}
  \centering
  \begin{tabular}{lll}
    \toprule
    Terms & Interpretation & More detail \\
    \midrule
    $p(A) p\big(\bm U^m\big) p\big(\bm U^f\big) p(C)$ & Exogenous variables & \\[0.3em]
    $p\big(\bm M^m, \bm M^f \mid A\big) p\big(\bm F^m, \bm F^f \mid A\big)$ & Population
                                                     stratification &
                                                                      \Cref{sec:parental-genotypes}
    \\[0.3em]
    $p\big(C^m \mid A, \bm M^m, \bm M^f\big) p\big(C^f \mid A, \bm F^m, \bm F^f\big)$
          & Parental phenotypes & \Cref{sec:parental-phenotypes}
    \\[0.3em]
$p\big(\bm Z^m \mid \bm M^m, \bm M^f, \bm U^m\big) p\big(\bm Z^f \mid \bm F^m, \bm
    F^f, \bm U^f\big)$ & Meiosis & \Cref{sec:offspring-genotypes} \\[0.3em]
    $p\big(S \mid C^m, C^f\big)$  & Assortative mating &
                                                 \Cref{sec:offspring-genotypes}
    \\[0.3em]
    $p\big(\bm Z \mid \bm Z^m, \bm Z^f, S\big)$ & Fertilization &
                                           \Cref{sec:offspring-genotypes}
    \\[0.3em]
    $p\big(D \mid  A, \bm Z, C^m, C^f, C\big) p\big(Y(d) \mid A, \bm Z, C^m, C^f,
    C\big)$
    & Offspring phenotypes, &
                                                           \Cref{sec:offspring-phenotypes}
    \\
    & dynastic effect, confounding & \\
    \bottomrule
  \end{tabular}
\end{table*}

Next, we describe each term in \Cref{tab:factorization} and what this
factorization implies about our assumptions on the biological
processes. To simplify the discussion, we assume all DAGs in this
article are faithful, so conditional independence between random
variables is equivalent to d-separation in the DAG.

 \begin{figure}[htp]
\centering
\begin{subfigure}[b]{0.35\columnwidth}
 \centering
 \includegraphics[width=0.7\textwidth]{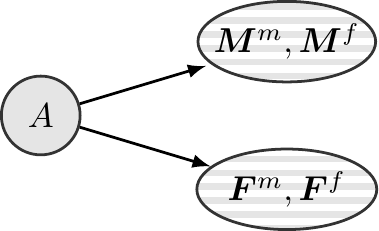}
 \caption{Population stratification (\Cref{sec:parental-genotypes}).}
 \label{fig:parental-genotypes}
\end{subfigure}
\quad
\begin{subfigure}[b]{0.6\columnwidth}
     \centering
     \includegraphics[width=\textwidth]{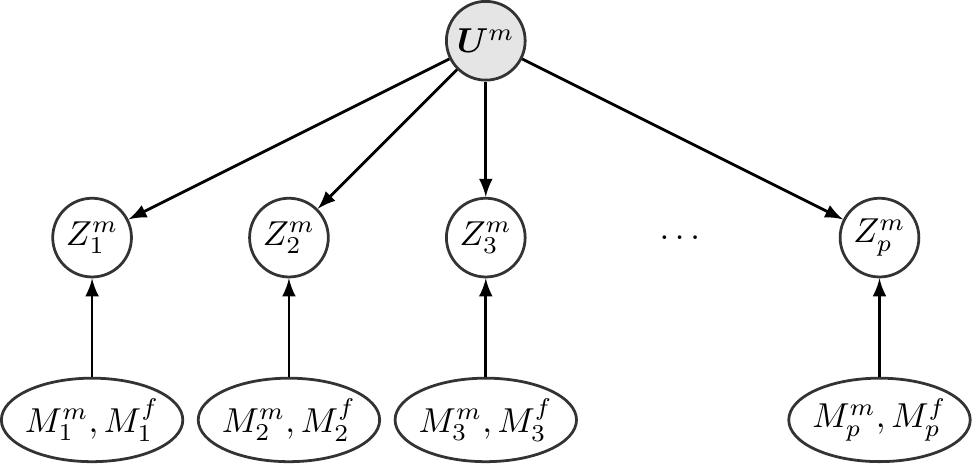}
     \caption{Recombination of mother's haplotypes (\Cref{sec:genetic-preliminaries}).}
     \label{fig:meiosis_dag}
    \end{subfigure}

    \par\bigskip

    \begin{subfigure}[b]{0.45\columnwidth}
     \centering
     \includegraphics[width=0.7\textwidth]{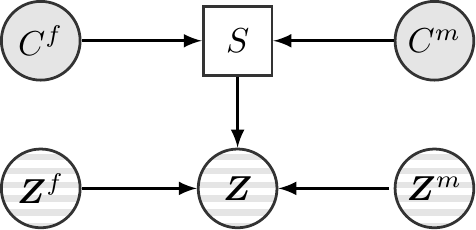}
     \caption{Assortative mating (\Cref{sec:offspring-genotypes}).}
     \label{fig:assortative-mating}
   \end{subfigure}
   \quad
   \begin{subfigure}[b]{0.45\columnwidth}
     \centering
     \includegraphics[width=0.7\textwidth]{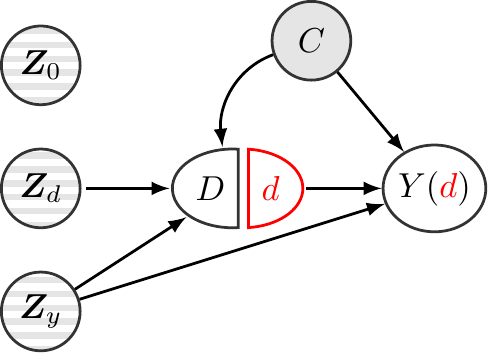}
     \caption{Offspring phenotypes. (\Cref{sec:offspring-phenotypes}).}
     \label{fig:offspring-phenotypes}
   \end{subfigure}

   \par\bigskip

    \begin{subfigure}[b]{0.5\columnwidth}
     \centering
     \includegraphics[width=0.9\textwidth]{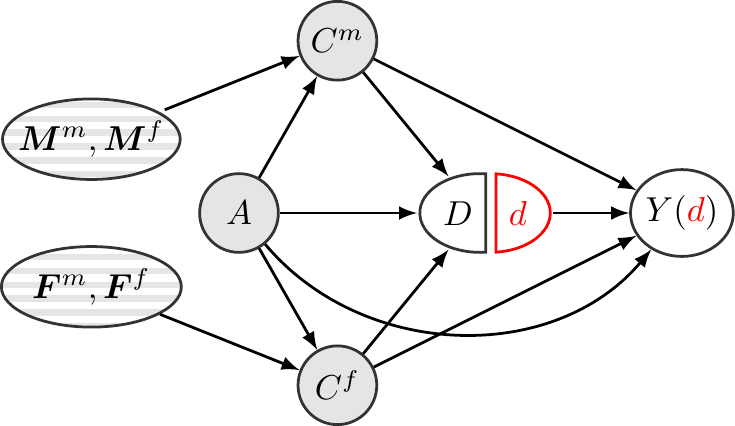}

     \caption{Dynastic effects (\Cref{sec:parental-phenotypes}).}
     \label{fig:dynasticeffects}
    \end{subfigure}%
    \caption{The constituent subgraphs of our within-family Mendelian
     randomization model. White nodes represent observed variables;
     grey nodes represent unobserved variables; and striped nodes
     represent variables for which some elements may be unobserved.}
   \label{fig:subgraphs}

 \end{figure}

\subsubsection{Parental genotypes} \label{sec:parental-genotypes}

We assume that parental genotypes originate from some arbitrary, latent
population structure. Population stratification is a phenomenon characterized by systematic differences in the distribution of alleles among subgroups of a population. These disparities
typically emerge from social and genetic mechanisms including
non-random mating, migration patterns and `founder effects'
\citep{Cardon2003} and can often be detected by principal
component analysis \citep{patterson06_popul_struc_eigen}. Population stratification
can introduce spurious associations between genetic variants and
traits \citep{Lander1994}.

We represent population structure via a latent node $A$ in
\Cref{fig:fammr_dag,fig:parental-genotypes}. The arrows from $A$ to
$\bm{M}^m,
\bm{M}^f$ and $\bm{F}^m, \bm{F}^f$ indicate that the distribution of
parental haplotypes depends on the latent population structure:
\[
 A \notindependent (\bm M^m, \bm M^f, \bm F^m, \bm F^f).
\]

The node $A$ may also capture any linkage disequilibrium in the
parental haplotypes. That is, because the parental haplotypes are
determined by the same process as the grandparental haplotypes and
so on, recombination introduces dependence among nearby genetic
variants (see \Cref{sec:offspring-genotypes} below for more
discussion). The precise
distribution of $A$ and the conditional
distribution of the parental haplotypes given $A$ are not important
below, because an appropriate
subset of the parental haplotypes will be conditioned on and any paths
that involve edges from $A$ to $\bm M^m$, $\bm M^f$, $\bm F^m$, and
$\bm F^f$ will be blocked.

\subsubsection{Parental phenotypes} \label{sec:parental-phenotypes}

We impose no assumptions on the nature and the distribution of the parental
phenotypes $C^m$ and $C^f$. They can depend arbitrarily on the
parental haplotypes $\bm{M}^m, \bm{M}^f, \bm{F}^m, \bm{F}^f$ and the
population structure $A$:
\[
 C^m \notindependent (A, \bm{M}^m, \bm{M}^f),~
 C^f \notindependent (A, \bm{F}^m, \bm{F}^f).
\]%
It is not necessary to model such dependence because the almost exact
approach to MR conditions on appropriate parental haplotypes.


\subsubsection{Offspring genotypes} \label{sec:offspring-genotypes}

There are two biological processes involved in the genesis of the
offspring's genotypes: meiosis (gamete formation) and
fertilization. The meiotic process has been reviewed in
\Cref{sec:genetic-preliminaries}, and the key
\Cref{assum:de-novo-distribution} can be represented by the causal
diagram in \Cref{fig:meiosis_dag} (for just the mother). A crucial
assumption underlying almost exact MR is the exogeneity of the meiosis
indicators $\bm U^m$ and $\bm U^f$. This is reflected in
\Cref{fig:fammr_dag,fig:meiosis_dag}, as $\bm
U^m$ and $\bm U^f$ have no parents and their only children are the
offspring's haplotypes. Formally, we assume:
\begin{assumption} \label{assum:meiosis-indicator-independence}
The meiosis indicators are independent of parental haplotypes and
phenotypes and any other confounders:
\[
 (\bm{U}^m, \bm{U}^f) \independent (A, C^m, C^f, C, \bm{M}^m, \bm{M}^f, \bm{F}^m, \bm{F}^f).
\]%
\end{assumption}

Many stochastic processes have been proposed to model the distribution
of the ancestry indicator $\bm U^m$;  see \citet{Otto2019} for a recent
review. Due to the dependence in $\bm U^m$, nearby alleles on the same
chromosome tend to be inherited together. This phenomenon is known as
\emph{genetic linkage}. In \Cref{sec:genetic-preliminaries}, we have
described the classical model of \citet{Haldane1919} which assumes
$\bm U^m$ follows a Poisson process, which was used by
\citet{Bates2020a} to map causal variants. We will see in
\Cref{subsec:hmm-simplification} that the Markov properties of a
Poisson process greatly simplify randomization inference.

Another mechanism that needs to be modeled is fertilization. In
Mendelian inheritance, it is assumed that the potential gametes
(sperms and eggs) come together at random. However, mating may not be
a random event. \emph{Assortative mating} refers to the phenomenon
where individuals are more likely to mate if they have complementary
phenotypes. For example, there is evidence in UK Biobank that
variant forms of the \emph{ADH1B} gene related to alcohol
consumption are more likely to be shared among spouses than
non-spouses \citep{Howe2019}. This suggests assortative mating on
drinking behaviour and may introduce bias to MR studies on alcohol
consumption \citep{Hartwig2018}.

The subgraph describing assortative mating is shown in
\Cref{fig:assortative-mating}, where the mating
indicator $S \in \{0, 1\}$ is dependent on the parental
phenotypes $C^m$ and $C^f$.
Any MR study necessarily conditions on $S = 1$, otherwise the
offspring would not exist and the trio data would not be
available. This is formalized in
\Cref{fig:assortative-mating} by the arrows from $S$ to $\bm{Z}$. In
particular, we may define the offspring's genotype $\bm Z$ as
\begin{align} \label{eqn:offspring-genotype-undefined}
\bm{Z} = \begin{cases}
 \bm{Z}^m + \bm{Z}^f, & \text{if $S = 1$},  \\
 \text{Undefined}, & \text{if $S = 0$.}
\end{cases}
\end{align}
Notice that the above definition recognizes the fact that the gametes
$\bm{Z}^m$ and $\bm{Z}^f$ are produced regardless of whether
fertilization actually takes place.

Finally, the assumption of no transmission
ratio distortion is formalized by the absence of arrows from
$\bm{Z}^m$ and $\bm{Z}^f$ to $S$. This is not necessarily a benign
assumption, given empirical evidence that gametes may pair up
non-randomly \citep{Nadeau2017}.

\begin{assumption} \label{assum:no-trd}
There is no transmission ratio distortion in the sense that
\[
S \independent (\bm{Z}^m, \bm Z^f) \mid (\bm{M}^{mf}, \bm{F}^{mf}).
\]
\end{assumption}

\subsubsection{Offspring phenotypes} \label{sec:offspring-phenotypes}

Finally, we describe assumptions on the offspring phenotypes. We
are interested in estimating the causal effect of an offspring
phenotype $D \in \mathcal{D} \subseteq
\mathbb{R}$ on another offspring phenotype $Y  \in
\mathcal{Y} \subseteq \mathbb{R}$. We refer to $D$ as the exposure
variable and $Y$ as the outcome variable. These phenotypes are
determined by the offspring genotypes and environmental factors
(including parental traits). For a particular realization of the genotypes $\bm z$, we denote the counterfactual exposure as $D(\bm{z})$. Furthermore, under an additional intervention that sets
$D$ to $d$, we denote the counterfactual outcome as $Y(\bm{z},
d)$. These potential outcomes are related to the observed data tuple
$(\bm{Z}, D, Y)$ by
\[\begin{array}{ll}%
 D = D(\bm{Z}),~ 
 Y = Y(\bm{Z}, D) = Y(\bm{Z},D(\bm{Z})), 
\end{array}\]%
which is a simple extension of the consistency assumption
\eqref{eq:consistency} before.

We are interested in making inference about the counterfactuals $Y(d)
= Y(\bm{Z},d)$ when the exposure is set to $d \in \mathcal{D}$. As the
exposure $D$ typically varies according to the population structure,
parental phenotypes and other environmental factors, it is not
randomized in the sense that
\[
 Y(d) \notindependent D ~ \text{for some or all $d \in
   \mathcal{D}$}.
\]
For example, if $D$ is alcohol consumption and $Y$ is cardiovascular
disease, there may exist offspring confounders such as diet or smoking
habits which are common causes of both $D$ and $Y$. The exact nature
of the confunders is not very important for MR as it tries to use
unconfounded randomness (in $\bm U^m$ and $\bm U^f$) to make causal
inference.

It will be helpful to categorize the genetic variants based on whether
they have direct causal effects on $D$ and/or $Y$.

\begin{assumption} \label{assum:snp-subsets}
The set $\mathcal{J} = \{1, \ldots, p\}$ of genetic variants can be
paritioned as $\mathcal{J} = \mathcal{J}_y \cup \mathcal{J}_d \cup
\mathcal{J}_0$, where
\begin{itemize}
\item $\mathcal{J}_y$ includes all \emph{pleiotropic} variants with a direct causal
 effect on $Y$ not mediated by $D$ (some of which may have a causal
 effect on $D$ as well).
\item $\mathcal{J}_d$ includes all causal variants for $D$ with no direct effect on $Y$.
\item $\mathcal{J}_0 = \mathcal{J} \setminus (\mathcal{J}_y \cup
  \mathcal{J}_d)$ includes all other variants.
\end{itemize}
\end{assumption}

In our working example in \Cref{fig:fammr_dag}, $\mathcal{J}_y =
\{3\}$, $\mathcal{J}_d = \{2\}$, and $\mathcal{J}_0 = \{1\}$. If the
exposure $D$ indeed has a causal effect on the outcome $Y$, the loci
of the causal variants of $Y$ are given by $\mathcal{J}_y \cup
\mathcal{J}_d$.

For subscripts $s \in \{ 0, d, y \}$, we let $\bm{Z}_s = \{
Z_j \colon j \in \mathcal{J}_s \}$ denote the corresponding genotypes,
which has support $\mathcal{Z}_s = \{ 0, 1, 2
\}^{|\mathcal{J}_s|}$. By \Cref{assum:snp-subsets}, our potential
outcomes can be written as (with an abuse of notation)
\begin{align*}
  &D(\bm z) = D(\bm z_d),~Y(\bm z, d) = Y(\bm z_y, d),
  Y(\bm z) = Y(\bm z_y, D(\bm z_d)) = Y(\bm z_y, \bm z_d),
\end{align*}
where $\bm z = (\bm z_d, \bm z_y, \bm z_0) \in \mathcal{Z}_d \times \mathcal{Z}_y \times \mathcal{Z}_0 = \mathcal{Z}$ and $d \in \mathcal{D}$.

\Cref{fig:offspring-phenotypes} provides the graphical representation
of \Cref{assum:snp-subsets}. Each element of $\bm{Z}_0$ has no
effect on $D$ or $Y(d)$, each element of $\bm{Z}_d$ has an effect on
$D$ and each element of $\bm{Z}_y$ has an effect on $Y(d)$ (some are
also causes of $D$). The vector $\bm{Z}_y$ contains the so-called
pleiotropic variants that are causally involved in the
expression of multiple phenotypes \citep{Hemani2018a}. Universal
pleiotropy is assumed in the famous infinitesimal model of
\citet{fisher19_correlation}. Currently, the widely
accepted view is that pleiotropy is at least widespread and some have
argued for a omnigenic model \citep{Boyle2017}.


\emph{Dynastic effects}, sometimes called \emph{genetic nurture}
\citep{Kong2018}, is a phenomenon characterized by parental genotypes
exerting an influence on the offspring's
phenotypes via the parental phenotypes. This is depicted in
\Cref{fig:dynasticeffects}, where paths emanate from the
parental haplotypes $\bm M^{mf}$ and $\bm F^{mf}$ to the parental
phenotypes $C^m$ and $C^m$ and on to the offspring phenotypes $D$ and
$Y$.

\subsection{Conditions for identification}
\label{sec:identification}

With the causal model outlined in \Cref{sec:setup} in mind, we now
describe some sufficient conditions under which some $Z_j \in \bm Z$
is a valid instrumental variable for estimating the causal effect of
$D$ on $Y$. Recall that an instrumental
variable induces unconfounded variation in the exposure without otherwise
affecting the outcome. Due to population stratification
(\Cref{fig:parental-genotypes}), assortative mating
(\Cref{fig:assortative-mating}), and dynastic effects
(\Cref{fig:dynasticeffects}), the offpsring genotypes $\bm Z$ as a
whole are usually not properly randomized without conditioning on the
parental haplotypes. That is,
\[
\bm Z \not \independent D(\bm z), Y(\bm z, d) ~ \text{for some or
 all $\bm z \in \mathcal{Z}$ and $d \in \mathcal{D}$.}
\]

To restore validity of genetic instruments, the key idea is to
condition on the parental haplotypes as envisioned by
\citet{DaveySmith2003} and used in the context of gene mapping by
\citet{Spielman1993} and \citet{Bates2020a}. This allows us to
use precisely the exogenous randomness in the ancestry indicators $\bm
U^m$ and $\bm U^f$ that occurs during meiosis and fertilization. This idea is formalized in the next proposition.

\begin{proposition}
\label{prop:offspring-genotype-independence}
Under the causal graphical model described in \Cref{sec:setup},
the offspring's maternal haplotype $Z_j^m$ (or genotype $Z_j$) at some site
$j \in \mathcal{J}$ is independent of all ancestral and offspring
confounders given the maternal (or parental) haplotypes at site $j$:
\begin{equation}
 \label{eq:zj-independence}
 \begin{split}
   &Z_j^m \independent (A, C^m, C^f, C) \mid (\bm M_j^{mf}, S=1),~
   Z_j \independent (A, C^m, C^f, C) \mid (\bm M_j^{mf}, \bm
     F_j^{mf}, S=1). \\
 \end{split}
\end{equation}
\end{proposition}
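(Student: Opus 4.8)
The plan is to derive \eqref{eq:zj-independence} purely from the d-separation criterion applied to the SWIG in \Cref{fig:fammr_dag}, using the factorization in \Cref{tab:factorization} and \Cref{assum:meiosis-indicator-independence} to justify the relevant missing edges. I would treat the maternal statement first; the parental (genotype) statement follows by the same argument applied symmetrically to the father's side, together with the fertilization assumption. The key structural observation is that $Z_j^m$ has exactly two parents in the graph, namely $\bm M_j^{mf} = (M_j^m, M_j^f)$ (the mother's haplotypes \emph{at locus $j$}) and the meiosis indicator $U_j^m$ (plus the conditioning event $S=1$, which feeds into the definition of $\bm Z$ via fertilization). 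So any path out of $Z_j^m$ must leave through one of these parents or through a child of $Z_j^m$.

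First I would enumerate the ways a path from $Z_j^m$ to one of $A, C^m, C^f, C$ could proceed. A path leaving through the parent $M_j^m$ or $M_j^f$ is immediately blocked: we are conditioning on $\bm M_j^{mf}$, and since these are non-colliders on such a path, d-separation holds. A path leaving through the parent $U_j^m$ is blocked by \Cref{assum:meiosis-indicator-independence}: $\bm U^m$ is exogenous, so $U_j^m$ has no parents and its only other children are the remaining offspring haplotype coordinates $Z_{j'}^m$; any path continuing from $U_j^m$ must go through some such collider $Z_{j'}^m$, which is not in the conditioning set and has no descendants in $\{A, C^m, C^f, C\}$, hence is blocked. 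The remaining case is a path leaving $Z_j^m$ through a child. The children of $Z_j^m$ are (coordinates of) $\bm Z$, and $\bm Z$'s other parents are $\bm Z^f$ and $S$; $\bm Z$'s children are $D$ and $Y(d)$. Since $\bm Z$ is a collider on such a path and we do \emph{not} condition on $\bm Z$ (only on $\bm M_j^{mf}$ and $S=1$), and since $\bm Z$'s descendants $D, Y(d)$ are likewise not in the conditioning set, this path is also blocked at $\bm Z$ — unless the path reaches $\bm Z$ via $S$, but $S$ \emph{is} conditioned on. So I would need to check paths of the form $Z_j^m \to \bm Z \leftarrow S \to \cdots$: here $\bm Z$ is an uninstantiated collider, so the path is dead at $\bm Z$ regardless of $S$. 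One still has to rule out $Z_j^m \to D$ or $Z_j^m \to Y(d)$ directly; but in the graph $Z_j^m$ influences $D, Y(d)$ only through $\bm Z$ (i.e. through the genotype), so these reduce to the collider-at-$\bm Z$ case already handled. Conditioning on $S=1$ opens the collider $S$, but every path through $S$ must still pass through the blocked collider $\bm Z$ to get back to $Z_j^m$, or else go $S \leftarrow C^m$ (or $C^f$) — and reaching those from $Z_j^m$ again requires passing through $\bm Z$ or through the conditioned-on $\bm M_j^{mf}$.

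The main obstacle, and the step I would spend the most care on, is the bookkeeping around the \emph{locus-$j$-only} conditioning set: one conditions on $\bm M_j^{mf}$ but \emph{not} on $\bm M_{j'}^{mf}$ for $j' \ne j$, and not on the other coordinates $Z_{j'}^m$. I must confirm there is no ``sideways'' path $Z_j^m \leftarrow U_j^m \cdots$ — blocked as above — and no path that sneaks through an unconditioned $M_{j'}^m$ and then up to $A$ via the population-stratification edges $A \to \bm M^m$; such a path would have to enter $Z_j^m$ through a collider at some $Z_{j''}^m$ or at $\bm Z$, both uninstantiated, so it is blocked. I would also note the subtlety that $\bm M^m, \bm M^f$ are written as vectors with a common parent $A$, so that $M_j^m$ and $M_{j'}^m$ are d-connected given nothing through $A$; this is exactly why conditioning only on the locus-$j$ haplotypes still suffices — the potentially-open back-door path through $A$ dies at the collider $\bm Z$ (or $Z^m_{j''}$). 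Having dispatched all path types, d-separation gives the first conditional independence in \eqref{eq:zj-independence}. For the second, the offspring genotype $Z_j = Z_j^m + Z_j^f$ is a deterministic function of $(Z_j^m, Z_j^f, S{=}1)$, so $Z_j$ d-separates the same way once we additionally condition on $\bm F_j^{mf}$ and invoke the symmetric argument on the paternal side together with \Cref{assum:no-trd} (which guarantees no $\bm Z^m, \bm Z^f \to S$ edge, so $S$ does not become an ancestor-side collider linking the two haplotypes); I would then conclude $Z_j \independent (A, C^m, C^f, C) \mid (\bm M_j^{mf}, \bm F_j^{mf}, S{=}1)$.
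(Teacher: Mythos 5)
Your proposal is correct, but it takes a more explicitly graphical route than the paper. The paper's own proof is a two-sentence structural-equation argument: by \Cref{assum:de-novo-distribution}, conditional on the parental haplotypes at locus $j$, the haplotype $Z_j^m$ is a function only of the meiosis indicator $U_j^m$ and exogenous mutation noise; by \Cref{assum:meiosis-indicator-independence} these are jointly independent of $(A, C^m, C^f, C)$; and $Z_j = Z_j^m + Z_j^f$ on the event $S=1$, so the claim ``immediately follows.'' You instead establish the same conclusion by exhaustive d-separation in the SWIG of \Cref{fig:fammr_dag}: enumerating every way a path can leave $Z_j^m$ (through the conditioned non-colliders $\bm M_j^{mf}$, through the exogenous $\bm U^m$ into an uninstantiated collider $Z_{j'}^m$, or forward into the uninstantiated collider $\bm Z$) and checking that each is blocked. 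The two arguments are equivalent in content --- the local Markov property versus the global one --- but your version buys something the paper's terse proof leaves implicit: you explicitly verify that conditioning on the selection event $S=1$ opens no new path, because under \Cref{assum:no-trd} the only parents of $S$ are $C^m$ and $C^f$, and every route from those back to $Z_j^m$ dies at a conditioned non-collider or an uninstantiated collider. This is exactly the place where the result would fail under transmission ratio distortion (an edge $\bm Z^m \to S$), so making that step visible is a genuine, if minor, improvement in rigor; the cost is only the extra bookkeeping over loci $j' \neq j$, which you handle correctly.
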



However, the conditional independence \eqref{eq:zj-independence}
alone does not guarantee the validity of $Z_j$ as an instrumental
variable. The main issue is that $Z_j$ might be in linkage
disequilibrium with other causal variants of $Y$. Our goal is
to find a set of variables $\bm V$ such that $Z_j$ is conditionally
independent of the potential outcome $Y(d)$. This is formalized in
the definition below.
\begin{definition} \label{def:identification-conditions}
We say a genotype $Z_j$ is a \emph{valid instrumental variable} given $\bm V$ (for estimating the
causal effect of $D$ on $Y$) if the following conditions are
satisfied:
\begin{enumerate}
\item Relevance: $Z_j \not \independent D \mid \bm V$;
\item Exogeneity: $Z_j \independent Y(d) \mid \bm V ~ \text{for all
   $d \in \mathcal{D}$}$;
\item Exclusion restriction: $Y(z_j, d) = Y(d)$ for all
   $z_j \in \{0,1,2\}$ and $d \in \mathcal{D}$.
\end{enumerate}
Simiarly, we say a haplotype $Z_j^m$ is a valid instrument given
$\bm V$ if the same conditions above hold with $Z_j$ replaced by
$Z_j^m$ and $z_j \in \{0,1,2\}$ replaced by $z_j^m \in \{0,1\}$.
\end{definition}
In our setup (\Cref{assum:snp-subsets}), the exclusion restriction is
satisfied if and only if $j \not \in \mathcal{J}_y$.

To gain intuition on how the set of variables $\bm V$ can be
selected, it is helpful to return to the working example in
\Cref{fig:fammr_dag}. We see that
$Z_3$ does not satisfy the exclusion restriction because $Z_3$ has a
direct effect on $Y$. The causal variant $Z_2$ for $D$ would be a valid
instrument if we condition on the corresponding haplotypes and $Z_3$,
but $Z_2$ is not observed in this example. This leaves us with one
remaining candidate instrument: $Z_1$ (and potentially its
haplotypes $Z_1^m$ and $Z_1^f$). The relevance
assumption is satisfied as long as $\bm V$ does not block both of the
following paths
\[\begin{array}{ll}%
 Z_1 \leftarrow Z_1^m \leftarrow \bm{U}^m \rightarrow Z_2^m \rightarrow Z_2 \rightarrow D; \\[0.3em]
 Z_1 \leftarrow Z_1^f \leftarrow \bm{U}^f \rightarrow Z_2^f \rightarrow Z_2 \rightarrow D. \\[0.3em]
\end{array}\]%
The exclusion restriction is satisfied because $Z_1$ is not a causal
variant for $Y$. Finally, exogeneity is satisfied if $\bm V$ blocks the
path
\[\begin{array}{ll}%
 Z_1 \leftarrow Z_1^m \leftarrow \bm{U}^m \rightarrow Z_3^m \rightarrow Z_3 \rightarrow Y(d); \\[0.3em]
 Z_1 \leftarrow Z_1^f \leftarrow \bm{U}^f \rightarrow Z_3^f \rightarrow Z_3 \rightarrow Y(d). \\[0.3em]
\end{array}
\]%
Thus, we have the following result:

\begin{proposition} \label{prop:example-adjustment-set}
For the example in \Cref{fig:fammr_dag}, the following conditional
independence relationships are true for all $d \in
\mathcal{D}$:
\begin{align}
 \label{eqn:adjustment-set-1} Z_1^m \independent Y(d) &\mid (\bm M_1^{mf}, \bm V_{3}^m, S = 1), \\
 \label{eqn:adjustment-set-2} Z_1 \independent Y(d) & \mid (\bm M_1^{mf}, \bm F_1^{mf}, \bm V_{3}, S=1),
\end{align}
where $\bm V_{3}^m  = (\bm M_3^{mf}, Z_3^m)$ and $\bm V_{3} = ( \bm M_3^{mf}, \bm F_3^{mf}, Z_3)$.
The adjustment variables above are minimal in the sense
that no subsets of them satisfy the same conditional independence.
\end{proposition}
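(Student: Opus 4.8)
The plan is to use faithfulness (assumed throughout) to reduce both displayed statements to d-separation claims for $Z_1$ (resp.\ $Z_1^m$) and $Y(d)$ in the single world intervention graph of \Cref{fig:fammr_dag}, to verify these by walking every path out of $Z_1$, and to establish minimality by exhibiting, for each adjustment variable, a path that is reactivated once it is deleted. A few structural facts keep the enumeration finite and organized. Since $1\in\mathcal{J}_0$ the node $Z_1$ (and likewise $Z_1^m$) is childless, so every path to $Y(d)$ leaves $Z_1$ through an edge directed into it, whose source is $Z_1^m$, $Z_1^f$ or $S$ (resp., for $Z_1^m$, one of $M_1^m,M_1^f,\bm U^m$); the parents of $Y(d)$ other than the fixed intervention value are $Z_3,A,C^m,C^f,C$, so every path ends with an edge from one of these; the meiosis nodes $\bm U^m,\bm U^f$ are sources whose only children are $\{Z_1^m,Z_2^m,Z_3^m\}$ and $\{Z_1^f,Z_2^f,Z_3^f\}$ by \Cref{assum:meiosis-indicator-independence}; and the exposure node $D$ is childless in the SWIG, while $Z_2^m$ and $Z_2^f$ have no conditioned descendants in either adjustment set.

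\textbf{The two independences.} For \eqref{eqn:adjustment-set-2} I would consider the three parent edges of $Z_1$. The edge $Z_1\leftarrow S$ is blocked at the conditioned node $S$, which on any path reaching it from $Z_1$ is a non-collider (its edge to $Z_1$ points out of $S$). The edge $Z_1\leftarrow Z_1^m$ reaches $M_1^m$ or $M_1^f$ (conditioned non-colliders, blocked) or $\bm U^m$, which forks to $Z_2^m$ or $Z_3^m$: from $Z_2^m$ the path is either a closed collider or continues as a chain through $Z_2$ and dead-ends at the childless $D$; from $Z_3^m$, conditioning on its descendant $Z_3$ opens the collider there, but the remaining parents $M_3^m,M_3^f$ are conditioned, continuing past $Z_3^m$ reaches the conditioned node $Z_3$ at which $Z_3^m\to Z_3\to Y(d)$ is a non-collider (blocked), and the only other exit from $Z_3$ is the opened collider $Z_3^m\to Z_3\leftarrow Z_3^f$ (or $\leftarrow S$, blocked at $S$), leading only to the conditioned $F_3^m,F_3^f$ or to $\bm U^f$, whose other children ($Z_1^f$, heading back to $Z_1$; $Z_2^f$, a closed collider or dead end at $D$; $Z_3^f$, already visited) all die. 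By symmetry the edge $Z_1\leftarrow Z_1^f$ contributes nothing new, so every path is blocked. The identical bookkeeping restricted to the maternal side, with $Z_3^m$ conditioned in place of $Z_3$, yields \eqref{eqn:adjustment-set-1}; there the paternal nodes are never reached, because doing so would require first passing one of the blocked nodes $M_1^m,M_1^f$ or a closed collider. (As a shortcut for part of this, \Cref{prop:offspring-genotype-independence} already blocks every path from $Z_1$ to $\{A,C^m,C^f,C\}$ avoiding $\bm U^m,\bm U^f$; the extra conditioning on $\bm M_3^{mf},\bm F_3^{mf}$ pre-emptively closes the colliders at $Z_3^m,Z_3^f$ that conditioning on $Z_3$ would open, so only the linkage paths through the meiosis nodes to the pleiotropic locus need checking.)

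\textbf{Minimality.} The conditioning on $S=1$ is imposed by definedness of $\bm Z$ and is not at issue; among the remaining variables, deleting $M_1^m$ (and analogously $M_1^f,F_1^m,F_1^f$) reactivates the backdoor $Z_1\leftarrow Z_1^m\leftarrow M_1^m\leftarrow A\rightarrow Y(d)$ (in \eqref{eqn:adjustment-set-1} the same path without the leading $Z_1\leftarrow$ edge); deleting $M_3^m$ (and analogously $M_3^f,F_3^m,F_3^f$) leaves the collider at $Z_3^m$, resp.\ $Z_3^f$, open because $Z_3^m$, resp.\ $Z_3$, is still conditioned, reactivating $Z_1\leftarrow Z_1^m\leftarrow\bm U^m\rightarrow Z_3^m\leftarrow M_3^m\leftarrow A\rightarrow Y(d)$ or its paternal analogue; and deleting $Z_3^m$ in \eqref{eqn:adjustment-set-1}, or $Z_3$ in \eqref{eqn:adjustment-set-2}, reactivates the linkage path $Z_1\leftarrow Z_1^m\leftarrow\bm U^m\rightarrow Z_3^m\rightarrow Z_3\rightarrow Y(d)$, all of whose intermediate nodes are then unconditioned non-colliders. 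Since every single deletion breaks the independence, no proper subset of the adjustment variables suffices.

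\textbf{Main obstacle.} I expect the real work to be making the path enumeration genuinely exhaustive rather than merely convincing. Two features demand care: conditioning on the pleiotropic genotype $Z_3$ (or haplotype $Z_3^m$) opens colliders not just at that node but also at its offspring-haplotype parents, so one must chase the resulting ``bounce'' paths --- chiefly $Z_3\leftarrow Z_3^f\leftarrow\bm U^f\rightarrow\cdots$ --- and confirm each terminates at the childless $D$ or cycles back to $Z_1$; and one must confirm that the unconditioned loci $2\in\mathcal{J}_d$ and $1\in\mathcal{J}_0$ never supply an escape, which rests on $Z_2^m,Z_2^f$ being colliders with no conditioned descendants and on $D$ being a sink of the SWIG.
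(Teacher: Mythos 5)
Your proposal is correct and follows essentially the same route as the paper: reduce both claims to d-separation in the SWIG, enumerate and block the backdoor paths (your enumeration recovers exactly the paths in the paper's Table~2 --- dynastic, population-stratification, pleiotropy, assortative-mating, and the collider path opened at $Z_3^m$/$Z_3$ by conditioning, which the paper calls ``nearly determined ancestry''), and establish minimality by reactivating a path per deleted variable; your version is simply more systematic where the paper says ``follows almost immediately'' and ``tedious but trivial.'' One pedantic caveat: ``every single deletion breaks the independence'' does not in general imply failure for \emph{every} proper subset (d-separation is not monotone), though here your exhibited paths do cover all subsets because the collider-free paths stay open under further deletions and the $Z_3^m$-collider path is backed up by the pleiotropy path whenever $Z_3$ (or $Z_3^m$) is also dropped.
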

\begin{table*}[tb]
    \caption{Some paths between $Z_1$ and $Y(d)$ in \Cref{fig:fammr_dag}.}
    \label{tab:dseparation}
    \begin{center}
     \begin{tabular}{lll}
       \toprule
       Name of bias & Example path & Blocking variable \\
       \midrule
       Dynastic effect & $ Z_1^m \leftarrow \bm M_1^{mf} \rightarrow
                         C^m \rightarrow Y(d)$ & $\bm M_1^{mf}$ \\[0.3em]
       Population stratification &  $ Z_1^m \leftarrow \bm M_1^{mf}
                                   \leftarrow A \rightarrow Y(d)$ &
                                                                    $\bm M_1^{mf}$ \\[0.3em]
       Pleiotropy & $Z_1^m \leftarrow \bm{U}^m  \rightarrow Z_3^m \rightarrow Z_3 \rightarrow Y(d)$ & $Z_3^m$ or $Z_3$ \\[0.3em]
       Assortative mating & $Z_1^m \leftarrow \bm M_1^{mf} \leftarrow
                              C^m \rightarrow \boxed{S} \leftarrow$
       &
                                                                      $\bm
                                                                      M_1^{mf}$,
                                                                      $Z_3$,
         or $\bm F_3^{mf}$ \\
       &                               $\leftarrow C^f
                            \leftarrow \bm F_3^{mf} \rightarrow
                            Z_3^f \rightarrow Z_3 \rightarrow Y(d)$
 & \\
       Nearly determined ancestry & $Z_1^m \leftarrow \bm{U}^m
                                    \rightarrow \boxed{Z_3^m}
                                    \leftarrow \bm M_3^{mf} \rightarrow
                                    C^m \rightarrow Y(d) $ & $\bm M_3^{mf}$ \\
       \bottomrule
     \end{tabular}
    \end{center}
\end{table*}

\begin{proof}
The conditional independence follows almost immediately from
our discussion above. It is tedious but trivial to show that $\bm V = (\bm M_1^{mf}, \bm V_{3}^m)$ is
minimal for \eqref{eqn:adjustment-set-1}.  \Cref{tab:dseparation} lists the key backdoor
paths between $Z_1^m$ and $Y(d)$, describes the corresponding biological mechanism and shows how conditioning on $\bm V$ blocks these paths. The table only includes the maternal
paths, but the same blocking also holds for the corresponding paternal
paths.
\end{proof}

\begin{remark}
To our knowledge, the bias-inducing path in the last row of
\Cref{tab:dseparation}, which we term ``nearly
determined ancestry bias'',
has not yet been identified in the literature. This is a form of
collider bias introduced because the ancestry indicator
can often be almost perfectly determined if we are given the
mother's haplotypes and the offspring's maternal haplotype. For
example, if the mother is heterozygous $M_3^m = 1,
M_3^f = 0$ and the offspring's maternal haplotype is $Z_3^m = 1$,
then we know that $U_3^m = m$ is true with an extremely high
confidence. Due to genetic linkage, there is also an exceedingly high
probability that $U_1^m = m$, inducing dependence between the
potential instrument $Z_1^m$ with maternal haplotypes $\bm
M^{mf}_3$. This further challenges the widely adopted hypothesis that
mapping causal variants is equivalent to testing conditional
independence; in our example, $Z_3$ is the only causal variant of
$Y(d)$, but $Z_1 \independent Y(d) \mid Z_3$ may not be true even if
there is no population stratification and no causal effect from $\bm
M^{mf}_{1}$ on $Y(d)$.
\end{remark}

We conclude this section with a sufficient condition for the validity
of $Z_j^m$ and $Z_j$ in our general setting. To simplify the
exposition, let $\bm V_{\mathcal{B}}^m = (\bm{M}_{\mathcal{B}}^{mf},
\bm{Z}_{\mathcal{B}}^m)$ be a set of maternal
adjustment variables, where $\mathcal{B} \subseteq \mathcal{J}
\setminus \{j\}$ is a subset of loci. Furthermore, let $\bm
V_{\mathcal{B}} = (\bm{M}_{\mathcal{B}}^{mf}, \bm
{F}_{\mathcal{B}}^{mf}, \bm{Z}_{\mathcal{B}})$.

\begin{theorem} \label{thm:sufficient-adjustment-set}
Suppose $\bm{Z} = (Z_1,\dotsc, Z_p)$ is a full chromosome. Consider
the general causal model for Mendelian randomization in
\Cref{sec:setup} and let $j \in
\mathcal{J}$ be the index of a candidate instrument. Then $Z_j^m$ is a
valid instrument conditional on $(\bm M_j^{mf}, \bm V_{\mathcal{B}}^m)$ if the following conditions
are satisfied:
\begin{enumerate}
\item $Z_j^m \not \independent \bm Z_d^m \mid (\bm M_j^{mf}, \bm V_{\mathcal{B}}^m, S = 1)$;
\item $Z_j^m \independent \bm Z_y^m \mid (\bm M_j^{mf}, \bm V_{\mathcal{B}}^m, S = 1)$;
\end{enumerate}
\end{theorem}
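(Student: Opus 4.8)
The plan is to verify the three requirements of \Cref{def:identification-conditions} for $Z_j^m$ with the adjustment set $\bm V := (\bm M_j^{mf}, \bm V_{\mathcal{B}}^m, S = 1)$, working throughout in the single world intervention graph obtained from the $p$-locus version of \Cref{fig:fammr_dag} by intervening on $D$, and invoking faithfulness so that the target conditional independences can be read off as d-separations and, conversely, so that hypotheses (1)--(2) become statements about the existence or nonexistence of d-connecting paths.

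Two of the three requirements are disposed of quickly. For the \emph{exclusion restriction}, hypotheses (1)--(2) together force $j \notin \mathcal{J}_y$: if $j \in \mathcal{J}_y$, then $Z_j^m$ is a coordinate of $\bm Z_y^m$, so (2) would make $Z_j^m$ conditionally degenerate given $\bm V$, contradicting the relevance hypothesis (1); with $j \notin \mathcal{J}_y$, \Cref{assum:snp-subsets} gives $Y(z_j^m,d) = Y(d)$. For \emph{relevance}, faithfulness turns (1) into a d-connecting path from $Z_j^m$ to some coordinate $Z_l^m$ of $\bm Z_d^m$; by the path lemma below such a path must be $Z_j^m \leftarrow \bm U^m \rightarrow Z_l^m$ with $l \in \mathcal{J}_d \setminus \mathcal{B}$, and appending the always-open segment $Z_l^m \rightarrow Z_l \rightarrow D$ yields a d-connecting path from $Z_j^m$ to $D$, so $Z_j^m \notindependent D \mid \bm V$ (when $j \in \mathcal{J}_d$ this is immediate from $Z_j^m \rightarrow Z_j \rightarrow D$).

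The heart of the proof is a \emph{path lemma}: every path leaving $Z_j^m$ that is d-connecting given $\bm V$ reaches its far endpoint along the route
\[
  Z_j^m \leftarrow \bm U^m \rightarrow Z_l^m \;\bigl[\rightarrow Z_l \;[\rightarrow D \text{ or } \rightarrow Y(d)]\bigr], \qquad l \notin \mathcal{B},
\]
where the bracketed hops are optional and $Z_l \rightarrow D$ (resp.\ $Z_l \rightarrow Y(d)$) occurs only if $l \in \mathcal{J}_d$ (resp.\ $l \in \mathcal{J}_y$); the sole exception is the short path $Z_j^m \rightarrow Z_j \rightarrow D$ when $j \in \mathcal{J}_d$. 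Its proof is a finite case analysis that generalizes \Cref{prop:offspring-genotype-independence}. The edge $Z_j^m \leftarrow \bm M_j^{mf}$ is blocked since $\bm M_j^{mf} \subseteq \bm V$, and $Z_j^m \rightarrow Z_j$ either gives the exceptional path or dead-ends at the unactivated collider $Z_j$; so any d-connecting path must leave via $\bm U^m$, whose only children are maternal haplotypes. At the landing node $Z_l^m$, a continuation towards a parent $M_l^{(\cdot)}$ would require the collider at $Z_l^m$ to be activated, forcing $l \in \mathcal{B}$ and hence conditioning on $\bm M_l^{mf} \subseteq \bm V$, which immediately re-blocks the path — this is precisely the ``nearly determined ancestry'' phenomenon noted after \Cref{prop:example-adjustment-set}, and it is why $\bm V_{\mathcal{B}}^m$ pairs $\bm Z_{\mathcal{B}}^m$ with $\bm M_{\mathcal{B}}^{mf}$. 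A continuation $Z_l^m \rightarrow Z_l$ then stalls unless $Z_l \rightarrow D$ or $Z_l \rightarrow Y(d)$ is present, because the other edges at $Z_l$ point inward and leave $Z_l$ an unactivated collider, and the only alternative exit — through $S$ to the parental phenotypes $C^m, C^f$ and on to $A$ — is unreachable, since every path from a genotype node into $\{A, C^m, C^f\}$ must traverse some conditioned $\bm M^{mf}$ or $\bm F^{mf}$. Consequently any d-connecting path from $Z_j^m$ to $Y(d)$ must pass through a coordinate of $\bm Z_y^m$ (the paternal pleiotropic haplotypes $\bm Z_y^f$ being cut off at the genotype colliders).

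\emph{Exogeneity} then follows: by faithfulness, hypothesis (2) says that no path from $Z_j^m$ to $\bm Z_y^m$ is d-connecting given $\bm V$ — equivalently, $\mathcal{J}_y \subseteq \mathcal{B}$ — so the path lemma leaves no d-connecting path from $Z_j^m$ to $Y(d)$ given $\bm V$, whence $Z_j^m \independent Y(d) \mid \bm V$. The analogous statement for the genotype $Z_j$ (valid given $(\bm M_j^{mf}, \bm F_j^{mf}, \bm V_{\mathcal{B}})$ under the corresponding hypotheses) follows by running the same argument symmetrically on the paternal side. I expect the main obstacle to be a clean proof of the path lemma for an arbitrarily long chromosome: the bookkeeping must confirm that (i) genotype colliders $Z_l$ are never activated because $\bm V$ contains no genotype, no $D$, and no $Y(d)$; (ii) every haplotype collider $Z_l^m$ opened by $\bm Z_{\mathcal{B}}^m$ is instantly re-blocked by $\bm M_{\mathcal{B}}^{mf}$; and (iii) the population-structure and assortative-mating subgraph $\{A, C^m, C^f, S\}$ is sealed off from the candidate instrument. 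Once the path lemma is secured, hypotheses (1) and (2) slot in exactly as the relevance-enabling and pleiotropy-blocking conditions, respectively.
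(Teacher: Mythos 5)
Your proposal is correct and follows essentially the same route as the paper's own (much terser) proof: relevance from condition (1), exclusion from condition (2) via the degeneracy argument, and exogeneity by splitting paths to $Y(d)$ into confounder paths blocked by $\bm M_j^{mf}$ (the content of \Cref{prop:offspring-genotype-independence}) and pleiotropy paths $Z_j^m \leftarrow \bm U^m \rightarrow \bm Z_y^m \rightarrow \bm Z_y \rightarrow Y(d)$ blocked by condition (2). Your path lemma simply makes explicit the d-separation bookkeeping (unactivated genotype colliders, the re-blocking of $Z_l^m$ colliders by $\bm M_{\mathcal{B}}^{mf}$, and the sealing-off of $\{A, C^m, C^f, S\}$) that the paper leaves implicit.
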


\begin{proof}
The relevance of $Z_j^m$ follows from the first condition, because $Z_j^m$
is dependent on some causal variants (or is itself a causal
variant) of $D$. The exclusion restriction ($j \not \in
\mathcal{J}_y$) follows directly from the second
condition. For exogeneity, paths from $Z_j^m$ to $Y(d)$ either go
through the confounders $A$, $C^f$, $C^m$, or $C$, which are blocked
by $\bm M_j^{mf}$ by \Cref{prop:offspring-genotype-independence},
or through some causal variants of the outcome as in $Z_j^m \leftarrow
\bm U^m \rightarrow \bm Z_y^m \rightarrow \bm Z_y \rightarrow Y(d)$,
which are blocked by the second condition.
\end{proof}

It is straightforward to extend
\Cref{thm:sufficient-adjustment-set} to establish validity of the
genotype $Z_j$ at locus $j$ as an instrumental variable. Details are
omitted.

Since \Cref{prop:offspring-genotype-independence} ensures that, after
conditioning on $\bm{M}_j^{mf}$, the instrument $Z_j^m$ is independent
of all ancestral and offspring confounders $(A, C^m, C^f, C)$, the
only remaining threats to the validity of $Z_j^m$ are
irrelevance and pleiotropy. The set $\mathcal{B}$ is chosen to ensure
that $Z_j^m$ is independent of all pleiotropic variants conditional on
$\bm{V}_{\mathcal{B}}^m$ (condition 2 of \Cref{thm:sufficient-adjustment-set})
but not independent of the set of causal variants (condition 1 of
\Cref{thm:sufficient-adjustment-set}).

This highlights an intrinsic trade-off in choosing the adjustment
set $\bm V_{\mathcal{B}}$: by choosing a
larger subset $\mathcal{B}$, the second condition is more likely but
the first condition is less likely to be satisfied. The reason is
that, when conditioning on more genetic variants, we are more likely
to block the pleiotropic paths to $Y$ but we are also more likely to
block the path between the instrument and the causal variant.

For now, we will continue our discussion under the assumption that an
appropriate set $\mathcal{B}$ can be chosen. In
\Cref{subsec:hmm-simplification}, we will return to this and describe
a simple construction of $\mathcal{B}$ using
Markov properties in Haldane's model for meiosis.


\subsection{Hypothesis testing}
\label{subsec:hypothesis-test}

We are now ready to describe the randomization-based inference for
within-family Mendelian randomization. 
We will focus on the simplest case where a single genetic
variant from the offspring's maternally-inherited haplotype is used as
an instrumental variable and defer the discussion on multiple instruments
to \Cref{subsec:multiple-ivs}.

Suppose we observe $N$ trios of parent and offspring and within each
trio the observed variables can be described by the causal diagram
described in \Cref{sec:setup}. Let $i \in \{1,\dotsc,N\}$ be the index
of the trio.  Following \citet{Rosenbaum1983}, we define the
\emph{propensity score} of the instrument $Z_{ij}^m$ at locus $j$ of
individual $i$ as
\begin{equation} \label{eqn:exact-propensity-score}
\pi_{ij}^m = \mathbb{P}(Z_{ij}^m = 1 \mid \bm{M}_{ij}^{mf}, \bm{V}_{i\mathcal{B}}^m)
\end{equation}
where $\mathcal{B} \subseteq \mathcal{J}$ is an appropriately chosen
set of loci that satisfies the conditions in \Cref{thm:sufficient-adjustment-set}. In words, $\pi_{ij}^m$ describes the
randomization distribution of the haplotype $Z_{ij}^m$ conditional on
a set of parental and offspring haplotypes or genotypes. For now, we
will treat $\pi_{ij}^m$ as known.

Let us consider the following model for the potential outcomes
that assumes a constant treatment effect $\beta$:
\begin{equation}
Y_i(d) = Y_i(0) + \beta d ~ \text{for all $d \in \mathcal{D}$ and $i = 1, \ldots, N$}.
\label{eqn:potential-outcomes}
\end{equation}
Let $\mathcal{F} = \{Y_i(0) \colon i = 1, \ldots, N\}$ denote the
collection of potential outcomes for all individuals $i$ under no
exposure $d = 0$. Our goal is to test null hypotheses of the form
\begin{equation} \label{eqn:hypothesis-test}
H_0 \colon \beta = \beta_0, ~ H_1 \colon \beta \neq \beta_0
\end{equation}
where $\beta_0$ is some hypothetical value of the causal effect. If
the null hypothesis is true, equation \eqref{eqn:potential-outcomes}
implies that the potential outcome under no exposure ($d = 0$) can be
identified from the observed data by
\[
Y_i(0) = Y_i(D_i) - \beta_0 D_i = Y_i - \beta_0 D_i.
\]
For ease of notation, let $Q_i(\beta_0) = Y_i - \beta_0 D_i$ be the adjusted outcome.

\Cref{thm:identification} and the model \eqref{eqn:potential-outcomes}
imply that we are essentially testing the following conditional
independence:
\begin{equation} \label{eqn:hypothesis-adjusted-outcome}
  \begin{split}
H_0 \colon Z_{ij}^m \independent Q_i(\beta_0) \mid \bm
    (\bm{M}_{ij}^{mf}, \bm{V}_{i\mathcal{B}}^m)\quad\text{vs.}\quad
    H_1 \colon Z_{ij}^m \notindependent Q_i(\beta_0) \mid
    (\bm{M}_{ij}^{mf}, \bm{V}_{i\mathcal{B}}^m).
  \end{split}
\end{equation}
Let $\bm Z_j^m = (Z_{1j}^m,\dotsc,Z_{Nj}^m)$ and similarly define
other vector-valued genotypes. Suppose we have selected a test
statistic $T(\bm{Z}_j^m \mid
\mathcal{F})$ whose dependence on $(\bm{M}_j^{mf},
\bm{V}_{\mathcal{B}}^m)$ is implicit. For example, this could be the
coefficient from a regression of the adjusted outcome on the
instrument. The randomization-based p-value for $H_0$ can then be
written as
\begin{equation}\begin{array}{ll} \label{eqn:randomization-pvalue}
                  P(\bm{Z}_j^m \mid \mathcal{F})
                  &= \tilde{\mathbb{P}}\big(T(\tilde{\bm{Z}}_j^m \mid \mathcal{F}) \leq T(\bm{Z}_j^m \mid \mathcal{F})\big) \\[1em]
&=\displaystyle \sum_{\tilde{\bm{z}}^m \in \{0,1\}^N}
  I\big\{T(\tilde{\bm{z}}_j^m \mid \mathcal{F}) \leq T(\bm{Z}_j^m \mid
  \mathcal{F})\big\} \prod_{i=1}^N (\pi_{ij}^m)^{\tilde{z}_i} (1-\pi_{ij}^m)^{1-\tilde{z}_i},
\end{array}\end{equation}
where $I\{\cdot\}$ is the indicator function, $\tilde{\bm{Z}}_j^m$
denotes an independent, random draw from
\eqref{eqn:exact-propensity-score} and $\tilde{\mathbb{P}}$ denotes
its probability distribution. Given the propensity score and the
null hypothesis, this p-value can be computed exactly by enumerating
over all $2^N$ possible values of $\tilde{\bm{Z}}^m$, or using a Monte
Carlo approximation by drawing a large sample of $\tilde{\bm{Z}}^m$ using
the propensity scores $\bm{\pi}_j^m$; see Algorithm \ref{algo:pvalue}
for the pseudo-code. It is
straightforward to replace the haplotype $Z_{ij}^m$ with the genotype
$Z_{ij}$; the randomization distribution of $Z_{ij} \in \{0,1,2\}$ is
a simple function of $\pi_{ij}^m$ and $\pi_{ij}^f$ since meioses in
the mother and father are independent.

\Cref{eqn:randomization-pvalue} highlights the importance of the vector
$\bm{\pi}_j^m$ of propensity scores in randomization
inference. However, $\bm{\pi}_j^m$ describes a biochemical
process occurring in the human body which is not precisely known to,
or controlled by, us. Therefore, the best we can do is
perform \emph{almost exact} inference by replacing $\bm{\pi}_j^m$ with
an accurate approximation. The model we use in this paper
is Haldane's hidden Markov model described in
\Cref{sec:randomization-distribution}. As discussed in
\Cref{sec:genetic-preliminaries} our method is modular in the sense
that more sophisticated meiosis models can easily be substituted as
the randomization distribution; see \citet{Broman2000,Otto2019} for
discussion and comparison of alternative models.

\begin{algorithm}[tb]
  \caption{Almost exact test}
  \label{algo:pvalue}
\SetAlgoLined
 Compute the test statistic on the observed data $t = T(\bm{Z}_j^m \mid \mathcal{F})$\;
 \For{$k = 1, \ldots, K$}{
  Sample a counterfactual instrument $\tilde{\bm{Z}}_j^m$ from the
  randomization distribution (e.g. using \Cref{thm:propensity-score} in
  \Cref{sec:randomization-distribution} based on Haldane's model)\;
  Compute the test statistic using the counterfactual instrument $\tilde{t}_k = T(\tilde{\bm{Z}}_j^m \mid \mathcal{F})$\;
 }
 Compute an approximation to the p-value in \Cref{eqn:randomization-pvalue} via the proportion of $\tilde{t}_1, \ldots, \tilde{t}_K$ which are larger than $t$:
 \[ \hat{P}(\bm{Z}_j^m \mid \mathcal{F}) = \frac{|\{k \colon t \leq \tilde{t}_k \}|}{K}. \]
\end{algorithm}

\subsection{Choice of test statistic} \label{subsec:test-statistic}

Our randomization test retains the nominal size under the null
hypothesis, regardless of the choice of the test statistic.
Nonetheless, a well chosen statistic may substantially increase the
power of the test. A practical challenge is
that the adjustment set $(\bm{M}_j^{mf}, \bm{V}_{\mathcal{B}})$ may be
high-dimensional and highly correlated, and their role in designing
the test statistic is unclear. We propose to
use the following ``clever covariate'' in the test
statistic:
\[
X_{ij}^m = \frac{Z_{ij}^m}{\pi_{ij}^m} - \frac{1-Z_{ij}^m}{1-\pi_{ij}^m}.
\]
We may then use the weighted difference-in-means statistic
\[
T(\bm{Z}_j^m \mid \mathcal{F}) = \sum_{i=1}^N Q_i(\beta_0) X_{ij}^m
\]
or the $F$-statistic in a linear regression of $Q_i(\beta_0)$ on
$Z_{ij}^m$ and $X_{ij}^m$. A simulation example in
\Cref{sec:simulation-test-statistics} shows that using this clever
covariate can increase the power of the test dramatically.

The idea of using a ``clever covariate'' is proposed in
\citet{scharfstein1999adjusting} and \citet{Rose2008} and is commonly
used in semiparametric estimators of the average treatment
effect. Heuristically, the clever covariate exploits
the so-called ``central role'' of the propensity score
\citep{Rosenbaum1983}: $Y_i(d) \independent Z_{ij}^m \mid \pi_{ij}^m$,
provided that $0 < \pi_{ij}^m < 1$. Thus, the propensity score $\pi_{ij}^m$ may be viewed as a
one-dimensional summary of the sufficient adjustment set $(\bm{M}_{ij}^{mf},
\bm{V}_{i\mathcal{B}})$ and is particularly convenient here because it
can be directly computed from a meiosis model.


\subsection{Simplification via Markovian
structure} \label{subsec:hmm-simplification}

Thus far, we have sidestepped the issue of choosing the adjustment set
$\bm V_{\mathcal{B}}$ and computing the propensity score. Conditional
independencies implied by Haldane's meiosis model allows us to greatly
simplify the sufficient confounder adjustment set. We explain this
below.

The conditions in \Cref{thm:sufficient-adjustment-set} are trivially
satisfied with $\mathcal{B} = \emptyset$ if $\mathcal{J}_y =
\emptyset$ and $\mathcal{J}_d \neq \emptyset$, i.e., all causal
variants of $Y$ on this chromosome only affect $Y$ through
$D$. However, this is a rather unlikely situation. More often, we
need to condition on some variants to block the pleiotropic paths
(such as $Z_3$ in the working example in \Cref{fig:fammr_dag}). To this
end, we can utilize the Markovian structure on the meiosis indicators
$\bm U^m$ and $\bm U^f$ implied by Haldane's model. Roughly speaking, such structure implies that
\[
  Z_j \independent Z_l \mid (\bm M_j^{mf}, \bm F_j^{mf}, \bm
  V_{k})~\text{for all}~j < k < l,
\]
if there are no mutations and mother's genotype at locus $k$ is
heterozygous (i.e.\ $M_k^f \neq M_k^m$).


More generally, let $b_1$ and $b_2$ ($b_1 < j < b_2$) be two
heterozygous loci in
the mother's genome, i.e., $M_{b_1}^f \neq M_{b_1}^m$ and
$M_{b_2}^f \neq M_{b_2}^m$. Let $\mathcal{A} =
\{b_1+1,\dotsc,b_2-1\}$ be the set of loci between $b_1$ and $b_2$,
which of course contains the locus $j$ of interest.

\begin{theorem} \label{thm:identification}
Consider the setting in \Cref{thm:sufficient-adjustment-set} and
suppose
\begin{enumerate}
\item The meiosis indicator process is a Markov chain so that $U^m_j
 \independent U^m_l \mid U^m_k$ for all $j < k < l$;
\item There are no mutations: $\epsilon = 0$.
\end{enumerate}
Then
$Z_j^m$ is a valid instrumental variable conditional on $(\bm
M_j^{mf}, \bm V^m_{\{b_1,b_2\}})$ if the following conditions are
true: 3.\ $\mathcal{A} \cap \mathcal{J}_d \neq \emptyset$; 4.\
$\mathcal{A} \cap \mathcal{J}_y = \emptyset$.
\end{theorem}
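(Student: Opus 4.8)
The plan is to verify the two hypotheses of Theorem~\ref{thm:sufficient-adjustment-set} with the adjustment set $\mathcal{B} = \{b_1, b_2\}$, namely relevance ($Z_j^m \notindependent \bm Z_d^m \mid (\bm M_j^{mf}, \bm V^m_{\{b_1,b_2\}}, S=1)$) and the pleiotropy-blocking condition ($Z_j^m \independent \bm Z_y^m \mid (\bm M_j^{mf}, \bm V^m_{\{b_1,b_2\}}, S=1)$). The key structural fact I would isolate first is a \emph{conditional decoupling across a heterozygous locus}: if locus $k$ is heterozygous in the mother ($M_k^m \neq M_k^f$) and $\epsilon = 0$, then observing $Z_k^m$ reveals $U_k^m$ exactly (since $Z_k^m = M_k^{(U_k^m)}$ is an injective function of $U_k^m$ in this case). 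Combined with the Markov property of $\bm U^m$ (hypothesis~1), conditioning on $\bm M^{mf}$ and $Z_k^m$ therefore renders the meiosis indicators to the left of $k$ independent of those to the right of $k$. Pushing this through the meiosis map $Z_\ell^m = M_\ell^{(U_\ell^m)}$ gives $\bm Z^m_{<k} \independent \bm Z^m_{>k} \mid (\bm M^{mf}, Z_k^m)$, which is the locus-$k$ version of the displayed informal statement; I would state and prove this as the main lemma of the argument, taking care that conditioning on $S=1$ does not disturb it (by Assumption~\ref{assum:no-trd}, $S$ is independent of the offspring haplotypes given the parental haplotypes, so it can be carried along harmlessly).

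Applying this lemma at both flanking heterozygous loci $b_1$ and $b_2$ — conditioning on $\bm M^{mf}$, $Z_{b_1}^m$, and $Z_{b_2}^m$ — decouples the haplotype block on the interval $\mathcal{A} = \{b_1+1, \dots, b_2-1\}$ from everything outside $[b_1, b_2]$. In particular, any pleiotropic variant $Z_\ell^m$ with $\ell \in \mathcal{J}_y$ lies outside $\mathcal{A}$ by hypothesis~4, hence outside $(b_1, b_2)$, so it sits on the far side of one of the two heterozygous ``cuts''; the lemma then yields $Z_j^m \independent Z_\ell^m \mid (\bm M^{mf}_j, \bm V^m_{\{b_1,b_2\}}, S=1)$ for every such $\ell$, and taking the conjunction over $\mathcal{J}_y$ gives condition~2 of Theorem~\ref{thm:sufficient-adjustment-set}. (One should be slightly careful: if $\mathcal{J}_y$ contains loci on both sides of the interval, one invokes the left cut at $b_1$ for the left ones and the right cut at $b_2$ for the right ones; the d-separation / Markov argument handles both simultaneously since conditioning on $Z_{b_1}^m$ and $Z_{b_2}^m$ blocks all meiotic paths through $\bm U^m$ that leave the interval.) For relevance, hypothesis~3 gives some $\ell^* \in \mathcal{A} \cap \mathcal{J}_d$; since $\ell^*$ is \emph{inside} the interval, it is on the \emph{same} side of both cuts as $j$, so the decoupling lemma does \emph{not} break the dependence between $Z_j^m$ and $Z_{\ell^*}^m$ that is transmitted along the path $Z_j^m \leftarrow U_j^m \to \cdots \to U_{\ell^*}^m \to Z_{\ell^*}^m$ (which stays within $(b_1,b_2)$ and hence is not blocked) — this supplies condition~1. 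Invoking Theorem~\ref{thm:sufficient-adjustment-set} then completes the proof.

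I expect the main obstacle to be the relevance claim (condition~1), because it is a \emph{non}-independence assertion and d-separation only certifies independence; one has to argue that the relevant path is genuinely active (not blocked by $\bm M^{mf}_j$, $\bm M^{mf}_{\{b_1,b_2\}}$, or $Z^m_{\{b_1,b_2\}}$) \emph{and} that it is not cancelled by some other path — under the faithfulness assumption stated earlier in the paper this reduces to checking d-connection, so the real work is just a careful path analysis showing the chain $U_j^m \to \dots \to U_{\ell^*}^m$ remains d-connected after conditioning on the chosen set (it does, since none of the conditioning variables lies strictly between $j$ and $\ell^*$ on the meiosis chain, or if one does, say $b_1$ or $b_2$, then by construction $\ell^* \in (b_1,b_2)$ forces $j$ and $\ell^*$ to the same component). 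A secondary, more bookkeeping-heavy point is making the ``reveal $U_k^m$ from $Z_k^m$'' step fully rigorous when $\epsilon = 0$ but otherwise general — i.e.\ verifying that heterozygosity of the mother at $k$ is exactly what makes $Z_k^m \mapsto U_k^m$ invertible — and then lifting the resulting $U$-level Markov cut to the observable $Z$-level statement cleanly; this is routine but should be written out so the role of each hypothesis is transparent.
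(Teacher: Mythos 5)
Your proposal is correct and follows essentially the same route as the paper's own proof: the no-mutation assumption plus heterozygosity at $b_1,b_2$ lets $U^m_{b_1},U^m_{b_2}$ be recovered exactly from $\bm V^m_{\{b_1,b_2\}}$, the Markov property then decouples the interval $\mathcal{A}$ from the loci outside it, and conditions 3 and 4 deliver the two hypotheses of \Cref{thm:sufficient-adjustment-set}. Your added care about faithfulness for the relevance (non-independence) claim and about carrying $S=1$ through via \Cref{assum:no-trd} only makes explicit what the paper leaves implicit.
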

\begin{proof}
Because there are no mutations and $M_{b_1}$ and $M_{b_2}$ are
heterozygous, we can uniquely determine $U^m_{b_1}$ and $U^m_{b_2}$
from $\bm V_{\{b_1,b_2\}}^m$. By the assumed Markovian structure,
this means that
\[
 Z_j^m \independent Z_l^m \mid \bm M_j^{mf}, \bm
 V^m_{\{b_1,b_2\}}~\text{for all}~j < b_1~\text{or}~j > b_2.
\]
Thus, the last two conditions in \Cref{thm:identification} imply the
two conditions in \Cref{thm:sufficient-adjustment-set}.
\end{proof}

One can easily mirror the above result for using the paternal haplotype $Z_j^f$ as an instrument variable. Furthermore, let $b_1'$ and $b_2'$ ($b_1' < j < b_2'$) be two heterozygous loci in the father's genome. Then it is easy to see that $Z_j = Z_j^m + Z_j^f$ is a valid instrument conditional on $(\bm
M_j^{mf}, \bm F_j^{mf}, \bm V^m_{\{b_1,b_2\}},\bm
V^f_{\{b_1',b_2'\}})$ if the last two conditions hold for the
union $\mathcal{A} = \{\min(b_1,
b_1')+1,\dotsc,\max(b_2,b_2')-1\}$.


Under the setting in \Cref{thm:identification}, we can partition
the offspring genome into mutually independent subsets by conditioning
on heterozygous parental genotypes. This partition is useful for
constructing independent p-values when we have multiple
instruments. Suppose we have a collection of genomic position
$\mathcal{B} = \{b_1, \ldots, b_k\}$ that will be conditioned on and
let $\mathcal{A}_k = \{b_{k-1}+1, \ldots, b_k-1 \}$ be the loci in
between (suppose $b_0 = 0$ and $b_{k+1} = p+1$). This indues the
following partition of the chromosome:
\begin{equation}
  \label{eq:genome-partition}
  \mathcal{J} = \mathcal{A}_1 \cup \{b_1\} \cup \mathcal{A}_2 \cup
  \{b_2\} \cup  \ldots \cup \mathcal{A}_k \cup \{b_k\} \cup
  \mathcal{A}_{k+1}.
\end{equation}

\begin{proposition} \label{prop:independent-instruments}
Suppose $M_j^m \neq M_j^f$ for all $j \in \mathcal{B}$. Then, under
the first two assumptions in \Cref{thm:identification}, we have
\[
    Z_j^m \independent Z_{j^{\prime}}^m \mid (\bm{M}_j^{mf}, \bm{M}_{j^{\prime}}^{mf}, \bm{V}^m_{\mathcal{B}}).
\]
for any $j \in \mathcal{A}_{l}$ and $j^{\prime} \in \mathcal{A}_{l^{\prime}}$ such that $l \neq l^{\prime}$.
\end{proposition}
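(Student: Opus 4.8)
The plan is to deduce \Cref{prop:independent-instruments} from the Markov-chain property of the meiosis-indicator process, using the no-mutation hypothesis to translate conditioning on the \emph{observed} haplotypes $\bm Z_{\mathcal B}^m$ into conditioning on the \emph{latent} ancestry indicators $\{U_b^m : b \in \mathcal B\}$, and then invoking separation in the chain.

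First I would record the deterministic consequence of $\epsilon = 0$ in \Cref{assum:de-novo-distribution}: for every locus $i$ one has $Z_i^m = M_i^{(U_i^m)}$, which, given $\bm M_i^{mf} = (M_i^m, M_i^f)$, is a deterministic function of $U_i^m$ alone. At a locus $b$ that is heterozygous in the mother ($M_b^m \neq M_b^f$), the map $u \mapsto M_b^{(u)}$ is a bijection from $\{m,f\}$ onto $\{M_b^m, M_b^f\}$, so there is a function $\phi_b$ with $U_b^m = \phi_b(Z_b^m; M_b^m, M_b^f)$. Since $M_j^m \neq M_j^f$ for all $j \in \mathcal B$ by hypothesis, this shows that $(\bm M_{\mathcal B}^{mf}, \bm Z_{\mathcal B}^m)$ and $(\bm M_{\mathcal B}^{mf}, (U_b^m)_{b \in \mathcal B})$ generate the same $\sigma$-algebra; in particular, conditioning on the right-hand side of the claimed independence is the same event as conditioning on $(\bm M_j^{mf}, \bm M_{j'}^{mf}, \bm M_{\mathcal B}^{mf}, (U_b^m)_{b\in\mathcal B})$.

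Next I would isolate a separating locus. Taking $l < l'$ without loss of generality, the definitions $\mathcal A_l = \{b_{l-1}+1,\dotsc,b_l-1\}$ and $\mathcal A_{l'} = \{b_{l'-1}+1,\dotsc,b_{l'}-1\}$ give $j < b_l \le b_{l'-1} < j'$, so $b_l \in \mathcal B$ lies strictly between $j$ and $j'$. Viewing the Markov chain $U_1^m \to \dotsb \to U_p^m$ as a DAG, the unique trail joining $U_j^m$ and $U_{j'}^m$ runs along the chain and is blocked by the conditioned non-collider $U_{b_l}^m$, while the remaining conditioned nodes $U_b^m$ (for $b \in \mathcal B$) open no new trail; by the global Markov property implied by assumption 1 of \Cref{thm:identification} we obtain $U_j^m \independent U_{j'}^m \mid (U_b^m)_{b\in\mathcal B}$.

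Finally I would assemble the pieces. By \Cref{assum:meiosis-indicator-independence} the process $\bm U^m$ is independent of all parental haplotypes, so adjoining $\bm M_j^{mf}, \bm M_{j'}^{mf}, \bm M_{\mathcal B}^{mf}$ to the conditioning set leaves the conditional law of $(U_j^m, U_{j'}^m)$ unchanged, giving $U_j^m \independent U_{j'}^m$ conditionally on $(\bm M_j^{mf}, \bm M_{j'}^{mf}, \bm M_{\mathcal B}^{mf}, (U_b^m)_{b\in\mathcal B})$, which by the first step is the same conditioning event as in the proposition. Since, given $\bm M_j^{mf}$, $Z_j^m$ is a deterministic function of $U_j^m$, and likewise for $Z_{j'}^m$, applying functions to conditionally independent random variables yields $Z_j^m \independent Z_{j'}^m \mid (\bm M_j^{mf}, \bm M_{j'}^{mf}, \bm V_{\mathcal B}^m)$. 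I expect the delicate point to be the second step: one must be careful that ``$\bm Z_{\mathcal B}^m$ pins down $(U_b^m)_{b\in\mathcal B}$'' holds precisely \emph{because} every conditioned locus is heterozygous and $\epsilon = 0$, and that the conditioning set — which contains $U_b^m$'s lying before, between, and after $j$ and $j'$ — still leaves $U_j^m$ and $U_{j'}^m$ separated; this is where one needs the full chain (global) Markov property rather than the bare three-variable statement, and where \Cref{assum:meiosis-indicator-independence} quietly ensures that conditioning on parental haplotypes does not perturb the Markov structure of $\bm U^m$.
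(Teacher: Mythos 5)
Your proposal is correct and follows essentially the same route as the paper's own (much terser) proof: use $\epsilon=0$ plus heterozygosity at every $b\in\mathcal{B}$ to recover $U_b^m$ from $(\bm M_b^{mf}, Z_b^m)$, then let the Markov structure of $\bm U^m$ separate $U_j^m$ from $U_{j'}^m$ at the intervening conditioned locus, and push the independence through the deterministic maps $U_j^m \mapsto Z_j^m$. Your explicit handling of the bijection at heterozygous loci, the role of \Cref{assum:meiosis-indicator-independence} in adjoining parental haplotypes, and the need for the global rather than pairwise Markov property are all details the paper leaves implicit, but they do not constitute a different argument.
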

\begin{proof}
  The proof follows from an almost identical argument to
  \Cref{thm:sufficient-adjustment-set}. The assumption that $\epsilon
  = 0$ means that $U_j^m$ is uniquely determined for all $j \in
  \mathcal{B}$ from $\bm{M}^{mf}_j$ and $Z^m_j$. Therefore the assumed
  Markovian structure implies that conditioning on
  $\bm{V}_{\mathcal{B}}^m$, along with the parental haplotypes
  $\bm{M}_j^{mf}$ and $\bm{M}_{j^{\prime}}^{mf}$, then induces the
  conditional independence.
\end{proof}

\subsection{Multiple instruments} \label{subsec:multiple-ivs}

\Cref{prop:independent-instruments} allows us to formalize the
intuition that genetic instruments across the genome may provide
independent pieces of evidence.
\begin{corollary} \label{corr:independent-tests}
  Suppose $j \in \mathcal{A}_{l}$, $j^{\prime} \in
  \mathcal{A}_{l^{\prime}}$, and $l \neq l^{\prime}$.
  Then $Z_j^m$ and $Z_{j^{\prime}}^m$ are independent, valid
  instruments given $(\bm{M}_j^{mf},
  \bm{M}_{j^{\prime}}^{mf}, \bm{V}^m_{\mathcal{B}})$ if
\begin{enumerate}
\item The first two assumptions of \Cref{thm:identification} hold;
\item $\mathcal{A}_l \cap \mathcal{J}_d \neq 0$ and $\mathcal{A}_{l^{\prime}} \cap \mathcal{J}_d \neq 0$;
\item $\mathcal{A}_l \cap \mathcal{J}_y = 0$ and $\mathcal{A}_{l^{\prime}} \cap \mathcal{J}_y = 0$.
\end{enumerate}

\end{corollary}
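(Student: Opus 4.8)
The plan is to derive the corollary by combining, for each of the two loci, the validity statement of \Cref{thm:identification} with the conditional-independence statement of \Cref{prop:independent-instruments}. Write $b_{l-1}$ and $b_{l}$ for the two members of $\mathcal B$ that bracket the block $\mathcal A_{l}$, so that $\mathcal A_{l}=\{b_{l-1}+1,\dots,b_{l}-1\}$, and likewise $b_{l'-1},b_{l'}$ for $\mathcal A_{l'}$; all four are heterozygous in the mother because they lie in $\mathcal B$. Condition 1 of the corollary is exactly the first two hypotheses of \Cref{thm:identification}, and conditions 2 and 3 are precisely hypotheses 3 and 4 of that theorem applied with $\mathcal A=\mathcal A_{l}$ (respectively $\mathcal A=\mathcal A_{l'}$). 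Hence \Cref{thm:identification} already tells us that $Z_j^m$ is a valid instrument given $(\bm M_j^{mf},\bm V^m_{\{b_{l-1},b_{l}\}})$ and $Z_{j'}^m$ is a valid instrument given $(\bm M_{j'}^{mf},\bm V^m_{\{b_{l'-1},b_{l'}\}})$, and \Cref{prop:independent-instruments} gives $Z_j^m\independent Z_{j'}^m\mid(\bm M_j^{mf},\bm M_{j'}^{mf},\bm V^m_{\mathcal B})$.

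The step that requires genuine work is to upgrade the two validity statements from their respective two-bracket adjustment sets to the \emph{common, larger} set $\bm W:=(\bm M_j^{mf},\bm M_{j'}^{mf},\bm V^m_{\mathcal B})$. I would run the same d-separation argument used in \Cref{thm:sufficient-adjustment-set,thm:identification}, now in the enlarged conditioning set, and lean on the block-independence already exploited in the proof of \Cref{prop:independent-instruments}: since $\epsilon=0$ and every $b\in\mathcal B$ is heterozygous, each $U_b^m$ is a deterministic function of $(\bm M_b^{mf},Z_b^m)\subseteq\bm V^m_{\mathcal B}$, and the Markov property of $\bm U^m$ then makes the within-block offspring haplotypes of distinct blocks conditionally independent given $\bm W$. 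The extra variables in $\bm W$ beyond $(\bm M_j^{mf},\bm V^m_{\{b_{l-1},b_{l}\}})$ all sit in blocks other than $\mathcal A_{l}$ (or at boundary loci outside $[b_{l-1},b_{l}]$); conditioning on a boundary haplotype $Z_b^m$ does open the collider $\bm U^m\rightarrow Z_b^m\leftarrow\bm M_b^{mf}$, but the continuation is immediately re-blocked because $\bm M_b^{mf}$ is itself conditioned, and the colliders $\bm U^m\rightarrow Z_k^m\leftarrow\bm M_k^{mf}$ for $k$ in another block stay closed since neither $Z_k^m$ nor any descendant of it is conditioned. Thus no new active path from $Z_j^m$ to $Y(d)$ is created, so exogeneity survives; and the relevance path $Z_j^m\leftarrow\bm U^m\rightarrow Z_k^m\rightarrow Z_k\rightarrow D$ for some $k\in\mathcal A_{l}\cap\mathcal J_d$ (nonempty by condition 2) stays inside $\mathcal A_{l}$ and passes through no conditioned node, so relevance survives. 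The symmetric argument handles $Z_{j'}^m$.

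Combining: both $Z_j^m$ and $Z_{j'}^m$ are valid instruments given $\bm W$, and \Cref{prop:independent-instruments} supplies $Z_j^m\independent Z_{j'}^m\mid\bm W$, which is the conclusion. I expect the middle paragraph to be the main obstacle — specifically, the careful check that enlarging the adjustment set neither opens a backdoor/pleiotropic/collider path (including anything of the ``nearly determined ancestry'' flavour or anything routed through the newly conditioned $Z_b^m$, $b\in\mathcal B$) nor blocks the relevance path; once the block-independence consequence of the Markov assumption is stated cleanly, this reduces to the same bookkeeping already done for \Cref{thm:sufficient-adjustment-set,thm:identification}. One loose end worth a remark is the end blocks $\mathcal A_1$ and $\mathcal A_{k+1}$, which have only one bracketing heterozygous locus; validity there either needs a one-line separate argument (there is simply nothing to block on the missing side) or the statement should be read as applying to interior blocks.
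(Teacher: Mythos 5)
Your proposal is correct and follows the route the paper intends: the corollary is stated without proof, as an immediate combination of \Cref{thm:identification} (validity of each instrument given its own bracketing pair) with \Cref{prop:independent-instruments} (conditional independence across blocks). Your middle paragraph---verifying that enlarging the adjustment set from $\bm V^m_{\{b_{l-1},b_l\}}$ to the common set $\bm V^m_{\mathcal{B}}$ neither opens a new active path to $Y(d)$ (via the colliders at the newly conditioned boundary haplotypes) nor blocks the within-block relevance path---supplies a detail the paper leaves entirely implicit, as does your caveat about the end blocks $\mathcal{A}_1$ and $\mathcal{A}_{k+1}$; both points are worth recording but do not change the argument.
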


\Cref{corr:independent-tests} says that any two instruments are valid
and independent if they lie in separate regions in the partition
\eqref{eq:genome-partition} and each region contains at least one causal
variant of the exposure and no pleiotropic variants (i.e.\ variants with a
direct effect on $Y$ not mediated by $D$).

As a direct application of this corollary, we can use standard
procedures to combine randomization p-values obtained from different
genetic instruments and test the null hypothesis
\citep[see e.g.][]{bretz16_multip_compar_using_r}. One such procedure
is Fisher's method \citep{Fisher1925}: let $\{ p_1, p_2, \ldots, p_k
\}$ be a collection of independent p-values, then $-2 \sum_{j = 1}^k
\log(p_j) \sim \chi_{2k}^2$ when all of the corresponding null
hypotheses are true. We will use this method to aggregate p-values in
the applied example in \Cref{sec:applied-example}.

As some instruments may violate the exclusion restriction, a more
robust approach is to test the partial conjunction of the null hypotheses
\citep{benjamini2008screening,Wang2019}; loosely speaking, this means
that we reject the causal null hypothesis only if quite a few genetic
instruments appear to provide evidence against it.

As a final remark, it may be impossible in practice to separate
closely linked instruments into partitions
separated by a heterozygous variant, in which case the hypothesis
\eqref{eqn:hypothesis-adjusted-outcome} can be tested using $(Z_j^m, Z_{j^{\prime}}^m)$
jointly. \Cref{cor:multiple-instruments} in \Cref{sec:technical-proofs} derives the joint randomization distribution of a collection of instruments.

\section{Simulation} \label{sec:simulation}

\subsection{Setup and illustration} \label{sec:simulation-setup}

In this section we explore the properties of our almost exact test via
a numerical simulation. The set up of the simulation is described in
detail in \Cref{sec:simulation-description}. To summarize, we generate
a dataset of 15,000 parent-offspring trios with
a null effect of an exposure on an outcome (i.e. $\beta = 0$), both of
which have variance one, and consider 5 genetic instruments on a
chromosome with $p=150$ loci.  
The instruments are non-causal markers for nearby causal
variants and there are also pleiotropic variants in linkage
disequilibrium with the instruments.

 \subsection{Power} \label{sec:simulation-test-statistics}

 We now use the simulation to study the power of the randomization
 test, assuming a correct adjustment set is used (see
\Cref{eqn:sim-adjustment-set} in
\Cref{sec:simulation-description}). As the haplotypes are simulated
according to Haldane's meiosis model, the randomization test should be
exact. This is verified by the near-uniform distributions of the
p-values for testing the correct null hypothesis $H_0:\beta = 0$ with
three different test statistics in the top panels of
\Cref{fig:p-value-histogram}.

The histograms in the bottom panels of \Cref{fig:p-value-histogram}
depict the distribution of p-values for a test of a false null hypothesis $H_0: \beta = 0.5$. The power of the test varies
significantly according to the choices of test statistic. The simple
$F$-statistic based on a linear regression of
the adjusted outcome on the instruments (test statistic 1) has almost
no power, while the test statistic obtained from the same model but with the propensity score included as a clever covariate (test statistic 2) has a reasonable power of about 0.52.

\Cref{fig:power-curves} expands upon the previous figure by plotting a
power curve for each test statistic. We can see that test statistic
1 has almost no power between $\beta_0 = 0$ and $\beta_0 =
1$. Since test statistic 1 is unconditional on the adjustment set, resampled offspring haplotypes retain their correlation with the confounders via the parental haplotypes. This can cause under-rejection of false null hypotheses around the unconditional instrumental variable estimator. In this simulation, the Anderson-Rubin 95\% confidence interval is 0.64--0.89, which aligns with the region of under-rejection.

Test statistic 2, on the other hand, conditions on the confounders via a clever covariate. It has a power curve that is centred
on the true null $\beta_0 = 0$ and has significantly improved power in
the region between $\beta_0 = 0$ and $\beta_0 = 1$. However, it should be noted that test statistic 2 is not uniformly more
powerful than test statistic 1.

 \begin{figure*}[p]
   \centering
   \begin{subfigure}[b]{0.32\textwidth}
     \centering
     \includegraphics[width=\textwidth]{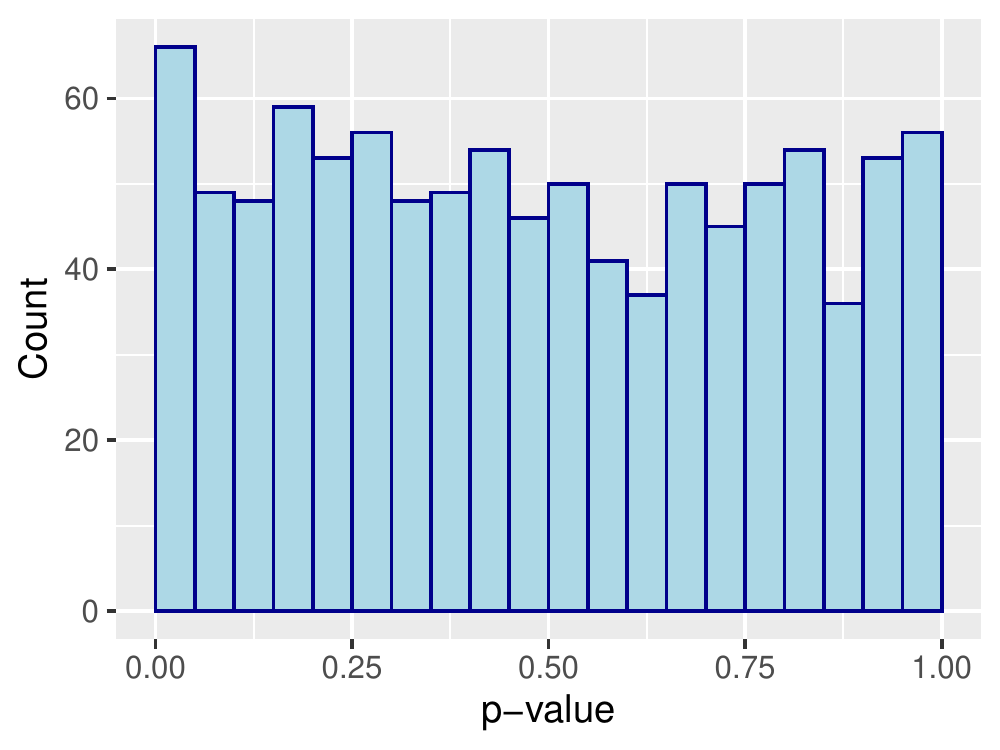}
     \caption{$H_0: \beta = 0$ (true) and test statistic 1.}
     \label{fig:null-1-test-statistic-1}
   \end{subfigure}
   ~
   \begin{subfigure}[b]{0.32\textwidth}
     \centering
     \includegraphics[width=\textwidth]{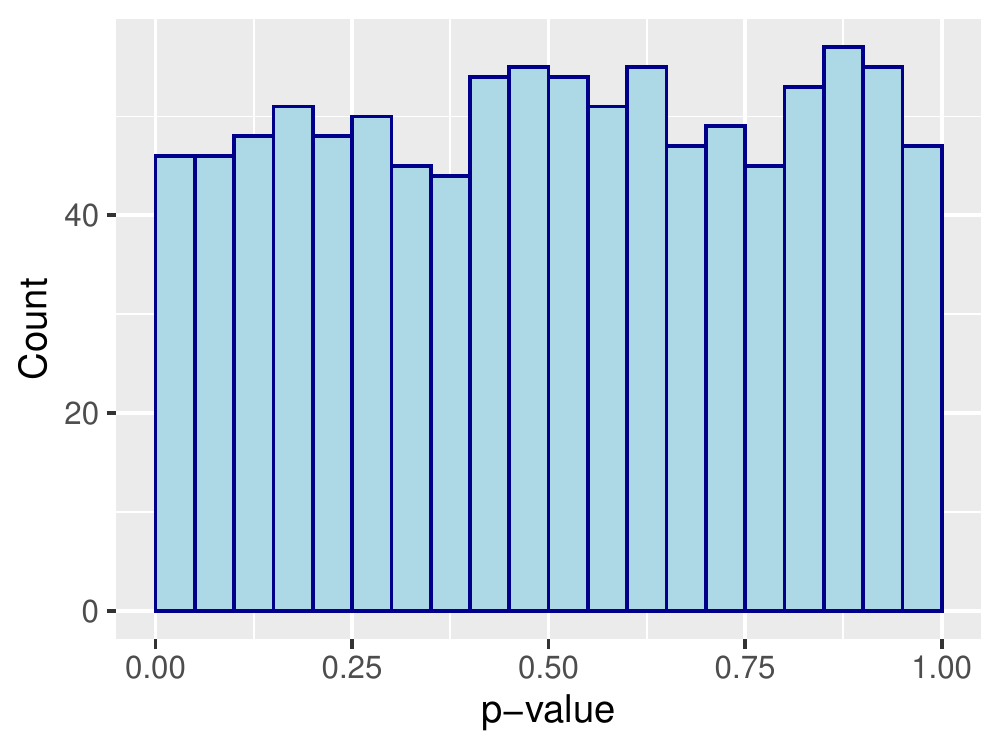}
     \caption{$H_0: \beta = 0$ (true) and test statistic 2.}
     \label{fig:null-1-test-statistic-2}
   \end{subfigure}
   ~
   \begin{subfigure}[b]{0.32\textwidth}
     \centering
     \includegraphics[width=\textwidth]{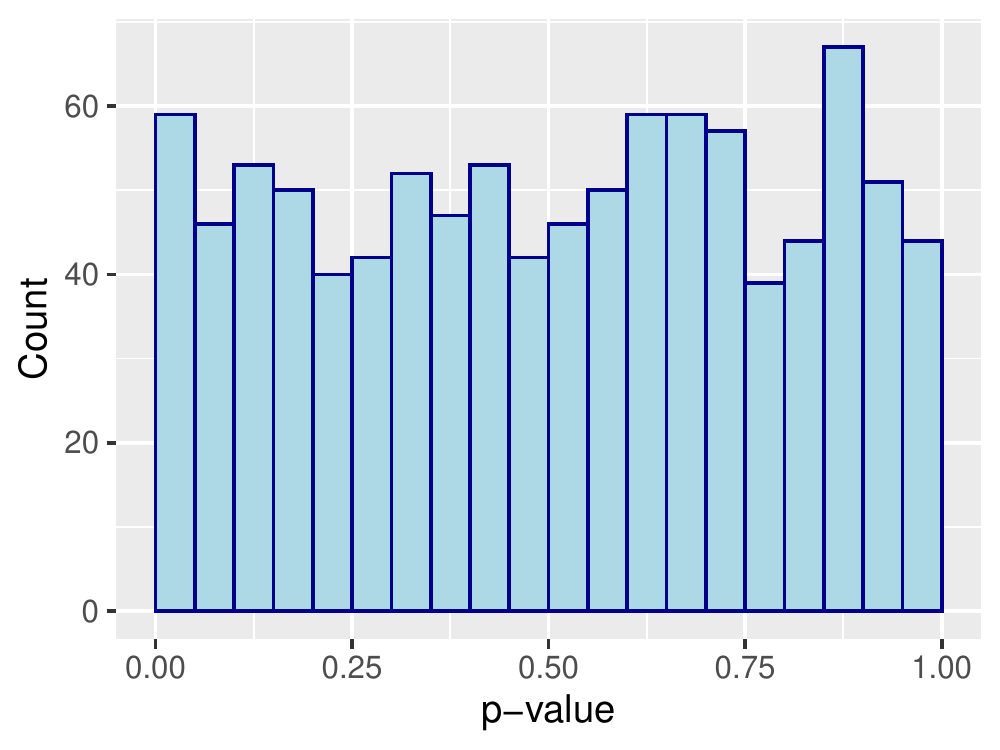}
     \caption{$H_0: \beta = 0$ (true) and test statistic 3.}
     \label{fig:null-1-test-statistic-3}
   \end{subfigure}

   \par\bigskip

   \begin{subfigure}[b]{0.32\textwidth}
     \centering
     \includegraphics[width=\textwidth]{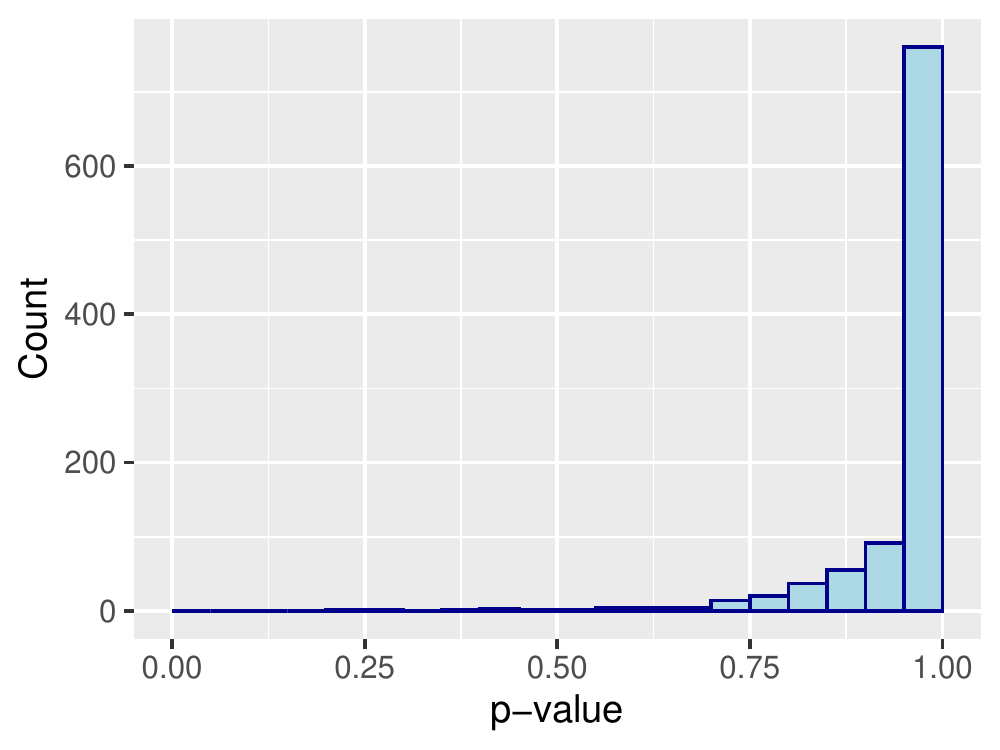}
     \caption{$H_0: \beta = 0.5$ (false) and test statistic 1.}
     \label{fig:null-2-test-statistic-1}
   \end{subfigure}%
   ~
   \begin{subfigure}[b]{0.32\textwidth}
     \centering
     \includegraphics[width=\textwidth]{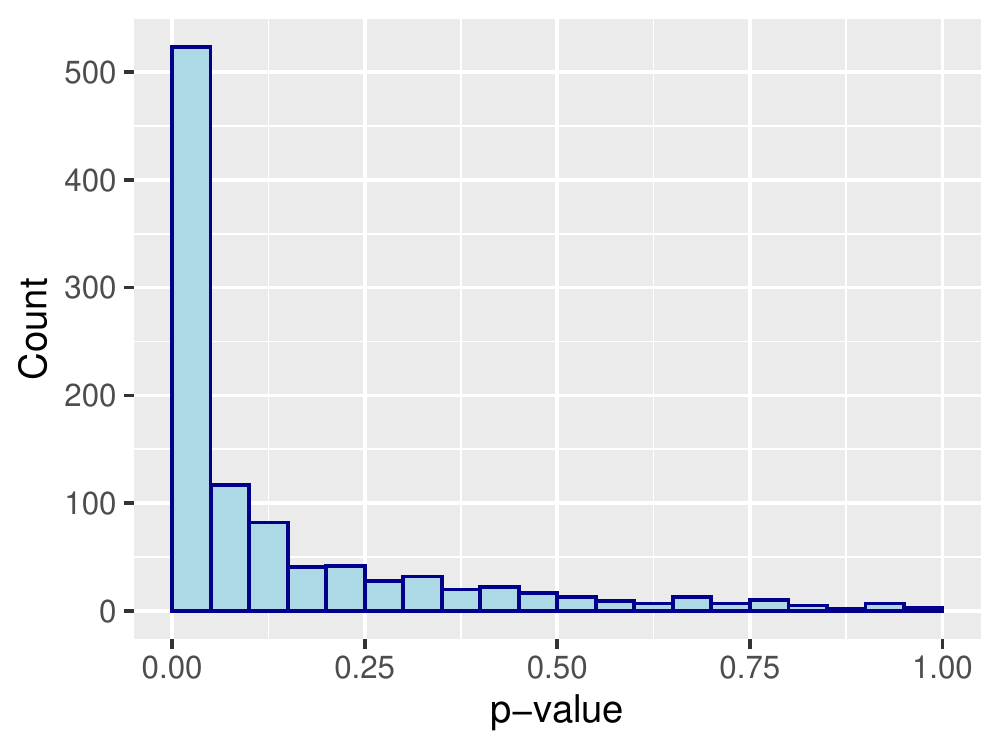}
     \caption{$H_0: \beta = 0.5$ (false) and test statistic 2.}
     \label{fig:null-2-test-statistic-2}
   \end{subfigure}%
   ~
   \begin{subfigure}[b]{0.32\textwidth}
     \centering
     \includegraphics[width=\textwidth]{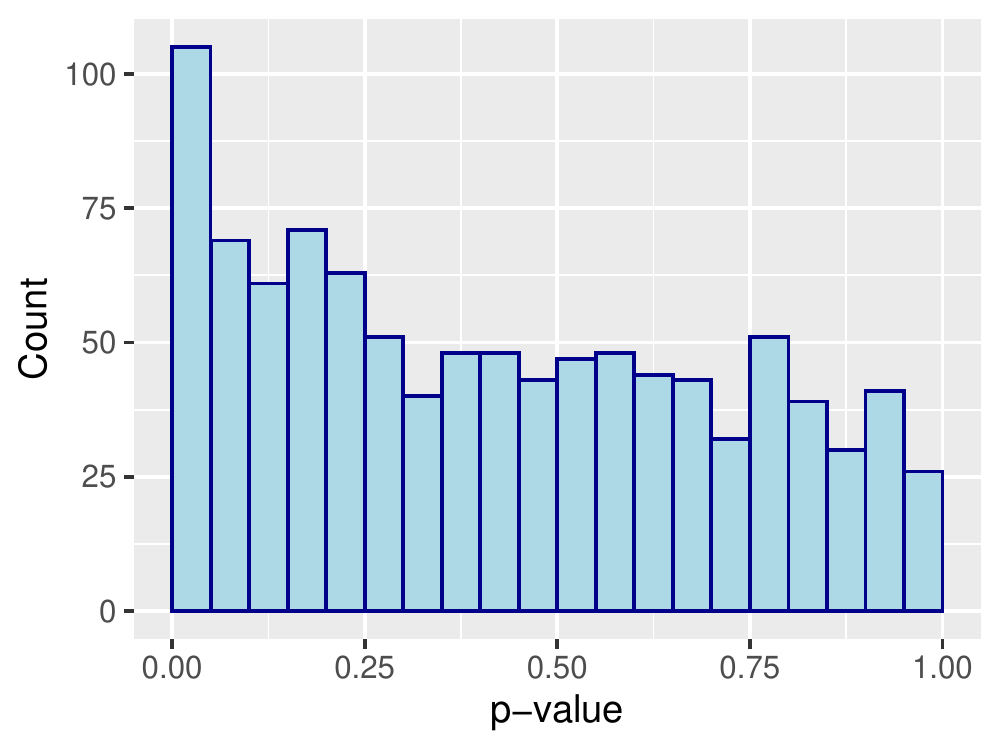}
     \caption{$H_0: \beta = 0.5$ (false) and test statistic 3.}
     \label{fig:null-2-test-statistic-3}
   \end{subfigure}%
   \caption{Histograms of 1,000 p-values for several null hypotheses
     and test statistics. Test statistic 1 is the F-statistic from a linear regression of the
       adjusted outcome on the instruments. Test statistic 2 includes
       the propensity scores for each instrument
       as covariates. Test statistic 3 includes the parental haplotypes as covariates.} \label{fig:p-value-histogram}
 \end{figure*}

  \begin{figure*}[hp]
   \begin{center}
     \includegraphics[width = 0.9\textwidth]{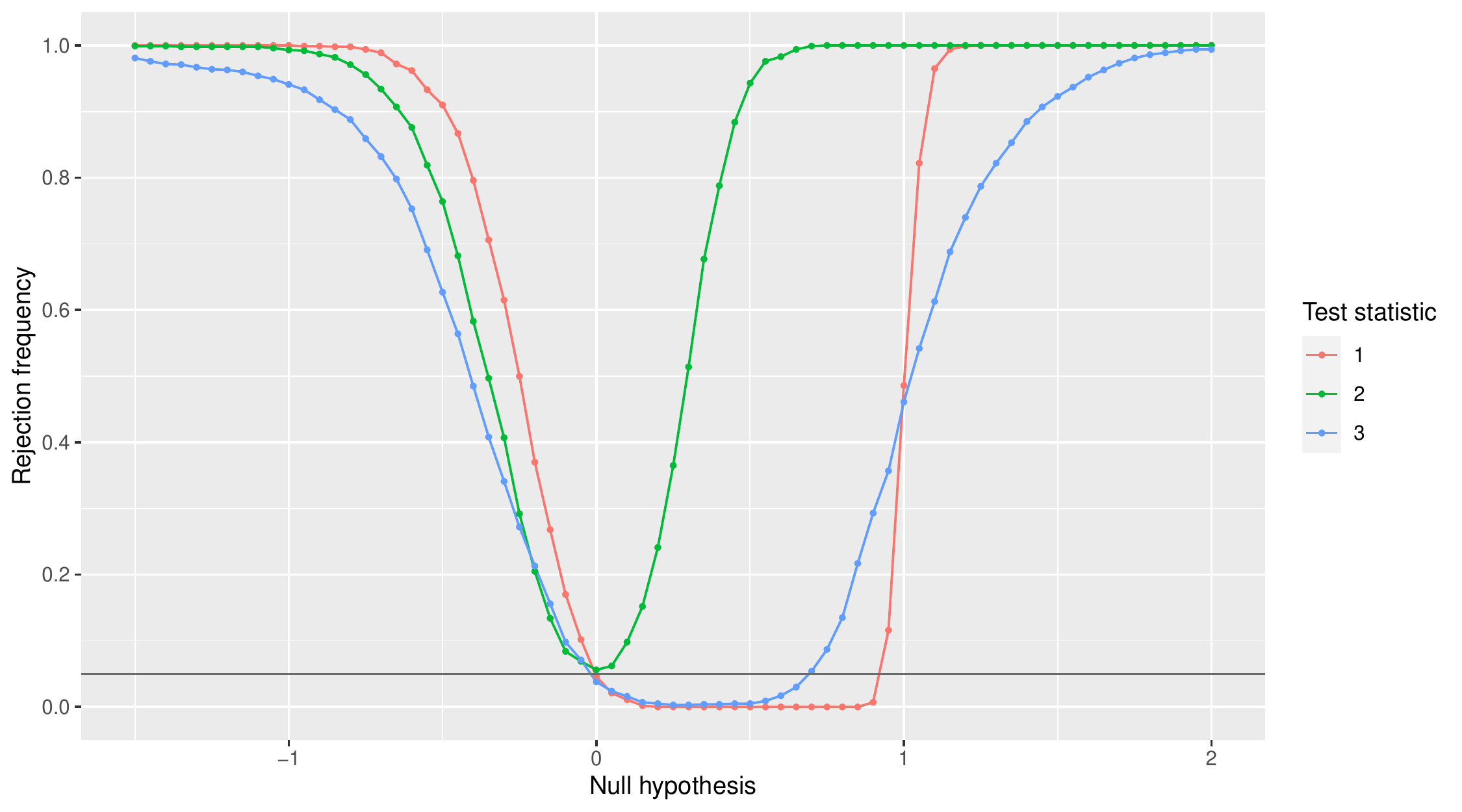}
     \caption{Power curves for the three choices of test statistic. Test statistic 1 is the F-statistic from a linear regression of the
       adjusted outcome on the instruments. Test statistic 2 includes
       the propensity scores for each instrument
       as covariates. Test statistic 3 includes the parental haplotypes as covariates. Each point on the figure is the rejection
       frequency over 1,000 replications.}
     \label{fig:power-curves}
   \end{center}
 \end{figure*}


 \section{Applied example} \label{sec:applied-example}
 \subsection{Preliminaries} \label{sec:applied-example-preliminaries}
We illustrate our approach with a pair of negative and positive
controls using the Avon Longitudinal Study of Parents and Children
(ALSPAC). Our dataset consists of 6,222 mother-child duos from ALSPAC,
a longitudinal cohort initially comprising pregnant
women resident in Avon, UK with expected dates of delivery from 1
April 1991 to 31 December 1992. The initial sample consisted of 14,676
fetuses, resulting in 14,062 live births and 13,988 children who were
alive at 1 year of age. In subsequent years, mothers, children and
occasionally partners attended several waves of questionnaires and
clinic visits, including genotyping. For a more thorough cohort
description, see \cite{Boyd2013} and \cite{Fraser2013}.\footnote{Please note
that the study website contains details of all the data that is
available through a fully searchable data dictionary and variable
search tool
(\url{https://www.bristol.ac.uk/alspac/researchers/our-data/}).}

The negative control is the effect of child's BMI at age 7 on mother's
BMI pre-pregnancy. Dynastic effects, as depicted in
\Cref{fig:dynasticeffects}, could induce a spurious correlation
between child's BMI-associated variants and their mother's BMI
pre-pregnancy. Blocking this backdoor path is crucial for reliable
causal inference. The positive control is the effect of child's BMI at
age 7 on a simulated, noisy version of itself. We vary the proportion
of the outcome that is attributable to noise to assess the power of
our test.

  \subsection{Data processing} \label{sec:applied-example-data-processing}

We use ALSPAC genotype data generated using the Illumina HumanHap550 chip (for children) and Illumina human660W chip (for mothers) and imputed to the 1000 Genomes reference panel. We remove SNPs with missingness of more than 5\% and minor allele frequency of less than 1\%. Haplotypes are phased using the \texttt{SHAPEIT2} software with the \texttt{duoHMM} flag, which ensures that phased haplotypes are consistent with known pedigrees in the sample. We obtain recombination probabilities from the 1000 Genomes genetic map file on Genome Reference Consortium Human Build 37.

 Our instruments are selected from the genome-wide association study
 (GWAS) of \cite{Vogelezang2020}, which identifies 25 genetic variants
 for childhood BMI, including 2 novel loci located close to
 \emph{NEDD4L} and \emph{SLC45A3}. Of the genome-wide significant
 variants in the discovery sample, we select 11 with a p-value of less
 than $0.001$ in the replication sample. ALSPAC is included in the
 discovery sample, so independent replication is important for
 avoiding spurious associations with the exposure. Two of our
 instruments, rs571312 and rs76227980, are located close together near
 \emph{MC4R} and need to be tested jointly. We exclude rs62107261
 because it is not contained in the 1000 Genomes genetic map
 file. We condition on all variants outside of a 500 kilobase window around each instrument.

 \subsection{Results} \label{sec:applied-example-results}

 \Cref{tab:negative-control-results,tab:positive-control-results} show
 results for the negative and positive controls, respectively. The
 last row of each table shows the p-value from Fisher's method
 aggregated across all independent p-values. The aggregated p-value
 for the negative control is 0.21, indicating little evidence against
 the null. The aggregated p-values for the positive control range from
 0.03 (when 10\% of the variance of the simulated outcome is noise) to 0.16 (when 50\%
 of the variance of the simulated outcome is noise). This indicates weak evidence
 against the null even when the effect is quite strong. 

  \begin{table}[t]
   \caption{Results from the ALSPAC negative control example.}
   \begin{center}
     \begin{tabular}{lllc}
       \toprule
       Instrument (rsID) & Chromosome & Proximal gene & P-value \\
       \midrule
       rs11676272 & 2 & \emph{ADCY3} & 0.45 \\
       rs7138803 & 12 & \emph{BCDIN3D} & 0.55 \\
       rs939584 & 2 & \emph{TMEM18} & 0.39 \\
       rs17817449 & 16 & \emph{FTO} & 0.06 \\
       rs12042908 & 1 & \emph{TNNI3K} & 0.35 \\
       rs543874 & 1 & \emph{SEC16B} & 0.07 \\
       rs56133711 & 11 & \emph{BDNF} & 0.59 \\
       rs571312, rs76227980 & 18 & \emph{MC4R} & 0.48 \\
       rs12641981 & 4 & \emph{GNPDA2} & 0.62 \\
       rs1094647 & 1 & \emph{SLC45A3} & 0.19 \\
       \midrule
       \multicolumn{3}{l}{Fisher's method} & 0.21 \\
       \bottomrule
     \end{tabular}
     \label{tab:negative-control-results}
   \end{center}
 \end{table}
   \begin{table}[t]
   \caption{Results from the ALSPAC positive control example. (Chr.\ =
     chromosome)}
   \begin{center}
     \begin{tabular}{lllccc}
       \toprule
       Instrument (rsID) & Chr.\ & Gene &  \multicolumn{3}{c}{P-value for noise of} \\
       & & & 10\% & 20\% & 50\% \\
       \midrule
       rs11676272 & 2 & \emph{ADCY3} & 0.01 & 0.01 & 0.01 \\
       rs7138803 & 12 & \emph{BCDIN3D} & 0.01 & 0.01 & 0.01 \\
       rs939584 & 2 & \emph{TMEM18} & 0.98 & 0.95 & 0.88 \\
       rs17817449 & 16 & \emph{FTO} & 0.33 & 0.35 & 0.44 \\
       rs12042908 & 1 & \emph{TNNI3K} & 0.77 & 0.79 & 0.85 \\
       rs543874 & 1 & \emph{SEC16B} & 0.48 & 0.64 & 0.92 \\
       rs56133711 & 11 & \emph{BDNF} & 0.12 & 0.14 & 0.25 \\
       rs571312, rs76227980 & 18 & \emph{MC4R} & 0.31 & 0.39 & 0.63 \\
       rs12641981 & 4 & \emph{GNPDA2} & 0.49 & 0.56 & 0.76 \\
       rs1094647 & 1 & \emph{SLC45A3} & 0.23 & 0.25 & 0.35 \\
       \midrule
       \multicolumn{3}{l}{Fisher's method} & 0.03 & 0.05 & 0.16 \\
       \bottomrule
     \end{tabular}
     \label{tab:positive-control-results}
   \end{center}
 \end{table}

 We can also compare the results in Tables
 \ref{tab:negative-control-results} and
 \ref{tab:positive-control-results} with per-instrument p-values obtained from two-stage least squares (2SLS) using the same offspring haplotypes as instruments, unconditional on parental or other offspring
 haplotypes. For the negative control, the p-value from Fisher's
 method is 0.02, indicating some evidence against the null. This is
 expected, given that the backdoor paths remain unblocked. For the positive control, the
 p-values from Fisher's method range from less than $10^{-20}$ (when
 10\% of the variance of the simulated outcome is noise) to $4.5 \times 10^{-11}$
 (when 50\% of the variance of the simulated outcome is noise). This indicates that
 the unconditional analysis has significantly more power to detect
 non-zero effects compared to our ``almost exact'' test.  We discuss
 potential reasons for, and implications of, this low power in
 \Cref{sec:discussion}

 \section{Discussion} \label{sec:discussion}

 We have presented an almost exact approach to within-family MR, which
 has a number of conceptual and practical advantages over model-based
 approaches to MR using population GWAS data. However, the applied
 example in \Cref{sec:applied-example} demonstrates that power may be
 limited relative to conventional MR analyses in
 unrelated individuals. Since our test leverages the precise amount
 of information available in a single meiosis, this suggests that
 MR in unrelated individuals is drawing power
 from elsewhere.

 Besides the obvious distinction that conventional MR analyses are
 model-based (and thus are not robust to model misspecification),
 another likely reason for the large difference in the empirical
 results is that MR in unrelated individuals use randomness in meioses
 across many generations. For example, an offspring with parents who are
 homozygous for the non-effect allele offers no power in our test,
 since their genotype will not vary across meioses. However, if we
 assume that genotypes
 are randomly distributed at the population level (as in MR studies
 with unrelated individuals), that same offspring can act as a
 comparator for individuals with the effect
 allele. \cite{Brumpton2020} corroborate
 this loss of power for their within-family method, but do not
 elaborate on the broader implications for how Mendelian randomization
 is typically justified. It would be extremely valuable for the MR
 literature to discuss the extent to which Mendelian inheritance
 across multiple generations is driving the power behind existing
 results, as such uncontrolled randomness may introduce bias when
 there are strong dynastic effects and natural selection.

Continuing the discussion on multiple instruments in
\Cref{subsec:multiple-ivs}, our approach closely resembles the usage
of evidence factors in observational studies as advocated by
\citet{rosenbaum2010evidence,rosenbaum2021replication} and \citet{zhao22_eviden_factor_from_multip_possib}. Using
(conditionally) independent instruments in different genomic regions
may also be viewed as a form of triangulation to improve causal
inference \citep{lawlor17_trian_aetiol_epidem}. Although all the
p-values are obtained using the same study design, different genetic
variants may influence the exposure through different biological
mechanisms and the fact that they provide corroborating evidence
strengthens the causal conclusion.

We must also return to the problem of transmission ratio distortion (TRD) discussed in \Cref{sec:genetic-preliminaries}. TRD violates the assumptions of our meiosis model that alleles are (unconditionally) passed from parents to offspring at the Mendelian rate of 50\%. We could represent TRD in our causal model in \Cref{fig:fammr_dag} via an arrow from the gametes $(\bm{Z}^m, \bm{Z}^f)$ to the mating indicator $S$. This indicates that the gametes themselves influence survival of their corresponding zygote to term. If our putative instrument $Z_1^m$ is in linkage with any variant exhibiting TRD, then this invalidates it as an instrument. Suppose $Z_3^m$ exhibits TRD, then this opens collider paths via the parental phenotypes $C^m$ and $C^f$, for example, $Y(d) \leftarrow C^m \rightarrow \boxed{S} \leftarrow Z_3^m \leftarrow \bm{U}^m \rightarrow Z_1^m$. The intuition is that parental phenotypes related to the likelihood of mating become associated with offspring variants related to the likelihood of offspring survival. Within our causal model, this pathway can be closed by conditioning on $Z_3^m$, with unconditioned variants obeying the meiosis model. If any unconditioned variants exhibit TRD, then this bias will remain and our meiosis model will incorrectly describe the inheritance patterns of any linked variants, resulting in an erroneous randomization distribution. Expanding resources of parent-offspring data may allow us to test the prevalence of transmission ratio distortion, which will help to inform the reasonableness of maintaining Mendel's First Law in our meiosis and fertilization model.

\begin{acks}[Acknowledgments]
The authors thank Kate Tilling, Rachael A Hughes, Jack Bowden, Gibran Hemani, Neil M Davies, Ben Brumpton, and Nianqiao Ju for their helpful feedback. In addition, the authors are extremely grateful to all the families who took part in the ALSPAC cohort, the midwives for their help in recruiting them, and the whole ALSPAC team, which includes interviewers, computer and laboratory technicians, clerical workers, research scientists, volunteers, managers, receptionists and nurses. Ethical approval for our applied example was obtained from the ALSPAC Ethics and Law Committee and the Local Research Ethics Committees. This publication is the work of the authors who will serve as guarantors for the contents of this paper. For the purpose of Open Access, the authors have applied a CC BY public copyright licence to any Author Accepted Manuscript version arising from this submission.
\end{acks}

\begin{funding}
  This research was supported in part by the Wellcome Trust (grant
  number 220067/Z/20/Z) and EPSRC (grant number EP/V049968/1). The UK
  Medical Research Council and Wellcome (grant number 217065/Z/19/Z)
  and the University of Bristol provide core support for ALSPAC. GWAS
  data was generated by Sample Logistics and Genotyping Facilities at
  Wellcome Sanger Institute and LabCorp (Laboratory Corporation of
  America) using support from 23andMe.
\end{funding}


\bibliographystyle{imsart-nameyear}
\bibliography{fammr_bib}

\clearpage

\appendix
\onecolumn
\section{Introduction to causal inference} \label{sec:intro-to-causal-inference}

This section introduces some basic concepts in causal inference, including the potential outcomes framework \citep{Neyman1923, Rubin1974}, randomization
inference \citep{Fisher1935,Rubin1980}, instrumental variables \citep{Wright1923, Imbens2015}, and single world intervention graphs \citep{Richardson2013b}, which are essential for describing our methodology. Interested readers can find a more thorough coverage of these topics in other texts \citep{Imbens2015,Hernan2020}.

\subsection{Treatment assignment and potential outcomes}
\label{sec:potential-outcomes}

In the typical setup of a randomized experiment with non-compliance, we have a sample of $N$ individuals indexed by $i = 1, 2, \ldots , N$ and each individual is randomly assigned to receive a binary treatment $Z_i \in \{0,1\}$. The common convention is that $Z_i = 1$ denotes assignment to an experimental treatment and $Z_i = 0$ a control treatment. However, individuals might not comply with their assigned
treatment, and we denote the treatment that the individual actually takes as $D_i \in \{0,1\}$. Finally, we observe an outcome variable $Y_i$ for each individual. As the treatment uptake $D_i$ is not randomized, there may exist a confounding variable $C_i$ that is a common cause of both $D_i$ and $Y_i$.

Individual $i$ has two \emph{potential outcomes} (also called
\emph{counterfactuals}) of her treatment uptake,
$D_i(0)$ and $D_i(1)$. If she is randomized to the experimental (or control)
treatment, she will take treatment $D_i(1)$ (or $D_i(0)$). For
example, some individuals will take the experimental treatment
regardless of their assigned treatment, so $D_i(1) = D_i(0) =
1$. Each individual also has four potential
outcomes $Y_i(z,d)$ from the experiment corresponding to each
combination of treatment assignment $z \in \{0,1\}$ and treatment
uptake $d \in \{0,1\}$. Similarly, we may define the potential outcome
with just $D$ being intervened on by $Y_i(d) = Y_i(Z_i, d)$, where the
assigned treatment takes its ``natural'' value $Z_i$.

In defining these potential outcomes, we have
implicitly made the \emph{no interference} assumption which states that
individual $j$'s treatment is independent of individual $i$'s outcome
when $i \neq j$ \citep{Rubin1980}. To simplify the exposition, we
further make the \emph{exclusion restriction} assumption in this section. That
is, we assume that the treatment assignment has no causal effect on the
outcome except via treatment uptake, so
\begin{equation}
  \label{eq:exclusion-restriction}
  Y_i(1,d) = Y_i(0,d) = Y_i(d) \, \text{for $d \in \{0,1\}$}.
\end{equation}
Let $\mathcal{F} = \{(D_i(1), D_i(0), Y_i(0), Y_i(1)), \, i = 1,
\ldots, N \}$ denote the collection of potential outcomes for all
the individuals.

We define a \emph{causal effect} of the treatment as a contrast of
potential outcomes. When the treatment is binary, the causal effect
for individual $i$ is given by $\beta_i = Y_i(1) - Y_i(0)$, the
difference in individual $i$'s outcomes between the two possible
treatments. However, inference for the individual treatment effect
$\beta_i$ is difficult because we do not observe both potential
outcomes of the same individual simultaneously. This has been famously
described as the ``fundamental problem of causal inference''
\citep{Holland1986a}. Indeed, we only observe the potential
outcome corresponding to the treatment that is actually received, such
that
\begin{equation}
  \label{eq:consistency}
  D_i = \begin{cases}
    D_i(1) & \text{if } Z_i = 1, \\
    D_i(0) & \text{if } Z_i = 0;
  \end{cases} ~ \text{and} ~ Y_i = \begin{cases}
    Y_i(1) & \text{if } D_i = 1, \\
    Y_i(0) & \text{if } D_i = 0.
  \end{cases}
\end{equation}
The above equation is sometimes called the \emph{consistency}
assumption since it ensures that the observed outcomes and potential
outcomes are consistent with one another
\citep{Hernan2020}.

\begin{table}[h]
  \caption{Observed data from a hypothetical experiment for a LDL
    cholesterol-lowering drug.}
  \begin{center}
    \begin{tabular}{c|c|ccc|ccc}
      \toprule
      $i$ & $Z_i$ & $D_i$ & $D_i(1)$ & $D_i(0)$ & $Y_i$ & $Y_i(1)$ & $Y_i(0)$ \\
      \midrule
      1 & 1 & 1 & 1 & \textcolor{red}{?} & 120 & 120 & \textcolor{red}{?}  \\
      2 & 1 & 1 & 1 & \textcolor{red}{?} & 120 & 120 & \textcolor{red}{?} \\
      3 & 1 & 0 & 0 & \textcolor{red}{?} & 75 & \textcolor{red}{?} & 75 \\
      4 & 0 & 0 & \textcolor{red}{?} & 1 & 165 & 165 & \textcolor{red}{?} \\
      5 & 0 & 0 & \textcolor{red}{?} & 0 & 135 & \textcolor{red}{?} & 135 \\
      6 & 0 & 0 & \textcolor{red}{?} & 0 & 105 & \textcolor{red}{?} & 105 \\
      \bottomrule
    \end{tabular}
    \label{tab:hypomiss}
  \end{center}
\end{table}

From this perspective, causal inference can be regarded as a missing
data problem. Consider a simple hypothetical experiment in
\Cref{tab:hypomiss} consisting of $N = 6$ individuals, 3 of whom
are randomized to take an experimental LDL cholesterol-lowering drug
and 3 of whom are randomized to take a placebo. However, not everyone
adheres to the assigned treatment. The outcome variable is LDL
cholesterol measured in grams per litre (mg/dL). As discussed above,
we can only observe the potential outcomes corresponding to the
observed treatment assignment and uptake; all other potential outcomes
are missing.

\subsection{Causal graphical models}
\label{sec:caus-graph-models}

\begin{figure}[t]
  \centering
  \begin{subfigure}[b]{0.32\textwidth}
    \centering
    \includegraphics[width=\textwidth]{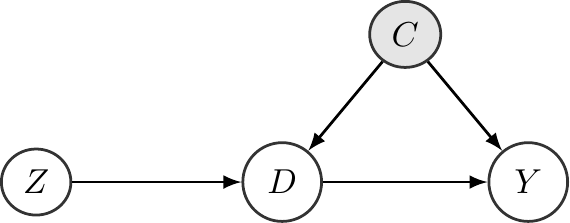}
    \caption{Directed acyclic graph}
    \label{fig:dag}
  \end{subfigure}
  \hspace{15mm}
  \begin{subfigure}[b]{0.32\textwidth}
    \centering
    \includegraphics[width=\textwidth]{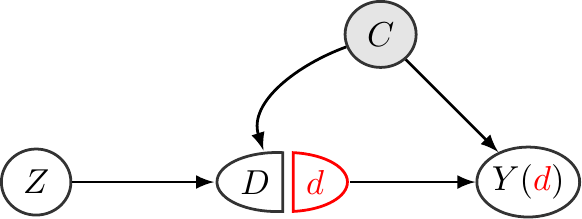}
    \caption{Single world intervention graph}
    \label{fig:swig}
  \end{subfigure}%
  \caption{Causal diagram of the example in \Cref{sec:intro-to-causal-inference}} \label{fig:example_graphs}
\end{figure}

The setting above can be described by a directed acyclic graph (DAG) as
shown in \Cref{fig:dag}. Below we will use some basic concepts in DAG
models such as Markov properties and d-separation, which are described
in detail in other texts
\citep{Lauritzen1990,Pearl2009,Hernan2020}.

We will use single world intervention graphs (SWIG) to unify the counterfactual
and graphical descriptions of the causal inference problem
\citep{Richardson2013b}. The SWIG representation of our setting above
is given in \Cref{fig:swig}. Since we are interested in the causal
effect of an intervention on $D$, the SWIG splits that node into two
halves: $D$, representing the randomized treatment, and $d$, representing a fixed intervention value. The random half $D$ inherits all incoming arrows in the original DAG and
the fixed half $d$ inherits all outgoing arrows. Descendants of the
intervention node (in this case $Y$) are replaced with the potential
outcomes $Y(d)$ under the intervention value $d$.

It has been shown that SWIGs define a graphical model for the
potential outcomes \citep{Richardson2013b}, so we can apply
d-separation to obtain conditional independence between
counterfactuals. For example, \Cref{fig:swig} implies \emph{exchangeability} (or \emph{ignorability}),
\begin{equation} \label{eqn:random-assignment}
  Z_i \independent Y_i(d) ~ \text{for all $d \in \{0,1\}$.}
\end{equation}
However, $D_i \independent Y_i(d)$ is generally not true due to the
confounder $C_i$.

\subsection{Randomization inference for instrumental variables}
\label{sec:rand-infer}

To construct an exact randomization test, the key idea is to
base the inference precisely on the randomness introduced
by the experimenter. To this end, we must characterize the treatment
assignment mechanism.

Let $\bm{Z} = (Z_1, \ldots, Z_N)^\intercal$ denote the $N$-vector of
treatment assignments. To simplify the exposition, we will assume that
the experiment is completely randomized, such that a fixed number of
individuals $N_t$ are assigned to the experimental treatment and $N_c
= N - N_t$ are assigned to the control treatment. The same method
below can be applied to more sophisticated assignment
mechanisms (such as the ones we describe later for within-family MR).
Let $\Omega = \{ (z_1, \ldots, z_N) \in \{0,1\}^N
\colon \sum_{i=1}^N z_i = N_t\}$ denote the set of feasible assignment
vectors. By assumption, all assignment vectors in $\Omega$ are
realized with equal probability. Stated formally, the randomization
distribution can be written as
\begin{equation}
  \label{eq:randomization-distribution}
  \mathbb{P}(\bm{Z} = \bm{z} \mid \mathcal{F}) =
  \begin{cases}
    \displaystyle \binom{N}{N_t}^{-1}, & \text{for all $\bm{z} \in \Omega$}, \\
    0, & \text{otherwise}.
  \end{cases}
\end{equation}

To illustrate randomization inference, consider the hypothetical
experiment in \Cref{tab:hypomiss}. Suppose we are interested in
evaluating the effectiveness of this drug at lowering LDL
cholesterol. However, although the drug is initially randomly
assigned, the treatment uptake is not randomized. In particular,
non-compliance may be driven by a confounder $C_i \in \{0,1\}$. This
might be an underlying comorbidity such that those with $C_i = 1$ have
a higher baseline outcome $Y_i(0)$ but experience negative side
effects from the experimental treatment. Due to the side effects,
individuals with the comorbidity (such as $i=3$ in \Cref{tab:hypomiss})
may be inclined to switch to the control treatment when they are
assigned to the experimental drug. Due to the systematic shift of high
baseline individuals from the experimental treatment to the control
treatment, a simple intention-to-treat estimate (by regressing $Y_i$
on $Z_i$) will underestimate the causal effect.

To address unobserved confounding such as systematic non-compliance,
one approach is to use an instrumental variable. An instrumental
variable induces unconfounded variation in the treatment without
otherwise affecting the outcome. In our example, the randomized treatment
assignment $Z_i \in \{0, 1\}$ is a good instrument for treatment
uptake $D_i$, because it will change the outcome $Y_i$ only through
$D_i$ by the exclusion restriction assumption
\eqref{eq:exclusion-restriction}. Furthermore, it is independent of
the underlying comorbidity status and the counterfactual outcomes,
as shown by \eqref{eqn:random-assignment}.

Randomization inference for instrumental variables
\citep{Rosenbaum2004, Kang2018} tests sharp null
hypotheses of the form
\[
  H_0: Y_i(d) - Y_i(0) = \beta_0 d,~\text{for all}~d \in \{0,1\}.
\]
This implies a constant additive treatment effect $\beta_0$ across
individuals. Under this hypothesis and the consistency assumption
\eqref{eq:consistency}, the baseline potential outcome can be
written in terms of the observable data $(Z_i, D_i, Y_i)$ as
\[
  Y_i(0) = Y_i - \beta_0 D_i =
  \begin{cases}
    Y_i,& \text{if}~D_i = 0, \\
    Y_i - \beta_0,& \text{if}~D_i = 1,
  \end{cases}
\]
which is termed as the ``adjusted response'' by
\citet{Rosenbaum2004}. Therefore, when the null hypothesis is true,
the randomization of $Z_i$, namely \eqref{eqn:random-assignment},
implies that
\[
  Z_i \independent Y_i - \beta_0 D_i.
\]
Consequently, testing the null hypothesis $H_0$ that the causal effect is
a constant $\beta_0$ is equivalent to testing the independence of
$Z_i$ and $Y_i - \beta_0 D_i$. To this end, a simple test statistic is
the difference in outcomes between the two groups,
\[
  T(\bm{Z} \mid \mathcal{F}) = \sum_{i=1}^N Z_i (Y_i - \beta_0 D_i) - \sum_{i=1}^N (1 - Z_i) (Y_i - \beta_0 D_i) \overset{H_0}{=} \sum_{i:Z_i=1} Y_i(0) - \sum_{i:Z_i=0} Y_i(0).
\]
The randomization test then rejects $H_0$ at significance level
$\alpha$, if the p-value
\[
  P(\bm Z \mid \mathcal{F}) = \tilde{\mathbb{P}}(T(\tilde{\bm{Z}} \mid
  \mathcal{F}) \leq T(\bm{Z} \mid \mathcal{F}))
\]
is less than or equal to $\alpha$. Here $\tilde{\bm Z}$ is an
independent copy of $\bm Z$ and $\tilde{\mathbb{P}}$ means that the
probability is taken over $\tilde{\bm Z}$ according to the
randomization distribution \eqref{eq:randomization-distribution}. In plain terms, we are asking: if we re-ran the experiment many times under the null hypothesis (i.e. $Z_i$ and $Y_i - \beta_0 D_i$ are independent), how often would we observe a test statistic more extreme than our observed test statistic? If this probability is lower than $\alpha$, then we have little confidence in the null hypothesis.

This p-value has size $\alpha$ in the sense that
\[
  \mathbb{P}(P(\bm{Z} \mid \mathcal{F}) \leq \alpha \mid H_0) = \alpha.
\]
for any significance level $0 \leq \alpha \leq 1$ and test statistic $T(\cdot \mid \mathcal{F})$. For continuously distributed test statistics the proof relies on the idea that $T(\tilde{\bm{Z}} \mid \mathcal{F}) \overset{d}{=} T(\bm{Z} \mid \mathcal{F})$ under $H_0$ which means that $P(\bm{Z} \mid \mathcal{F})$ is the cumulative distribution of $T(\tilde{\bm{Z}} \mid \mathcal{F})$ at $T(\bm{Z} \mid \mathcal{F})$. Since cumulative distributions are uniformly distributed the result follows.

Next, we illustrate the randomization test using the hypothetical
experiment in \Cref{tab:hypomiss} and the null hypothesis $H_0:
Y_i(0) = Y_i(1)$ for all $i$ (i.e. $\beta_0 = 0$). For the realized experiment in
\Cref{tab:hypomiss}, the difference in outcomes between the experimental and placebo groups is
$(120 + 120)/2  - (75 + 165 + 135 + 105)/4 = 0$. In other words, average LDL cholesterol appears to be identical in the experimental and control arms. However, it is unclear whether this should be interpreted as evidence of a null causal effect. As discussed above, it is possible that individuals with high baseline outcomes are more inclined to switch from the experimental treatment to the control treatment.

Our observed test statistic for this experiment is $T(\bm{Z} \mid \mathcal{F}) = (120 + 120 + 75)/3 - (165+135+105)/3 = -30$. Since we know the missing potential outcomes under the null hypothesis and we know the mechanism by which treatment was randomly assigned, we
can also consider the results of counterfactual experiments. Table
\ref{tab:hyponull} shows a counterfactual experiment that could have
occurred with missing potential outcomes imputed under the null. The counterfactual
treatment assignment is given by $\tilde{Z}_i$ to distinguish it from
the factual $Z_i$. We can compute the difference in outcomes from this
counterfactual experiment, equal to $T(\tilde{\bm{Z}} \mid
\mathcal{F}) = 420/3 - 300/3 = 40$. Indeed, we could enumerate the
counterfactual results from all 20 equally possible experiments, shown
in Figure \ref{fig:hypohist}. The bars highlighted in red are
comprised of 4 counterfactual experiments with an average outcome
difference less than or equal to that observed in our actual
experiment. Therefore, under the null hypothesis, the one-sided
probability of observing a result more extreme than our observed
result is 4/20 or 20\%.

\begin{table}[ht]
  \caption{Imputed data from a counterfactual experiment under the exact null hypothesis}
  \begin{center}
    \begin{tabular}{c|c|ccc|ccc}
      \toprule
      $i$ & $\tilde{Z}_i$ & $D_i$ & $D_i(1)$ & $D_i(0)$ & $Y_i$ & $Y_i(1)$ & $Y_i(0)$ \\
      \midrule
      1 & \textcolor{red}{1} & 1 & 1 & \textcolor{red}{?} & 120 & 120 & \textcolor{red}{120}  \\
      2 & \textcolor{red}{0} & 1 & 1 & \textcolor{red}{?} & 120 & 120 & \textcolor{red}{120} \\
      3 & \textcolor{red}{0} & 0 & 0 & \textcolor{red}{?} & 75 & \textcolor{red}{75} & 75 \\
      4 & \textcolor{red}{1} & 0 & \textcolor{red}{?} & 1 & 165 & 165 & \textcolor{red}{165} \\
      5 & \textcolor{red}{1} & 0 & \textcolor{red}{?} & 0 & 135 & \textcolor{red}{135} & 135 \\
      6 & \textcolor{red}{0} & 0 & \textcolor{red}{?} & 0 & 105 & \textcolor{red}{105} & 105 \\
      \bottomrule
    \end{tabular}
    \label{tab:hyponull}
  \end{center}
\end{table}

\begin{figure}[ht]
  \begin{center}
    \caption{Histogram of outcome differences for the exact null hypothesis}
    \includegraphics[scale=0.8]{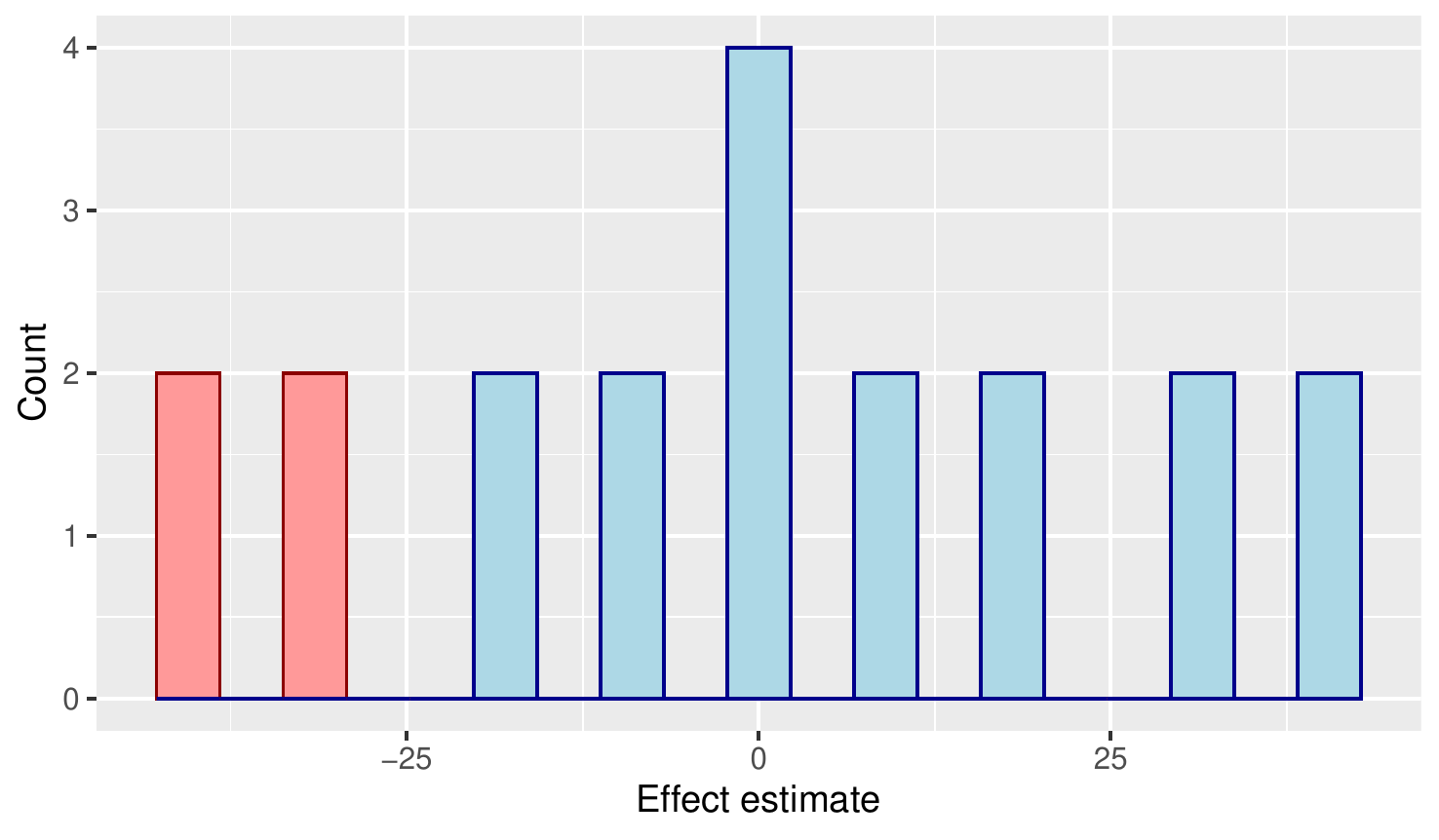}
    \label{fig:hypohist}
  \end{center}
\end{figure}

\section{Randomization distribution of offspring alleles}
\label{sec:randomization-distribution}

The distribution of offspring haplotypes is often approximated by a first order hidden Markov model (HMM) \citep{Haldane1919, Thompson2000, Bates2020a}. This model assumes ``no interference'', such that the location of crossover events are independent and the likelihood of an offspring inheriting a SNP from a given maternal or paternal haplotype depends only on the inheritance at adjacent loci. This induces a Poisson renewal process for the distribution of distances between crossovers, however, it should be noted that there is evidence of positive crossover interference in human meioses which results in a more even spread of crossovers than would be expected with random placement. Recent literature has therefore suggested that a Gamma renewal process may be a more appropriate model, although we do not provide this extension here \citep{Otto2019}.

The randomness in our randomization distribution arises from both the location of crossover events (i.e. the transition distribution) and the small probability of independent de novo mutations (i.e. the emission distribution). Without loss of generality, we describe the distribution of offspring alleles inherited from the mother $\bm{Z}^m$ given maternal haplotypes $\bm{M}^m$ and $\bm{M}^f$. Inheritance from the father is an independent instance of the same model. The transition distribution for the meiosis indicator at site $j$ is assumed to be Poisson with mean equal to the genetic distance in centimorgans $r_j$ between site $j-1$ and $j$:
\[\begin{array}{rl}%
 \mathbb{P}(U_j^m = u_{j-1}^m \mid U_{j-1}^m = u_{j-1}^m) &= \mathbb{P}(\text{even number of recombinations between $j-1$ and $j$}) \\[0.3em]
 &= \frac{1}{2}(1 + e^{-2r_j}); \\[1em]
 \mathbb{P}(U_j^m = U_{j^{\prime}}^m) &= \frac{1}{2}(1 + e^{-2(d_j + \ldots + d_{j^{\prime}})})
\end{array}\]%

where $u_{j-1}^m \in \{ m, f \}$ and $j < j^{\prime}$. Genetic distance is not proportional to physical distance on the chromosome due to the presence of recombination hotspots where crossover events are more likely to occur \citep{Belmont2005, Bherer2017}. As $r_j$ becomes large, the likelihood of an even number of recombinations approaches one half since genetically distant sites are transmitted almost independently.

The emission distribution is characterized by the probability of independent de novo single nucleotide mutations. A de novo mutation is said to occur when the base pair at some offspring SNP differs from the base pair they inherited from the parental haplotype. Within the context of the model, conditional on $U_j^m = u_j^m \in \{ m, f \}$, each $Z_j^m$ is sampled according to
\begin{equation}
    \mathbb{P}(Z_j^m = M_j^{(u_j^m)} \mid U_j^m = u_j^m) = 1 - \epsilon
    \label{eqn:emission}
\end{equation}
The probability of a de novo mutation $\epsilon$ is approximately $1 \cdot 10^{-8}$ in humans \citep{Acuna-Hidalgo2016}.

The graphical structure of the hidden Markov model is shown in \Cref{fig:hidden-markov-model}. This graph differs from the more general structure shown in \Cref{fig:meiosis_dag} in that each meiosis indicator $U_j^m$ depends only on the previous indicator $U_{j-1}^m$. \Cref{fig:fammr-dag-hmm} embeds the hidden Markov model within the complete causal model used throughout \Cref{sec:identification}.

Our primary use of the Markovian structure described above is to derive propensity scores for offspring haplotypes $Z_j^m \in \{0, 1\}$. In particular, our goal is to express the propensity score of some SNP $Z_j^m$ given the adjustment set $(\bm{M}_j^{mf}, \bm{V}_{\mathcal{B}}^m)$ of \Cref{thm:sufficient-adjustment-set}, where $\bm{V}_{\mathcal{B}}^m = (\bm{M}_{\mathcal{B}}^{mf}, \bm{Z}_{\mathcal{B}}^m)$ and $\mathcal{B} \subseteq \mathcal{J} \setminus \{j\}$. Throughout this section we will assume that $\mathcal{B} = \{1, \ldots, l\} \cup \{h, \ldots p\}$ for $l < j < h$. Suppressing conditioning on $\bm{M}_j^{mf}$ and $\bm{M}_{\mathcal{B}}^{mf}$ for ease of notation, the propensity score for $Z_j^m$ can be written as
 \begin{equation}
   \mathbb{P}(Z_j^m = 1 \mid \bm{Z}_{\mathcal{B}}^m = \bm{z}_{\mathcal{B}}^m) = \sum_{u \in \{ m, f \}} \mathbb{P}(Z_j^m = 1 \mid U_j^m = u) \mathbb{P}(U_j^m = u \mid \bm{Z}_{\mathcal{B}}^m = \bm{z}_{\mathcal{B}}^m).
   \label{eqn:samplehaplotype}
 \end{equation}
 It is therefore more convenient to consider the conditional probability of $U_j^m$. We state the following theorem:
 \begin{theorem} \label{thm:propensity-score}
   Using the conditional independence properties implied by \Cref{fig:meiosis_dag}, the conditional probability of $U_j^m = m$ can be factorized as
   \[\begin{array}{ll}
     &\mathbb{P}(U_j^m = m \mid \bm{Z}_{\mathcal{B}}^m = \bm{z}_{\mathcal{B}}^m) \\[0.5em] \propto& \displaystyle \biggl[ \sum_{u \in \{m,f\}} \beta_{h-1}^m(u) \mathbb{P}(U_{h-1}^m = u \mid U_j^m = m) \biggr] \biggl[ \sum_{u \in \{ m, f \}} \mathbb{P}(U_j^m = m \mid U_l^m = u) \, \alpha_l^m(u) \biggr]. \end{array}\]
   The forward weights are defined recursively as
   \[\begin{array}{ll}%
       \alpha_1^m(u_1^m) &= \begin{cases}
         \frac{1}{2}(1 - \epsilon) & \text{if $M_1^{u_1^m} = z_1^m$}  \\
         \frac{1}{2} \epsilon & \text{if $M_1^{u_1^m} \neq z_1^m$}
       \end{cases} \\[1.3em]
       \alpha_k^m(u_k^m) &= \displaystyle \sum_{u \in \{ m, f \}} \mathbb{P}(Z_k^m = z_k^m \mid U_k^m = u_k^m) \mathbb{P}(U_k^m = u_k^m \mid U_{k-1}^m = u) \alpha_{k-1}^m(u) , ~ k = 2, \ldots, p
     \end{array}\]%
   and the backward weights are defined recursively as
   \[\begin{array}{ll}%
       \beta_p^m(u_p^m) &= 1 \\[0.3em]
       \beta_k^m(u_k^m) &= \displaystyle \sum_{u \in \{ m, f \}} \beta_{k+1}^m(u) \mathbb{P}(U_{k+1}^m = u \mid U_k^m = u_k^m) \mathbb{P}(Z_{k+1}^m = z_{k+1}^m \mid U_{k+1}^m = u), ~ k = 1, \ldots, p-1,
     \end{array}\]%
   for $u_k^m \in \{ m, f \}$ and $j, k \in \mathcal{J}$.
 \end{theorem}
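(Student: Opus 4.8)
The plan is to recognize \Cref{thm:propensity-score} as the standard forward--backward (sum--product) factorization for a hidden Markov model, specialized to the adjustment set $\mathcal{B} = \{1,\dots,l\}\cup\{h,\dots,p\}$, which places the target locus $j$ in the unobserved gap between $l$ and $h$. Throughout I suppress the implicit conditioning on $\bm M_j^{mf}$ and $\bm M_{\mathcal{B}}^{mf}$, so that the emission law $\mathbb{P}(Z_k^m = z_k^m \mid U_k^m = u)$ is the \emph{de novo} mutation model of \Cref{assum:de-novo-distribution}, the sequence $\bm U^m$ is a Markov chain with each $Z_k^m$ d-separated from everything else given $U_k^m$ (as read off \Cref{fig:meiosis_dag} together with the HMM transition structure of \Cref{sec:randomization-distribution}), and $U_1^m$ is uniform on $\{m,f\}$.

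First I would pin down the meaning of the weights by induction on the stated recursions. The forward recursion yields $\alpha_k^m(u) = \mathbb{P}\big(\bm Z_{1:k}^m = \bm z_{1:k}^m,\ U_k^m = u\big)$ for every $k \le l$: the base case $k=1$ is immediate from the uniform prior and the emission law, and the inductive step is the usual one-step marginalization over $U_{k-1}^m$ using the Markov property and the emission independence. In particular $\alpha_l^m(u) = \mathbb{P}\big(\bm Z_{1:l}^m = \bm z_{1:l}^m,\ U_l^m = u\big)$. Symmetrically, the backward recursion yields $\beta_k^m(u) = \mathbb{P}\big(\bm Z_{k+1:p}^m = \bm z_{k+1:p}^m \mid U_k^m = u\big)$ for $k \ge h-1$, so that every emission index it references lies in $\{h,\dots,p\}$ and is therefore observed; in particular $\beta_{h-1}^m(u) = \mathbb{P}\big(\bm Z_{h:p}^m = \bm z_{h:p}^m \mid U_{h-1}^m = u\big)$.

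The core of the proof is the decomposition of $\mathbb{P}\big(U_j^m = m,\ \bm Z_{\mathcal{B}}^m = \bm z_{\mathcal{B}}^m\big)$ obtained by conditioning on the two anchor states $U_l^m$ and $U_{h-1}^m$ flanking the target. Because $\mathcal{B}$ contains no locus strictly between $l$ and $h$, no emission terms arise on the stretch carrying $j$, and the Markov property of $\bm U^m$ together with the d-separations $\bm Z_{1:l}^m \independent (U_j^m,\ \text{everything downstream of }U_l^m) \mid U_l^m$ and $\bm Z_{h:p}^m \independent (\text{everything upstream of }U_{h-1}^m) \mid U_{h-1}^m$ lets me factor each summand as
\[
  \mathbb{P}\big(\bm Z_{1:l}^m = \bm z_{1:l}^m, U_l^m = u_l\big)\,\mathbb{P}\big(U_j^m = m \mid U_l^m = u_l\big)\,\mathbb{P}\big(U_{h-1}^m = u_{h-1} \mid U_j^m = m\big)\,\mathbb{P}\big(\bm Z_{h:p}^m = \bm z_{h:p}^m \mid U_{h-1}^m = u_{h-1}\big).
\]
Substituting the weight identities of the previous paragraph and summing over $u_l, u_{h-1} \in \{m,f\}$ reproduces exactly the product of bracketed sums in the statement; normalizing over the two values of $U_j^m$ upgrades the equality to the claimed proportionality, after which \eqref{eqn:samplehaplotype} converts the posterior on $U_j^m$ into the propensity score for $Z_j^m$.

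I expect the one genuinely delicate point to be the conditional-independence bookkeeping in the factorization step: one must verify that conditioning on $U_l^m$ severs the observed past $\bm Z_{1:l}^m$ from both $U_j^m$ and the future, that conditioning on $U_{h-1}^m$ does the mirror image on the right, and crucially that the multi-step transitions $U_l^m \to U_j^m \to U_{h-1}^m$ carry no emission factors --- which holds precisely because the choice $\mathcal{B} = \{1,\dots,l\}\cup\{h,\dots,p\}$ leaves loci $l+1,\dots,h-1$ (hence $Z_j^m$ itself) unobserved. The recursions are the textbook HMM forward/backward identities, and their verification, as well as the marginalization of intermediate states in $\mathbb{P}(U_j^m = m \mid U_l^m = u_l)$ and $\mathbb{P}(U_{h-1}^m = u_{h-1} \mid U_j^m = m)$, is routine.
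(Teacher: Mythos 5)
Your proposal is correct and follows essentially the same route as the paper's proof: the standard forward--backward factorization of $\mathbb{P}(U_j^m = m, \bm{Z}_{\mathcal{B}}^m = \bm{z}_{\mathcal{B}}^m)$, splitting at the anchor states $U_l^m$ and $U_{h-1}^m$, identifying $\alpha_l^m(u)$ and $\beta_{h-1}^m(u)$ as the joint and conditional probabilities of the observed left and right emission blocks, and verifying the recursions by one-step marginalization under the Markov and emission-independence properties. Your explicit restriction of the weight identities to $k \le l$ and $k \ge h-1$ (so that only observed emissions are ever referenced) is, if anything, slightly more careful bookkeeping than the paper's.
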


If we impose the simplifying assumption that $\epsilon = 0$, so that there is zero probability of de novo mutations, then the distribution of $U_j^m$ derived in Theorem \ref{thm:propensity-score} can be simplified further.

 \begin{corollary} \label{cor:heterozygous-hmm}
   Suppose the probability of a single nucleotide de novo mutation is $\epsilon = 0$ and suppose that the maternal haplotypes at $b_1, b_2 \in \mathcal{J}$ are heterozygous, where $b_1 < l < j < h < b_2$. That is, $M_{b_1}^m \neq M_{b_1}^f$ and $M_{b_2}^m \neq M_{b_2}^f$. Then the propensity score in \Cref{thm:propensity-score} can equivalently be written as
    \[\label{eqn:propensity-score-simple} \begin{array}{ll}
     &\mathbb{P}(U_j^m = m \mid \bm{Z}_{\mathcal{B}}^m = \bm{z}_{\mathcal{B}}^m) \\[0.5em] \propto& \displaystyle \biggl[ \sum_{u \in \{m,f\}} \tilde{\beta}_{h-1}^m(u) \mathbb{P}(U_{h-1}^m = u \mid U_j^m = m) \biggr] \biggl[ \sum_{u \in \{ m, f \}} \mathbb{P}(U_j^m = m \mid U_l^m = u) \, \tilde{\alpha}_l^m(u) \biggr]. \end{array}\]
    where
   \[\begin{array}{ll}%
       \tilde{\alpha}_{b_1+1}^m(u_{b_1+1}^m) = \mathbb{P}(Z_{b_1+1}^m = z_{b_1+1}^m \mid U_{b_1+1}^m = u_{b_1+1}^m) \mathbb{P}(U_{b_1+1}^m = u_{b_1+1}^m \mid U_{b_1}^m = u_{b_1}^m) \\[0.3em]
       \tilde{\alpha}_k^m(u_k^m) = \displaystyle \sum_{u \in \{ m, f \}} \mathbb{P}(Z_k^m = z_k^m \mid U_k^m = u_k^m) \mathbb{P}(U_k^m = u_k^m \mid U_{k-1}^m = u) \tilde{\alpha}_{k-1}^m(u) , ~ k = b_1+2, \ldots, p;
     \end{array}\]%
    and
  \[\begin{array}{ll}%
       \tilde{\beta}_{b_2-1}^m(u_{b_2-1}^m) = \mathbb{P}(U_{b_2}^m = u_{b_2}^m \mid U_{b_2-1}^m = u_{b_2-1}^m) \mathbb{P}(Z_{b_2}^m = z_{b_2}^m \mid U_{b_2}^m = u_{b_2}^m) \\[0.3em]
       \beta_k^m(u_k^m) = \displaystyle \sum_{u \in \{ m, f \}} \tilde{\beta}_{k+1}^m(u) \mathbb{P}(U_{k+1}^m = u \mid U_k^m = u_k^m) \mathbb{P}(Z_{k+1}^m = z_{k+1}^m \mid U_{k+1}^m = u), ~ k = 1, \ldots, b_2-2.
     \end{array}\]%
 \end{corollary}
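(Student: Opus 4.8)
The plan is to reduce the forward--backward computation of \Cref{thm:propensity-score} to the sub-chain on loci $b_1,\dots,b_2$, exploiting the fact that when $\epsilon=0$ a heterozygous parental locus pins the ancestry indicator there. Concretely, I will show that the full forward weights $\alpha_k^m$ agree with the truncated weights $\tilde\alpha_k^m$ up to a single positive multiplicative constant for every $k\ge b_1+1$, and symmetrically that $\beta_k^m$ agrees with $\tilde\beta_k^m$ up to a constant for every $k\le b_2-1$. Substituting $k=l$ and $k=h-1$ into the factorisation of \Cref{thm:propensity-score} and absorbing the constants into the proportionality sign then yields the stated formula.

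The key step is the degeneracy of the emission distribution at the heterozygous endpoints. By \Cref{assum:de-novo-distribution} with $\epsilon=0$ we have $\mathbb{P}(Z_{b_1}^m=z_{b_1}^m\mid U_{b_1}^m=u)=I\{M_{b_1}^{(u)}=z_{b_1}^m\}$, and since $M_{b_1}^m\neq M_{b_1}^f$ this is $1$ for a unique $u=u_{b_1}^m$ and $0$ for the other value---the same observation used in the proof of \Cref{thm:identification}. The emission factor at locus $b_1$ does not involve the summation index of the forward recursion, so it factors out, giving $\alpha_{b_1}^m(u)=0$ for $u\neq u_{b_1}^m$ and $\alpha_{b_1}^m(u_{b_1}^m)=c$ for a scalar $c>0$ depending only on loci $1,\dots,b_1$. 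Feeding this point mass into the recursion at $b_1+1$ collapses the sum to a single term, $\alpha_{b_1+1}^m(u)=c\,\mathbb{P}(Z_{b_1+1}^m=z_{b_1+1}^m\mid U_{b_1+1}^m=u)\,\mathbb{P}(U_{b_1+1}^m=u\mid U_{b_1}^m=u_{b_1}^m)=c\,\tilde\alpha_{b_1+1}^m(u)$, which is exactly the initialisation of $\tilde\alpha$ in the corollary. Since the recursion for $k\ge b_1+2$ is linear in the preceding weights and identical in form for $\alpha$ and $\tilde\alpha$, an induction gives $\alpha_k^m\equiv c\,\tilde\alpha_k^m$ for all $k\ge b_1+1$, in particular at $k=l$. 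The backward weights are handled the same way: the emission at the heterozygous locus $b_2$ is $0/1$-valued, so the backward recursion at $b_2-1$ reduces to $\beta_{b_2-1}^m(u)=\beta_{b_2}^m(u_{b_2}^m)\,\mathbb{P}(U_{b_2}^m=u_{b_2}^m\mid U_{b_2-1}^m=u)=c'\,\tilde\beta_{b_2-1}^m(u)$ with $c'=\beta_{b_2}^m(u_{b_2}^m)>0$, and induction propagates $\beta_k^m\equiv c'\,\tilde\beta_k^m$ down to $k=h-1$. Plugging $\alpha_l^m=c\,\tilde\alpha_l^m$ and $\beta_{h-1}^m=c'\,\tilde\beta_{h-1}^m$ into \Cref{thm:propensity-score} pulls out the overall constant $cc'$, which disappears under the $\propto$.

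I expect the main thing requiring care to be bookkeeping rather than any deep idea: one must verify that the weights accumulated over loci $1,\dots,b_1$ (respectively $b_2,\dots,p$) enter as a genuine scalar independent of the ancestry indicators at the intermediate loci, so that they may be dropped legitimately, and that the truncated recursions are initialised at exactly $b_1+1$ and $b_2-1$, with the boundary case $l=b_1+1$ handled directly. It is also worth recording the structural reason this works: with $\epsilon=0$ and heterozygosity at $b_1,b_2$, the indicators $U_{b_1}^m$ and $U_{b_2}^m$ are deterministic functions of the conditioning set, so by the Markov property of $\bm U^m$ (\Cref{fig:meiosis_dag} together with Haldane's assumption) the chain outside $[b_1,b_2]$ is conditionally independent of $U_j^m$, and \Cref{thm:propensity-score} is in effect being applied to the shorter chain $U_{b_1}^m,\dots,U_{b_2}^m$ with its two endpoints fixed.
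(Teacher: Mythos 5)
Your proposal is correct and follows essentially the same route as the paper's proof: both hinge on the observation that with $\epsilon=0$ a heterozygous locus makes the emission degenerate, so the sum over $U_{b_1}^m$ (resp.\ $U_{b_2}^m$) collapses to a single term and everything accumulated outside $[b_1,b_2]$ enters only as a positive scalar that is absorbed by the proportionality. The paper carries out this cancellation at the level of the joint probabilities (the factor $\mathbb{P}(U_{b_1}^m=m,\bm Z_{1:b_1}^m=\bm z_{1:b_1}^m)$ appearing in both numerator and denominator), whereas you phrase it as $\alpha_k^m\equiv c\,\tilde\alpha_k^m$ and $\beta_k^m\equiv c'\,\tilde\beta_k^m$ by induction on the recursions, which is the same computation in a slightly tidier wrapper.
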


We will occasionally have multiple instruments lying in the same window. We will then need to compute a multivariate propensity score. We state the following corollary without proof because it follows almost immediately from \Cref{thm:propensity-score}.
 \begin{corollary} \label{cor:multiple-instruments}
Suppose we have a collection of instruments $\mathcal{J} = \{j_1, j_2, \ldots, j_r\}$ such that $l < j_1 < j_2 < \ldots < j_r < h$. Then the propensity score can be written as
\begin{align}
& \mathbb{P}(U_{j_1}^m = u_{j-1}^m, U_{j_2}^m = u_{j_2}^m, \ldots, U_{j_r}^m = u_{j_r}^m \mid \bm{Z}_{\mathcal{B}}^m = \bm{z}_{\mathcal{B}}^m) \\ = ~ &  \mathbb{P}(U_{j_1}^m = u_{j_1}^m \mid \bm{Z}_{\mathcal{B}}^m = \bm{z}_{\mathcal{B}}^m) \prod_{k=2}^r \mathbb{P}(U_{j_k}^m = u_{j_k}^m \mid U_{j_{k-1}}^m = u_{j_{k-1}}^m, \bm{Z}_{\mathcal{B}}^m = \bm{z}_{\mathcal{B}}^m)
\end{align}
The first propensity score $\mathbb{P}(U_{j_1}^m = m \mid \bm{Z}_{\mathcal{B}}^m = \bm{z}_{\mathcal{B}}^m)$ takes the form in \Cref{thm:propensity-score} and
\begin{align}
& \mathbb{P}(U_{j_k}^m = m \mid U_{j_{k-1}}^m = u_{j_{k-1}}^m, \bm{Z}_{\mathcal{B}}^m = \bm{z}_{\mathcal{B}}^m) \\
\propto ~ & \mathbb{P}(U_{j_k}^m = m \mid U_{j_{k-1}}^m = u_{j_{k-1}}^m) \biggl[ \sum_{u \in \{m,f\}} \beta_{h-1}^m(u) \mathbb{P}(U_{h-1}^m = u \mid U_{j_k}^m = m) \biggr]
\end{align}
where $\beta_{h-1}^m(u)$ is the backward weight defined in \Cref{thm:propensity-score}.
 \end{corollary}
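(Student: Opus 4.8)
The plan is to build the joint randomization distribution of $(U_{j_1}^m,\ldots,U_{j_r}^m)$ out of the one-variable propensity score of \Cref{thm:propensity-score} by exploiting the Markov structure of the meiosis-indicator process assumed throughout this section, so that the only genuinely new ingredient is an elementary d-separation argument.

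First I would apply the chain rule with the loci ordered by genomic position, writing
\[
\mathbb{P}(U_{j_1}^m,\ldots,U_{j_r}^m \mid \bm{Z}_{\mathcal{B}}^m) = \mathbb{P}(U_{j_1}^m \mid \bm{Z}_{\mathcal{B}}^m)\prod_{k=2}^{r}\mathbb{P}(U_{j_k}^m \mid U_{j_1}^m,\ldots,U_{j_{k-1}}^m,\bm{Z}_{\mathcal{B}}^m),
\]
suppressing the conditioning on $\bm{M}_j^{mf}$ and $\bm{M}_{\mathcal{B}}^{mf}$ exactly as in \Cref{thm:propensity-score}. Since $\mathcal{B}=\{1,\ldots,l\}\cup\{h,\ldots,p\}$ with $l<j_1<\ldots<j_r<h$ and the meiosis indicators form a Markov chain, conditioning on $U_{j_{k-1}}^m$ separates $U_{j_k}^m$ from everything strictly to the left of locus $j_{k-1}$ --- in particular from $U_{j_1}^m,\ldots,U_{j_{k-2}}^m$ and from the left block $\bm{Z}_{\{1,\ldots,l\}}^m$ --- while the right block $\bm{Z}_{\{h,\ldots,p\}}^m$ remains downstream of $U_{j_k}^m$ through $U_h^m,\ldots,U_p^m$. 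Hence each conditional factor collapses to $\mathbb{P}(U_{j_k}^m \mid U_{j_{k-1}}^m,\bm{Z}_{\{h,\ldots,p\}}^m)$.

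Next I would rewrite this reduced factor, by Bayes' rule, as proportional in $u_{j_k}^m$ to $\mathbb{P}(U_{j_k}^m,\bm{Z}_{\{h,\ldots,p\}}^m \mid U_{j_{k-1}}^m)$, which by the Markov property factors as $\mathbb{P}(U_{j_k}^m \mid U_{j_{k-1}}^m)\,\mathbb{P}(\bm{Z}_{\{h,\ldots,p\}}^m \mid U_{j_k}^m)$. The second term is exactly the ``backward'' half of the forward--backward decomposition already computed in \Cref{thm:propensity-score}: conditioning further on $U_{h-1}^m$ and summing gives $\sum_{u\in\{m,f\}}\mathbb{P}(U_{h-1}^m=u\mid U_{j_k}^m)\,\beta_{h-1}^m(u)$, where the backward weights $\beta_{h-1}^m$ are those of \Cref{thm:propensity-score} and, crucially, do not depend on $j$, so they may be reused verbatim. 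Combining the two pieces yields the stated expression for $\mathbb{P}(U_{j_k}^m=m\mid U_{j_{k-1}}^m=u_{j_{k-1}}^m,\bm{Z}_{\mathcal{B}}^m)$. Finally, the leading factor $\mathbb{P}(U_{j_1}^m=m\mid\bm{Z}_{\mathcal{B}}^m)$ is precisely the single-instrument propensity score of \Cref{thm:propensity-score} with $j=j_1$, so it needs no new computation.

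The main obstacle is not any hard estimate but the bookkeeping. I would need to state the d-separation claims carefully enough to be sure that the left block $\bm{Z}_{\{1,\ldots,l\}}^m$ and the earlier indicators $U_{j_1}^m,\ldots,U_{j_{k-2}}^m$ genuinely drop out of every conditional factor (keeping in mind that conditioning on the observed maternal haplotypes and on $\bm{Z}_{\{h,\ldots,p\}}^m$ could in principle open collider paths, which here it does not because those extra colliders only link the $M$-nodes, which act as fixed parameters of the emission distribution), and to check that the normalizing constants absorbed into each ``$\propto$'' are functions of the conditioning variables alone, so that the chain-rule product recombines consistently. Given \Cref{thm:propensity-score}, these verifications are routine, which is why the corollary is stated without a full proof.
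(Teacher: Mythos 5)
Your proposal is correct and supplies exactly the routine argument the paper deliberately omits (the corollary is stated without proof as following ``almost immediately'' from \Cref{thm:propensity-score}): the conditional Markov property of the hidden chain given the observed blocks $\bm{Z}_{1:l}^m$ and $\bm{Z}_{h:p}^m$ justifies the chain-rule factorization, and each factor reduces via Bayes' rule and the Markov property to the transition probability times the reused backward weight $\sum_{u}\beta_{h-1}^m(u)\,\mathbb{P}(U_{h-1}^m=u\mid U_{j_k}^m=m)$. Your explicit check that conditioning on the emissions does not open collider paths (because the $M$-nodes are also conditioned on) is the one point the paper leaves entirely implicit, and it is handled correctly.
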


 \begin{figure}[ht]
   \begin{center}
     \includegraphics[width=0.65\textwidth]{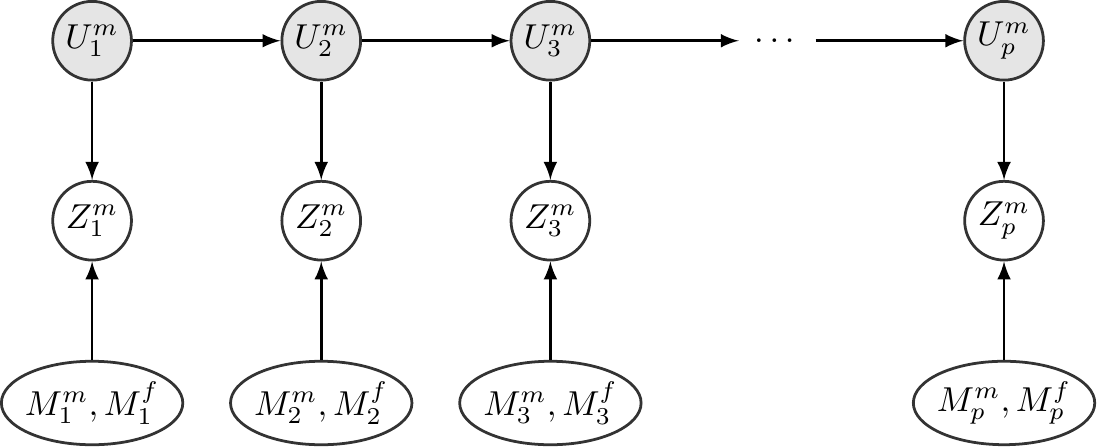}
     \caption{Graphical representation of Haldane's hidden Markov model.}
     \label{fig:hidden-markov-model}
   \end{center}
 \end{figure}

 \begin{figure}[ht]
   \begin{center}
     \includegraphics[width=0.8\textwidth]{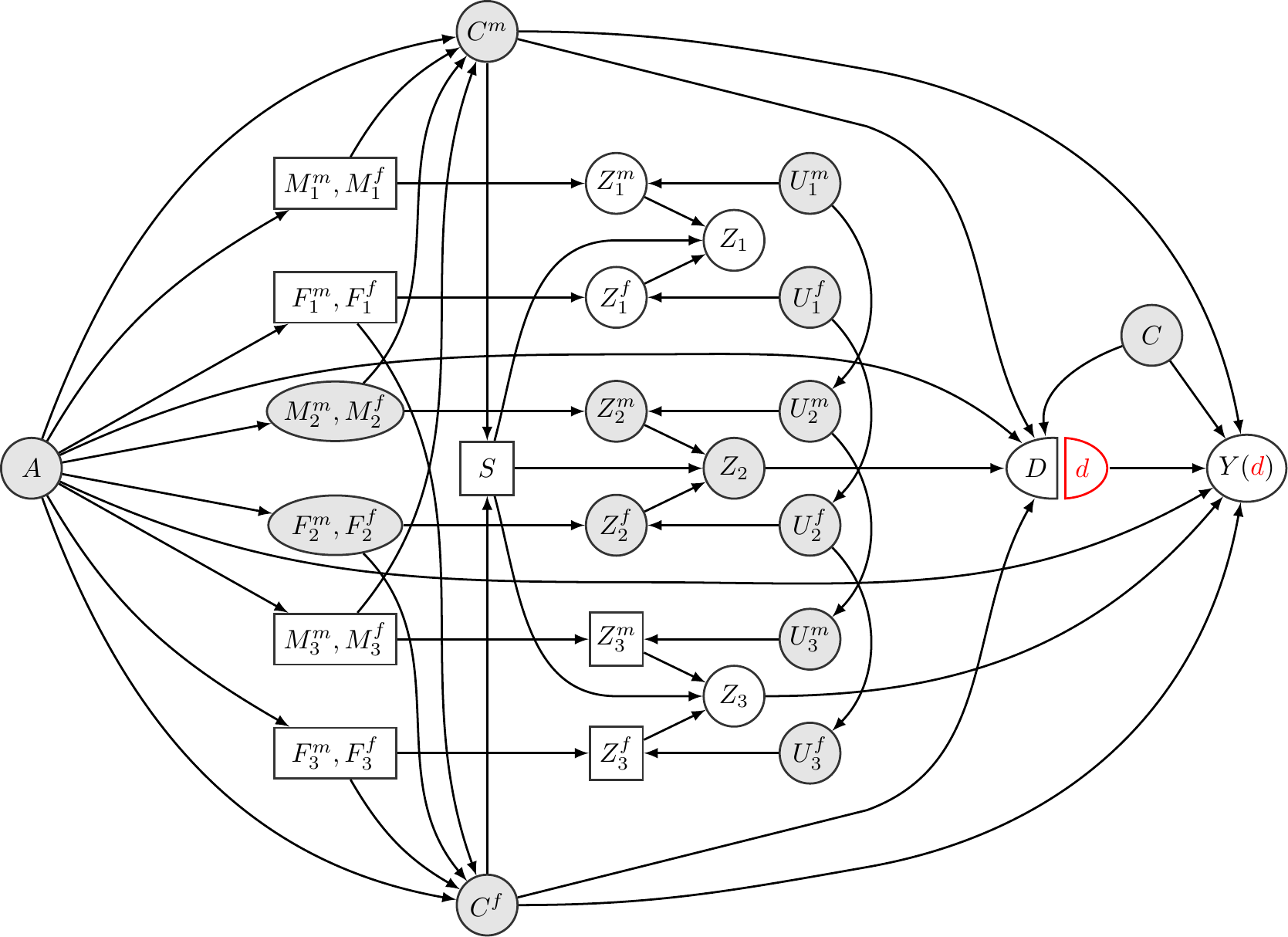}
     \caption{Haldane's hidden Markov model embedded in the working
       example in \Cref{fig:fammr_dag}.}
     \label{fig:fammr-dag-hmm}
   \end{center}
 \end{figure}

 \section{Technical proofs}
 \label{sec:technical-proofs}

 \subsection*{\Cref{prop:offspring-genotype-independence}}
 \label{sec:prop:-genotype-indep}

 \begin{proof}
   From
   \Cref{assum:de-novo-distribution} we know that, conditional on
   $(M_j^m, M_j^f, F_j^m, F_j^f)$, $Z_j^m$ and $Z_j^f$ only depend on
   $\bm{U}^m$ and $\bm{U}^f$, respectively, and exogenous mutation
   events. By \eqref{eqn:offspring-genotype-undefined}, $Z_j = Z_j^m +
   Z_j^f$ given that $S = 1$ (fertilization occurs).  Finally, by
   \Cref{assum:meiosis-indicator-independence}, the
   meiosis indicators $\bm U^m$ and $\bm U^f$ are independent of all confounders $(A, C^m,
   C^f, C)$. Therefore, the conditional independence statement
   immediately follows.
 \end{proof}




 \subsection*{\Cref{thm:propensity-score}}
 \label{sec:thm-propensity-score}

 \begin{proof}
   The conditional probability of $U_j^m$ can be factorized as
   \begin{align*}
       & \mathbb{P}(U_j^m = m \mid \bm{Z}_{\mathcal{B}}^m = \bm{z}_{\mathcal{B}}^m) \\[0.3em]
       \propto & ~ \mathbb{P}(U_j^m = m, \bm{Z}_{\mathcal{B}}^m = \bm{z}_{\mathcal{B}}^m) \\[0.3em]
       =& ~ \mathbb{P}(\bm{Z}_{h:p}^m = \bm{z}_{h:p}^m \mid U_j^m = m) \mathbb{P}(U_j^m = m, \bm{Z}_{1:l}^m = \bm{z}_{1:l}^m) \\[0.3em]
       =& ~ \displaystyle \biggl[ \sum_{u \in \{m,f\}}
          \mathbb{P}(\bm{Z}_{(j+1):p}^m = \bm{z}_{(j+1):p}^m,
          U_{h-1}^m = u \mid U_j^m = m) \biggr] \\
     & \quad \displaystyle \biggl[ \sum_{u \in \{ m, f \}} \mathbb{P}(U_j^m = m, U_l^m = u, \bm{Z}_{1:l}^m = \bm{z}_{1:l}^m) \biggr] \\[0.3em]
      =& ~ \displaystyle \biggl[ \sum_{u \in \{m,f\}} \mathbb{P}(\bm{Z}_{(j+1):p}^m = \bm{z}_{(j+1):p}^m \mid U_{h-1}^m = u) \mathbb{P}(U_{h-1}^m = u \mid U_j^m = m) \biggr] \\
       & \quad \displaystyle \biggl[ \sum_{u \in \{ m, f \}} \mathbb{P}(U_j^m = m \mid U_l^m = u) \mathbb{P}(U_l^m = u, \bm{Z}_{1:l}^m = \bm{z}_{1:l}^m) \biggr] \\[0.3em]
      =& ~ \displaystyle \biggl[ \sum_{u \in \{m,f\}} \beta_{h-1}^m(u) \mathbb{P}(U_{h-1}^m = u \mid U_j^m = m) \biggr] \biggl[ \sum_{u \in \{ m, f \}} \mathbb{P}(U_j^m = m \mid U_l^m = u) \alpha_l^m(u) \biggr].
    \end{align*}

   The forward weight $\alpha_1^m(u_1^m)$ for some $u_1^m \in \{ m, f \}$ can be derived as
   \[\begin{array}{ll}%
       \alpha_1^m(u_1^m) &= \mathbb{P}(U_1^m = u_1^m, Z_1^m = z_1^m) \\[0.3em]

       &= \mathbb{P}(Z_1^m = z_1^m \mid U_1^m = u_1^m) \mathbb{P}(U_1^m = u_1^m) \\[0.3em]

       &= \frac{1}{2} \mathbb{P}(Z_1^m = z_1^m \mid U_1^m = u_1^m)
     \end{array}\]%
   where the emission probability is known. A recursive expression for the forward weight $\alpha_j^m(u_j^m)$ for $j = 2, \ldots, p$ can be derived as
   \[\begin{array}{ll}%
       \alpha_j^m(u_j^m) &= \mathbb{P}(U_j^m = u_j^m, \bm{Z}_{1:j}^m = \bm{z}_{1:j}^m) \\[0.3em]

       &= \displaystyle \sum_{u \in \{ m, f \}} \mathbb{P}(U_j^m = u_j^m, U_{j-1}^m = u_{j-1}^m, \bm{Z}_{1:j}^m = \bm{z}_{1:j}^m) \\[0.3em]

       &= \displaystyle \sum_{u \in \{ m, f \}} \mathbb{P}(Z_j^m = z_j^m \mid U_j^m = u_j^m) \mathbb{P}(U_j^m = u_j^m \mid U_{j-1}^m = u) \mathbb{P}(U_{j-1}^m = u, \bm{Z}_{1:(j-1)}^m = \bm{z}_{1:(j-1)}^m) \\[0.3em]

       &= \displaystyle \sum_{u \in \{ m, f \}} \mathbb{P}(Z_j^m = z_j^m \mid U_j^m = u_j^m) \mathbb{P}(U_j^m = u_j^m \mid U_{j-1}^m = u) \, \alpha_{j-1}^m(u).

     \end{array}\]%

   The backward weight $\beta_j^m(u_j^m)$ for some $u_j^m \in \{ m, f \}$ and $j = 1, \ldots, p-1$ can be derived as
   \[\begin{array}{ll}%
       &\beta_j^m(u_j^m) = \mathbb{P}(\bm{Z}_{(j+1):p}^m = \bm{z}_{(j+1):p}^m \mid U_j^m = u_j^m) \\[0.3em]

       &= \displaystyle \sum_{u \in \{ m, f \}} \mathbb{P}(\bm{Z}_{(j+1):p}^m = \bm{z}_{(j+1):p}^m, U_{j+1}^m = u \mid U_j^m = u_j^m) \\[0.3em]

       &= \displaystyle \sum_{u \in \{ m, f \}} \mathbb{P}(\bm{Z}_{(j+2):p}^m = \bm{z}_{(j+2):p}^m \mid U_{j+1}^m = u) \mathbb{P}(Z_{j+1}^m = z_{j+1}^m \mid U_{j+1}^m = u) \mathbb{P}(U_{j+1}^m = u \mid U_j^m = u_j^m).
     \end{array}\]%
   Writing the probability of $U_p^m$ shows that $\beta_p^m(u) = 1$ for all $u \in \{ m, f \}$.
 \end{proof}

 \subsection*{\Cref{cor:heterozygous-hmm}}
 \label{sec:crefc}

 \begin{proof}
   The proof involves some manipulation of conditional independencies. We simplify the probability with respect to $b_1$ and omit simplification with respect to $b_2$ for brevity. As with the proof of Theorem \ref{thm:propensity-score} we begin by factorising the conditional probability of $U_j^m$.
   \begin{equation}%
     \mathbb{P}(U_j^m = m \mid \bm{Z}_{\mathcal{B}}^m = \bm{z}_{\mathcal{B}}^m) = \frac{ \mathbb{P}(\bm{Z}_{h:p}^m = \bm{z}_{h:p}^m \mid U_j^m = m) \mathbb{P}(U_j^m = m, \bm{Z}_{1:l}^m = \bm{z}_{1:l}^m)}{\mathbb{P}(\bm{Z}_{\mathcal{B}}^m = \bm{z}_{\mathcal{B}}^m)}.
     \label{eqn:propensity-score}
   \end{equation}%

   Since $b_1 < j$ we are concerned with simplifying the second probability in the numerator of equation \eqref{eqn:propensity-score}.
   \[\begin{array}{ll}%
       & \mathbb{P}(U_j^m = m, \bm{Z}_{1:l}^m = \bm{z}_{1:l}^m) \\[0.3em]

       =& \displaystyle \sum_{u \in \{ m, f \}} \mathbb{P}(U_j^m = m, U_{b_1}^m = u, \bm{Z}_{1:l}^m = \bm{z}_{1:l}^m) \\[0.3em]

       =& \displaystyle \sum_{u \in \{ m, f \}} \mathbb{P}(U_j^m = m, \bm{Z}_{(b_1+1):l}^m = \bm{z}_{(b_1+1):l}^m \mid U_{b_1}^m = u) \mathbb{P}(U_{b_1}^m = u, \bm{Z}_{1:b_1}^m = \bm{z}_{1:b_1}^m) \\[0.3em]

       =& \mathbb{P}(U_j^m = m, \bm{Z}_{(b_1+1):l}^m = \bm{z}_{(b_1+1):l}^m \mid U_{b_1}^m = m) \mathbb{P}(U_{b_1}^m = m, \bm{Z}_{1:b_1}^m = \bm{z}_{1:b_1}^m) \\[0.3em]

       =& \displaystyle \mathbb{P}(U_{b_1}^m = m, \bm{Z}_{1:b_1}^m = \bm{z}_{1:b_1}^m) \sum_{u \in \{ m, f \}} \mathbb{P}(U_j^m = m \mid U_{j-1}^m = u) \mathbb{P}(U_{j-1}^m = u, \bm{Z}_{(b_1+1):(j-1)}^m = \\ & \qquad \bm{z}_{(b_1+1):(j-1)}^m \mid U_{b_1}^m = m) \\[0.3em]

       =& \displaystyle \mathbb{P}(U_{b_1}^m = m, \bm{Z}_{1:b_1}^m = \bm{z}_{1:b_1}^m) \sum_{u \in \{ m, f \}} \mathbb{P}(U_j^m = m \mid U_{j-1}^m = u) \tilde{\alpha}_{j-1}^m(u).
     \end{array}\]%
   where
   \[\begin{array}{rl}%
       \tilde{\alpha}_{b_1+1}^m(u_{b_1+1}^m) &= \mathbb{P}(Z_{b_1+1}^m = z_{b_1+1}^m \mid U_{b_1+1}^m = u_{b_1+1}^m) \mathbb{P}(U_{b_1+1}^m = u_{b_1+1}^m \mid U_{b_1}^m = m) \\[0.3em]
       \tilde{\alpha}_k^m(u_k^m) &= \displaystyle \sum_{u \in \{ m, f \}} \mathbb{P}(Z_k^m = z_k^m \mid U_k^m = u_k^m) \mathbb{P}(U_k^m = u_k^m \mid U_{k-1}^m = u) \tilde{\alpha}_{k-1}^m(u), \\
       & \quad \text{for } k = b_1+2, \ldots, j-1.
     \end{array}\]%

   We now factorize the denominator of equation \eqref{eqn:propensity-score}.
   \[
     \mathbb{P}(\bm{Z}_{\mathcal{B}}^m = \bm{z}_{\mathcal{B}}^m) = \mathbb{P}(\bm{Z}_{(b_1+1):l} = \bm{z}_{(b_1+1):l}, \bm{Z}_{h:p} = \bm{z}_{h:p} \mid U_{b_1}^m = m) \mathbb{P}(U_{b_1}^m = m, \bm{Z}_{1:b_1} = \bm{z}_{1:b_1}).
   \]%
   Substituting these simplified expressions back in equation \eqref{eqn:propensity-score} we obtain
   \begin{equation}%
     \begin{array}{ll}%
       & \mathbb{P}(U_j^m = m \mid \bm{Z}_{\mathcal{B}}^m = \bm{z}_{\mathcal{B}}^m) \\[0.5em]

       =& \displaystyle \frac{ \mathbb{P}(\bm{Z}_{h:p}^m = \bm{z}_{h:p}^m \mid U_j^m = m) \mathbb{P}(U_{b_1}^m = m, \bm{Z}_{1:b_1}^m = \bm{z}_{1:b_1}^m) \sum_{u \in \{ m, f \}} \mathbb{P}(U_j^m = m \mid U_{j-1}^m = u) \tilde{\alpha}_{j-1}^m(u)}{\mathbb{P}(\bm{Z}_{(b_1+1):l} = \bm{z}_{(b_1+1):l}, \bm{Z}_{h:p} = \bm{z}_{h:p} \mid U_k^m = m) \mathbb{P}(U_{b_1}^m = m, \bm{Z}_{1:b_1} = \bm{z}_{1:b_1})} \\[1.5em]

       =& \displaystyle \frac{ \mathbb{P}(\bm{Z}_{h:p}^m = \bm{z}_{h:p}^m \mid U_j^m = m) \sum_{u \in \{ m, f \}} \mathbb{P}(U_j^m = m \mid U_{j-1}^m = u) \tilde{\alpha}_{j-1}^m(u)}{\mathbb{P}(\bm{Z}_{(b_1+1):l} = \bm{z}_{(b_1+1):l}, \bm{Z}_{h:p} = \bm{z}_{h:p} \mid U_{b_1}^m = m)}.
     \end{array}%
   \end{equation}%
   which does not depend on $\bm{Z}_{1:k}^m$.
 \end{proof}

 \newpage

 \section{Simulation} \label{sec:simulation-description}

 \subsection{Further description of the simulation setting}

 \begin{longtable}{lp{7cm}p{3cm}}
   \caption{Description of the simulation variables and
   parameters} \label{tab:simulation-description} \\
   \toprule
   Variable & Description of how the variable is constructed & Parameters \\
   \midrule
   $\bm{M}_i^m, \bm{M}_i^f, \bm{F}_i^m, \bm{F}_i^f$ &  The parental haplotypes are constructed to allow linkage disequilibrium in nearby SNPs. For each parental haplotype we first sample from a $p$-variate normal such that $X_{ij} \sim \mathcal{N}(0,1)$ and $Cov(X_{ij},X_{ik}) = \rho^{|j - k|}$, $0 < \rho < 1$, $j,k \in \mathcal{J}$. Thresholds $V_{ij} \sim Unif(a,b)$ are sampled and the haplotypes are defined as $M_{ij}^m = I\{X_{ij} > V_{ij}\}$ where $I\{\cdot\}$ is the indicator function (and similarly for the other haplotypes). & \[\begin{array}{l} \rho=0.75\\ a = \Phi^{-1}(0.6) \\ b = \Phi^{-1}(0.95) \end{array}\]
   where $\Phi^{-1}(\cdot)$ is the inverse normal CDF. \\ \midrule
   $C_i^m, C_i^f$ & We first define a variable \[ \hat{\mu}_i^m = \frac{1}{p} \sum_{j=1}^p (M_{ij}^m + M_{ij}^f). \] It follows from our construction of the parental haplotypes that \[ \mu^m = E[\hat{\mu}^m] = 2\biggl(1 - \frac{1}{b-a} \int_a^b \Phi(x) dx \biggr). \] where $\Phi(\cdot)$ is the normal CDF. For each individual $i$ we sample the parental confounder such that \[ C_i^m \sim \mathcal{N}(\hat{\mu}_i^m - \mu^m, 1).\] We follow an identical procedure for $C_i^f$. & N/A \\[0.3em] \midrule
   $C_i$ & We construct the offspring confounder as \[ C_i \sim \mathcal{N}(0,1). \] & N/A \\ \midrule
   $\bm{Z}^m_i, \bm{Z}^f_i$ & We sample the offspring haplotypes using Algorithm 1 in \citet{Bates2020a}. This algorithm unconditionally samples a full haplotype $\bm{Z}_i^m$ or  $\bm{Z}_i^f$ according to the hidden Markov model described in \Cref{sec:randomization-distribution}. It depends on the genetic distances $\bm{r}$ and de novo mutation rate $\epsilon$. We  sample $r_j \sim Unif(c,d)$ and set $r_k = \infty$ for $k =  37,62,86,112$ so that the instruments are unconditionally independent. From these haplotypes we choose a subset $\mathcal{J}_g \subset \mathcal{J}$ to be instruments. & \[\begin{array}{l} \epsilon=10^{-8} \\ c = 0 \\ d = 0.75 \\ \mathcal{J}_g = \{25,50,75,\\ 100, 125\} \end{array}\] \\ \midrule
   $D_i$ & The exposure follows a linear structural equation model \[ D_i = \gamma^{\intercal} \bm{Z}_i + \theta^m C_i^m + \theta^f C_i^f + \theta^c C_i + \nu_i \] where $\nu_i \sim \mathcal{N}(0,0.7)$. We choose $\gamma$ so that it is zero everywhere except for $\gamma_{24}$, $\gamma_{49}$, $\gamma_{74}$, $\gamma_{99}$ and $\gamma_{124}$ which represent causal variants. The parameters are chosen so that $Var(D_i) = 1$. & \[\begin{array}{l}  \theta^m = \theta^f = \sqrt{0.3} \\ \theta^c = \sqrt{0.75} \\ \gamma_j = \sqrt{0.1} \end{array}\] for $j = 24$, $49$, $74$, $99$, $124$. \\
   \midrule
   $Y_i$ & The outcome follows a linear structural equation model \[ Y_i = \beta D_i + \delta^{\intercal} \bm{Z}_i + \phi^m C_i^m + \phi^f C_i^f + \phi^c C_i + \upsilon_i \] where $\upsilon_i \sim \mathcal{N}(0,0.7)$. We choose $\delta$ so that it is zero everywhere except for $\delta_{23}$, $\delta_{27}$, $\delta_{48}$, $\delta_{52}$, $\delta_{73}$, $\delta_{77}$, $\delta_{98}$, $\delta_{102}$, $\delta_{123}$ and $\delta_{127}$ which represent pleiotropic variants. The parameters are chosen so that $Var(Y_i) = 1$. &  \[\begin{array}{l} \beta = 0 \\ \phi^m = \phi^f = \sqrt{0.3} \\ \phi^c = \sqrt{0.75} \\ \delta_j = \sqrt{0.05} \end{array}\] for $j = 23$, $27$, $48$, $52$, $73$, $77$, $98$, $102$, $123$, $127$. \\
   \hline
 \end{longtable}

 \Cref{thm:sufficient-adjustment-set} implies that a sufficient adjustment set for this simulation is
 \begin{equation} \label{eqn:sim-adjustment-set}
   \bm (\bm{M}^{mf}_{\mathcal{B}_g}, \bm{F}^{mf}_{\mathcal{B}_g}, \bm{Z}_{\mathcal{B}})
 \end{equation}
 where
 \[
   \mathcal{B} = \mathcal{J} \setminus \{24,25,26,49,50,51,99,74,75,76,99,100,101,124,125,126\}
 \]
 and
 \[
   \mathcal{B}_g = \mathcal{B} \cup \{25,50,75,100,125\}.
 \]

\subsection{An illustration of the simulation setup}
\label{sec:an-illustr-simul}

To make our setup
more tangible, \Cref{tab:counterfactual-simulation-data} shows the first 6
lines of observed and counterfactual data (in red) from the simulation for one of the instruments
and corresponding parental haplotypes. We can see that individual $4$
will provide almost no information for a test of the null hypothesis;
both of her parents are homozygous so there is no randomization in her
genotype outside of de novo mutations. Conversely, both of individual
$1$'s parents are heterozygous so she could receive both major
alleles, both minor alleles or one of each.

Suppose we wish to test the null hypothesis $H_0: \beta =
-0.3$. Column $\tilde{Z}_i$ in
\Cref{tab:counterfactual-simulation-data} shows a counterfactual draw
of each individual's instrument conditional on the adjustment set
given in \Cref{eqn:sim-adjustment-set} in
\Cref{sec:simulation-description}, along with the adjusted outcome
$Q_i(-0.3)$. Note that $\tilde{Z}_i$ is independent of $Q_i(-0.3)$ by
construction, so the null hypothesis is necessarily satisfied for this
counterfactual. As expected individual $4$ has the same genotype in
this counterfactual, however, individual $1$ inherits both minor
alleles in this case. \Cref{fig:simulation-null-distribution} plots a
distribution of 10,000 counterfactual test statistics drawn under the
null hypothesis. The test statistic is the F-statistic from a
regression of the adjusted outcome on the instruments. The bars
highlighted in red are larger than the observed test statistic, such
that the almost exact p-value is around 0.13.

\begin{table*}[ht]
   \caption{First 6 rows of observed data from the simulation}
   \begin{center}
     \begin{tabular}{cccccccccc}
       \toprule
       $i$ & $Z_i$ & \textbf{$\tilde{Z}_i$} & $M_i^m$ & $M_i^f$ & $F_i^m$ & $F_i^f$ & $D_i$ & $Y_i$ & $Q_i(-0.3)$ \\
       \midrule
       1 & 1 & \textbf{2} & 1 & 0 & 1 & 0 & 1.11 & 0.73 & 1.06 \\
       2 & 0 & \textbf{1} & 1 & 0 & 0 & 0 & 0.83 & -0.52 & 0.77 \\
       3 & 1 & \textbf{1} & 1 & 0 & 0 & 0 & 0.94 & 0.31 & 0.59 \\
       4 & 0 & \textbf{0} & 0 & 0 & 0 & 0 & 1.43 & 3.30 & 3.73 \\
       5 & 0 & \textbf{0} & 0 & 0 & 0 & 0 & 0.15 & 1.34 & 1.38 \\
       6 & 0 & \textbf{0} & 0 & 0 & 0 & 0 & -0.14 & 1.60 & 1.56 \\
       \bottomrule
     \end{tabular}
     \label{tab:counterfactual-simulation-data}
   \end{center}
 \end{table*}

 \begin{figure}[th]
   \begin{center}
     \includegraphics[width = 0.6\textwidth]{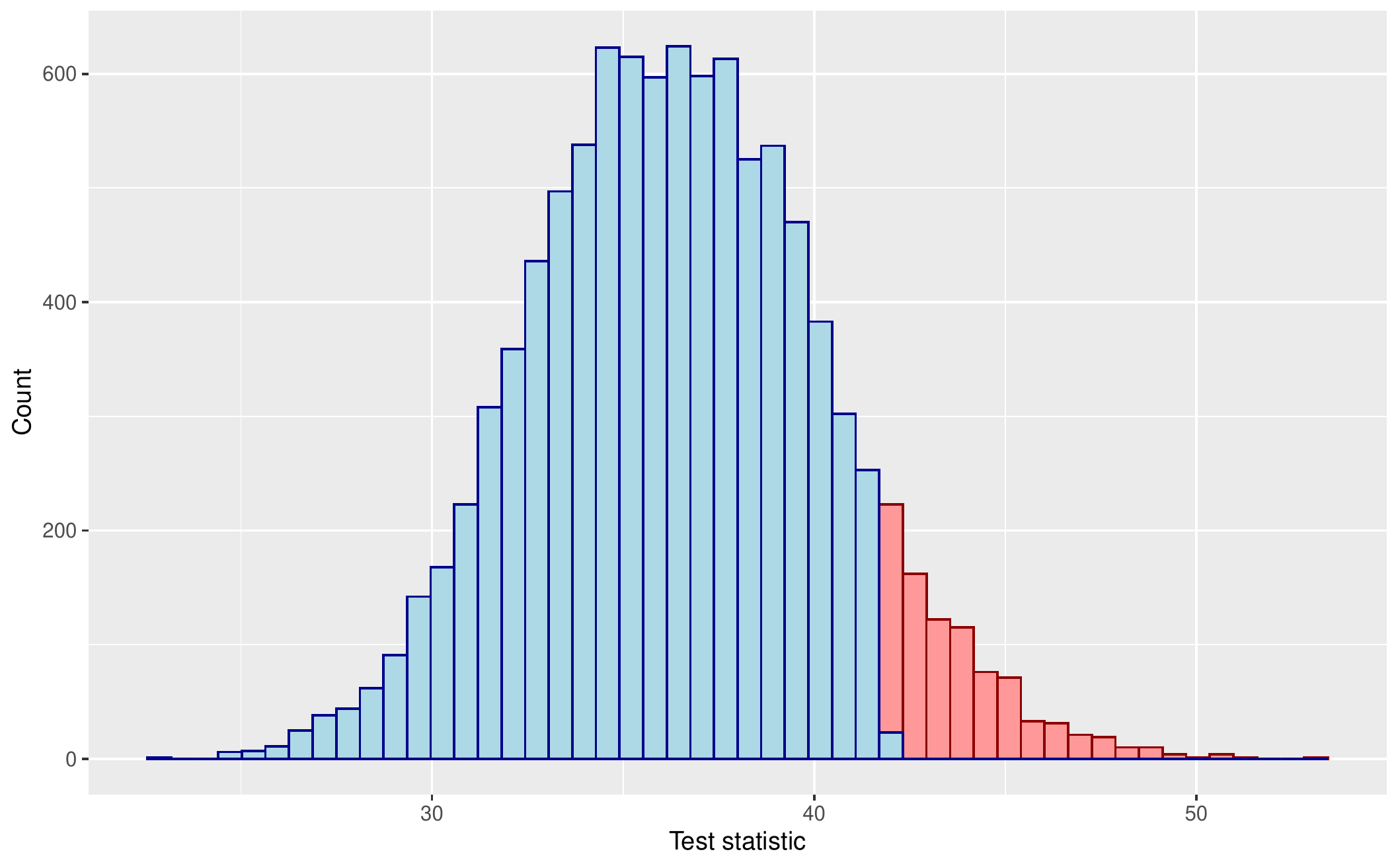}
     \caption{Histogram of 10,000 test statistics under the null
       hypothesis $H_0: \beta = -0.3$}
     \label{fig:simulation-null-distribution}
   \end{center}
 \end{figure}


\end{document}